\DeclareMathAlphabet{\mathbbold}{U}{mathbbol}{m}{n}
\newcounter{theorem}
\newtheorem{theorem}{Theorem}
\newtheorem{proposition}{Proposition}
\newtheorem{definition}[theorem]{Definition}
\def\labelitemi{$\blacklozenge$}
\chardef\@x10\chardef\@xv60
\def\tcitime{
\def\@time{%
  \@minute\time\@hour\@minute\divide\@hour\@xv
  \ifnum\@hour<\@x 0\fi\the\@hour:%
  \multiply\@hour\@xv\advance\@minute-\@hour
  \ifnum\@minute<\@x 0\fi\the\@minute
  }}%
\def\QCTOpt[#1]#2{%
  \def\QCTOptB{#1}
  \def\QCTOptA{#2}
}
\def\QCTNOpt#1{%
  \def\QCTOptA{#1}
  \let\QCTOptB\empty
}
\def\Qct{%
  \@ifnextchar[{%
    \QCTOpt}{\QCTNOpt}
}
\def\QCBOpt[#1]#2{%
  \def\QCBOptB{#1}
  \def\QCBOptA{#2}
}
\def\QCBNOpt#1{%
  \def\QCBOptA{#1}
  \let\QCBOptB\empty
}
\def\Qcb{%
  \@ifnextchar[{%
    \QCBOpt}{\QCBNOpt}
}
\def\PrepCapArgs{%
  \ifx\QCBOptA\empty
    \ifx\QCTOptA\empty
      {}%
    \else
      \ifx\QCTOptB\empty
        {\QCTOptA}%
      \else
        [\QCTOptB]{\QCTOptA}%
      \fi
    \fi
  \else
    \ifx\QCBOptA\empty
      {}%
    \else
      \ifx\QCBOptB\empty
        {\QCBOptA}%
      \else
        [\QCBOptB]{\QCBOptA}%
      \fi
    \fi
  \fi
}
\def\GRAPHICSPS#1{%
 \ifcase\GRAPHICSTYPE
   \special{ps: #1}%
 \or
   \special{language "PS", include "#1"}%
 \fi
}%
\def\graffile#1#2#3#4{%
    \bgroup
    \leavevmode
    \@ifundefined{bbl@deactivate}{\def~{\string~}}{\activesoff}
    \raise -#4 \BOXTHEFRAME{%
        \hbox to #2{\raise #3\hbox to #2{\null #1\hfil}}}%
    \egroup
}%
\def\draftbox#1#2#3#4{%
 \leavevmode\raise -#4 \hbox{%
  \frame{\rlap{\protect\tiny #1}\hbox to #2%
   {\vrule height#3 width\z@ depth\z@\hfil}%
  }%
 }%
}%
\newif\ifwasdraft
\def\GRAPHIC#1#2#3#4#5{%
 \ifnum\draft=\@ne\draftbox{#2}{#3}{#4}{#5}%
  \else\graffile{#1}{#3}{#4}{#5}%
  \fi
 }%
\def\addtoLaTeXparams#1{%
    \edef\LaTeXparams{\LaTeXparams #1}}%
\newif\ifBoxFrame \BoxFramefalse
\newif\ifOverFrame \OverFramefalse
\newif\ifUnderFrame \UnderFramefalse
\def\BOXTHEFRAME#1{%
   \hbox{%
      \ifBoxFrame
         \frame{#1}%
      \else
         {#1}%
      \fi
   }%
}
\def\doFRAMEparams#1{\BoxFramefalse\OverFramefalse\UnderFramefalse\readFRAMEparams#1\end}%
\def\readFRAMEparams#1{%
 \ifx#1\end%
  \let\next=\relax
  \else
  \ifx#1i\dispkind=\z@\fi
  \ifx#1d\dispkind=\@ne\fi
  \ifx#1f\dispkind=\tw@\fi
  \ifx#1t\addtoLaTeXparams{t}\fi
  \ifx#1b\addtoLaTeXparams{b}\fi
  \ifx#1p\addtoLaTeXparams{p}\fi
  \ifx#1h\addtoLaTeXparams{h}\fi
  \ifx#1X\BoxFrametrue\fi
  \ifx#1O\OverFrametrue\fi
  \ifx#1U\UnderFrametrue\fi
  \ifx#1w
    \ifnum\draft=1\wasdrafttrue\else\wasdraftfalse\fi
    \draft=\@ne
  \fi
  \let\next=\readFRAMEparams
  \fi
 \next
 }%
\def\IFRAME#1#2#3#4#5#6{%
      \bgroup
      \let\QCTOptA\empty
      \let\QCTOptB\empty
      \let\QCBOptA\empty
      \let\QCBOptB\empty
      #6%
      \parindent=0pt%
      \leftskip=0pt
      \rightskip=0pt
      \setbox0 = \hbox{\QCBOptA}%
      \@tempdima = #1\relax
      \ifOverFrame
          \typeout{This is not implemented yet}%
          \show\HELP
      \else
         \ifdim\wd0>\@tempdima
            \advance\@tempdima by \@tempdima
            \ifdim\wd0 >\@tempdima
               \textwidth=\@tempdima
               \setbox1 =\vbox{%
                  \noindent\hbox to \@tempdima{\hfill\GRAPHIC{#5}{#4}{#1}{#2}{#3}\hfill}\\%
                  \noindent\hbox to \@tempdima{\parbox[b]{\@tempdima}{\QCBOptA}}%
               }%
               \wd1=\@tempdima
            \else
               \textwidth=\wd0
               \setbox1 =\vbox{%
                 \noindent\hbox to \wd0{\hfill\GRAPHIC{#5}{#4}{#1}{#2}{#3}\hfill}\\%
                 \noindent\hbox{\QCBOptA}%
               }%
               \wd1=\wd0
            \fi
         \else
            \ifdim\wd0>0pt
              \hsize=\@tempdima
              \setbox1 =\vbox{%
                \unskip\GRAPHIC{#5}{#4}{#1}{#2}{0pt}%
                \break
                \unskip\hbox to \@tempdima{\hfill \QCBOptA\hfill}%
              }%
              \wd1=\@tempdima
           \else
              \hsize=\@tempdima
              \setbox1 =\vbox{%
                \unskip\GRAPHIC{#5}{#4}{#1}{#2}{0pt}%
              }%
              \wd1=\@tempdima
           \fi
         \fi
         \@tempdimb=\ht1
         \advance\@tempdimb by \dp1
         \advance\@tempdimb by -#2%
         \advance\@tempdimb by #3%
         \leavevmode
         \raise -\@tempdimb \hbox{\box1}%
      \fi
      \egroup%
}%
\def\DFRAME#1#2#3#4#5{%
 \begin{center}
     \let\QCTOptA\empty
     \let\QCTOptB\empty
     \let\QCBOptA\empty
     \let\QCBOptB\empty
     \ifOverFrame 
        #5\QCTOptA\par
     \fi
     \GRAPHIC{#4}{#3}{#1}{#2}{\z@}
     \ifUnderFrame 
        \nobreak\par\nobreak#5\QCBOptA
     \fi
 \end{center}%
 }%
\def\FFRAME#1#2#3#4#5#6#7{%
 \begin{figure}[#1]%
  \let\QCTOptA\empty
  \let\QCTOptB\empty
  \let\QCBOptA\empty
  \let\QCBOptB\empty
  \ifOverFrame
    #4
    \ifx\QCTOptA\empty
    \else
      \ifx\QCTOptB\empty
        \caption{\QCTOptA}%
      \else
        \caption[\QCTOptB]{\QCTOptA}%
      \fi
    \fi
    \ifUnderFrame\else
      \label{#5}%
    \fi
  \else
    \UnderFrametrue%
  \fi
  \begin{center}\GRAPHIC{#7}{#6}{#2}{#3}{\z@}\end{center}%
  \ifUnderFrame
    #4
    \ifx\QCBOptA\empty
      \caption{}%
    \else
      \ifx\QCBOptB\empty
        \caption{\QCBOptA}%
      \else
        \caption[\QCBOptB]{\QCBOptA}%
      \fi
    \fi
    \label{#5}%
  \fi
  \end{figure}%
 }%
\def\makeactives{
  \catcode`\"=\active
  \catcode`\;=\active
  \catcode`\:=\active
  \catcode`\'=\active
  \catcode`\~=\active
}
   \gdef\activesoff{%
      \def"{\string"}
      \def;{\string;}
      \def:{\string:}
      \def'{\string'}
      \def~{\string~}
    }
\def\FRAME#1#2#3#4#5#6#7#8{%
 \bgroup
 \ifnum\draft=\@ne
   \wasdrafttrue
 \else
   \wasdraftfalse%
 \fi
 \def\LaTeXparams{}%
 \dispkind=\z@
 \def\LaTeXparams{}%
 \doFRAMEparams{#1}%
 \ifnum\dispkind=\z@\IFRAME{#2}{#3}{#4}{#7}{#8}{#5}\else
  \ifnum\dispkind=\@ne\DFRAME{#2}{#3}{#7}{#8}{#5}\else
   \ifnum\dispkind=\tw@
    \edef\@tempa{\noexpand\FFRAME{\LaTeXparams}}%
    \@tempa{#2}{#3}{#5}{#6}{#7}{#8}%
    \fi
   \fi
  \fi
  \ifwasdraft\draft=1\else\draft=0\fi{}%
  \egroup
 }%
\def\TEXUX#1{"texux"}
\def\func#1{\mathop{\rm #1}\nolimits}%
\long\def\QQQ#1#2{%
     \long\expandafter\def\csname#1\endcsname{#2}}%
\long\def\QQA#1#2{}%
\def\QTR#1#2{{\csname#1\endcsname #2}}
\def\EXPAND#1[#2]#3{}%
\def\NOEXPAND#1[#2]#3{}%
\def\LaTeXparent#1{}%
\def\ChildStyles#1{}%
\def\ChildDefaults#1{}%
\def\QTagDef#1#2#3{}%
  \providecommand{\UNICODE}[2][]{}
\def\QQfnmark#1{\footnotemark}
 \def\abstract{%
  \if@twocolumn
   \section*{Abstract (Not appropriate in this style!)}%
   \else \small 
   \begin{center}{\bf Abstract\vspace{-.5em}\vspace{\z@}}\end{center}%
   \quotation 
   \fi
  }%
   \def\registered{\relax\ifmmode{}\r@gistered
                    \else$\m@th\r@gistered$\fi}%
 \def\r@gistered{^{\ooalign
  {\hfil\raise.07ex\hbox{$\scriptstyle\rm\text{R}$}\hfil\crcr
  \mathhexbox20D}}}}{}%
\newdimen\theight
\def\Column{%
 \vadjust{\setbox\z@=\hbox{\scriptsize\quad\quad tcol}%
  \theight=\ht\z@\advance\theight by \dp\z@\advance\theight by \lineskip
  \kern -\theight \vbox to \theight{%
   \rightline{\rlap{\box\z@}}%
   \vss
   }%
  }%
 }%
\def\qed{%
 \ifhmode\unskip\nobreak\fi\ifmmode\ifinner\else\hskip5\p@\fi\fi
 \hbox{\hskip5\p@\vrule width4\p@ height6\p@ depth1.5\p@\hskip\p@}%
 }%
\def\miss{\hbox{\vrule height2\p@ width 2\p@ depth\z@}}%
\def\tcol#1{{\baselineskip=6\p@ \vcenter{#1}} \Column}  %
\def\newfmtname{LaTeX2e}
  \DeclareOldFontCommand{\rm}{\normalfont\rmfamily}{\mathrm}
  \DeclareOldFontCommand{\sf}{\normalfont\sffamily}{\mathsf}
  \DeclareOldFontCommand{\tt}{\normalfont\ttfamily}{\mathtt}
  \DeclareOldFontCommand{\bf}{\normalfont\bfseries}{\mathbf}
  \DeclareOldFontCommand{\it}{\normalfont\itshape}{\mathit}
  \DeclareOldFontCommand{\sl}{\normalfont\slshape}{\@nomath\sl}
  \DeclareOldFontCommand{\sc}{\normalfont\scshape}{\@nomath\sc}
\def\alpha{{\Greekmath 010B}}%
\def\beta{{\Greekmath 010C}}%
\def\gamma{{\Greekmath 010D}}%
\def\delta{{\Greekmath 010E}}%
\def\epsilon{{\Greekmath 010F}}%
\def\zeta{{\Greekmath 0110}}%
\def\eta{{\Greekmath 0111}}%
\def\theta{{\Greekmath 0112}}%
\def\iota{{\Greekmath 0113}}%
\def\kappa{{\Greekmath 0114}}%
\def\lambda{{\Greekmath 0115}}%
\def\mu{{\Greekmath 0116}}%
\def\nu{{\Greekmath 0117}}%
\def\xi{{\Greekmath 0118}}%
\def\pi{{\Greekmath 0119}}%
\def\rho{{\Greekmath 011A}}%
\def\sigma{{\Greekmath 011B}}%
\def\tau{{\Greekmath 011C}}%
\def\upsilon{{\Greekmath 011D}}%
\def\phi{{\Greekmath 011E}}%
\def\chi{{\Greekmath 011F}}%
\def\psi{{\Greekmath 0120}}%
\def\omega{{\Greekmath 0121}}%
\def\varepsilon{{\Greekmath 0122}}%
\def\vartheta{{\Greekmath 0123}}%
\def\varpi{{\Greekmath 0124}}%
\def\varrho{{\Greekmath 0125}}%
\def\varsigma{{\Greekmath 0126}}%
\def\varphi{{\Greekmath 0127}}%
\def\nabla{{\Greekmath 0272}}
\def\FindBoldGroup{%
   {\setbox0=\hbox{$\mathbf{x\global\edef\theboldgroup{\the\mathgroup}}$}}%
}
\def\Greekmath#1#2#3#4{%
    \if@compatibility
        \ifnum\mathgroup=\symbold
           \mathchoice{\mbox{\boldmath$\displaystyle\mathchar"#1#2#3#4$}}%
                      {\mbox{\boldmath$\textstyle\mathchar"#1#2#3#4$}}%
                      {\mbox{\boldmath$\scriptstyle\mathchar"#1#2#3#4$}}%
                      {\mbox{\boldmath$\scriptscriptstyle\mathchar"#1#2#3#4$}}%
        \else
           \mathchar"#1#2#3#4%
        \fi 
    \else 
        \FindBoldGroup
        \ifnum\mathgroup=\theboldgroup 
           \mathchoice{\mbox{\boldmath$\displaystyle\mathchar"#1#2#3#4$}}%
                      {\mbox{\boldmath$\textstyle\mathchar"#1#2#3#4$}}%
                      {\mbox{\boldmath$\scriptstyle\mathchar"#1#2#3#4$}}%
                      {\mbox{\boldmath$\scriptscriptstyle\mathchar"#1#2#3#4$}}%
        \else
           \mathchar"#1#2#3#4%
        \fi     	    
	  \fi}
\newif\ifGreekBold  \GreekBoldfalse
\let\SAVEPBF=\pbf
\def\pbf{\GreekBoldtrue\SAVEPBF}%
  \newcounter{equationnumber}  
  \def\mathletters{%
     \addtocounter{equation}{1}
     \edef\@currentlabel{\theequation}%
     \setcounter{equationnumber}{\c@equation}
     \setcounter{equation}{0}%
     \edef\theequation{\@currentlabel\noexpand\alph{equation}}%
  }
    \def\BibTeX{{\rm B\kern-.05em{\sc i\kern-.025em b}\kern-.08em
                 T\kern-.1667em\lower.7ex\hbox{E}\kern-.125emX}}}{}%
\def\AmS{{\protect\usefont{OMS}{cmsy}{m}{n}%
                A\kern-.1667em\lower.5ex\hbox{M}\kern-.125emS}}}{}%
\def\@@eqncr{\let\@tempa\relax
    \ifcase\@eqcnt \def\@tempa{& & &}\or \def\@tempa{& &}%
      \else \def\@tempa{&}\fi
     \@tempa
     \if@eqnsw
        \iftag@
           \@taggnum
        \else
           \@eqnnum\stepcounter{equation}%
        \fi
     \fi
     \global\tag@false
     \global\@eqnswtrue
     \global\@eqcnt\z@\cr}
\def\TCItag{\@ifnextchar*{\@TCItagstar}{\@TCItag}}
\def\@TCItag#1{%
    \global\tag@true
    \global\def\@taggnum{(#1)}}
\def\@TCItagstar*#1{%
    \global\tag@true
    \global\def\@taggnum{#1}}
\let\DOTSI\relax
\def\RIfM@{\relax\ifmmode}%
\def\FN@{\futurelet\next}%
\def\iint{\DOTSI\intno@\tw@\FN@\ints@}%
\def\iiint{\DOTSI\intno@\thr@@\FN@\ints@}%
\def\iiiint{\DOTSI\intno@4 \FN@\ints@}%
\def\idotsint{\DOTSI\intno@\z@\FN@\ints@}%
\def\ints@{\findlimits@\ints@@}%
\newif\iflimtoken@
\newif\iflimits@
\def\findlimits@{\limtoken@true\ifx\next\limits\limits@true
 \else\ifx\next\nolimits\limits@false\else
 \limtoken@false\ifx\ilimits@\nolimits\limits@false\else
 \ifinner\limits@false\else\limits@true\fi\fi\fi\fi}%
\def\multint@{\int\ifnum\intno@=\z@\intdots@                          
 \else\intkern@\fi                                                    
 \ifnum\intno@>\tw@\int\intkern@\fi                                   
 \ifnum\intno@>\thr@@\int\intkern@\fi                                 
 \int}
\def\multintlimits@{\intop\ifnum\intno@=\z@\intdots@\else\intkern@\fi
 \ifnum\intno@>\tw@\intop\intkern@\fi
 \ifnum\intno@>\thr@@\intop\intkern@\fi\intop}%
\def\intic@{%
    \mathchoice{\hskip.5em}{\hskip.4em}{\hskip.4em}{\hskip.4em}}%
\def\negintic@{\mathchoice
 {\hskip-.5em}{\hskip-.4em}{\hskip-.4em}{\hskip-.4em}}%
\def\ints@@{\iflimtoken@                                              
 \def\ints@@@{\iflimits@\negintic@
   \mathop{\intic@\multintlimits@}\limits                             
  \else\multint@\nolimits\fi                                          
  \eat@}
 \else                                                                
 \def\ints@@@{\iflimits@\negintic@
  \mathop{\intic@\multintlimits@}\limits\else
  \multint@\nolimits\fi}\fi\ints@@@}%
\def\intkern@{\mathchoice{\!\!\!}{\!\!}{\!\!}{\!\!}}%
\def\plaincdots@{\mathinner{\cdotp\cdotp\cdotp}}%
\def\intdots@{\mathchoice{\plaincdots@}%
 {{\cdotp}\mkern1.5mu{\cdotp}\mkern1.5mu{\cdotp}}%
 {{\cdotp}\mkern1mu{\cdotp}\mkern1mu{\cdotp}}%
 {{\cdotp}\mkern1mu{\cdotp}\mkern1mu{\cdotp}}}%
\def\RIfM@{\relax\protect\ifmmode}
\def\text{\RIfM@\expandafter\text@\else\expandafter\mbox\fi}
\let\nfss@text\text
\def\text@#1{\mathchoice
   {\textdef@\displaystyle\f@size{#1}}%
   {\textdef@\textstyle\tf@size{\firstchoice@false #1}}%
   {\textdef@\textstyle\sf@size{\firstchoice@false #1}}%
   {\textdef@\textstyle \ssf@size{\firstchoice@false #1}}%
   \glb@settings}
\def\textdef@#1#2#3{\hbox{{%
                    \everymath{#1}%
                    \let\f@size#2\selectfont
                    #3}}}
\newif\iffirstchoice@
\def\Let@{\relax\iffalse{\fi\let\\=\cr\iffalse}\fi}%
\def\vspace@{\def\vspace##1{\crcr\noalign{\vskip##1\relax}}}%
\def\multilimits@{\bgroup\vspace@\Let@
 \baselineskip\fontdimen10 \scriptfont\tw@
 \advance\baselineskip\fontdimen12 \scriptfont\tw@
 \lineskip\thr@@\fontdimen8 \scriptfont\thr@@
 \lineskiplimit\lineskip
 \vbox\bgroup\ialign\bgroup\hfil$\m@th\scriptstyle{##}$\hfil\crcr}%
\def\Sb{_\multilimits@}%
\def\endSb{\crcr\egroup\egroup\egroup}%
\def\Sp{^\multilimits@}%
\newdimen\ex@
\def\rightarrowfill@#1{$#1\m@th\mathord-\mkern-6mu\cleaders
 \hbox{$#1\mkern-2mu\mathord-\mkern-2mu$}\hfill
 \mkern-6mu\mathord\rightarrow$}%
\def\leftarrowfill@#1{$#1\m@th\mathord\leftarrow\mkern-6mu\cleaders
 \hbox{$#1\mkern-2mu\mathord-\mkern-2mu$}\hfill\mkern-6mu\mathord-$}%
\def\leftrightarrowfill@#1{$#1\m@th\mathord\leftarrow
\mkern-6mu\cleaders
 \hbox{$#1\mkern-2mu\mathord-\mkern-2mu$}\hfill
 \mkern-6mu\mathord\rightarrow$}%
\def\overrightarrow{\mathpalette\overrightarrow@}%
\def\overrightarrow@#1#2{\vbox{\ialign{##\crcr\rightarrowfill@#1\crcr
 \noalign{\kern-\ex@\nointerlineskip}$\m@th\hfil#1#2\hfil$\crcr}}}%
\def\overleftarrow{\mathpalette\overleftarrow@}%
\def\overleftarrow@#1#2{\vbox{\ialign{##\crcr\leftarrowfill@#1\crcr
 \noalign{\kern-\ex@\nointerlineskip}$\m@th\hfil#1#2\hfil$\crcr}}}%
\def\overleftrightarrow{\mathpalette\overleftrightarrow@}%
\def\overleftrightarrow@#1#2{\vbox{\ialign{##\crcr
   \leftrightarrowfill@#1\crcr
 \noalign{\kern-\ex@\nointerlineskip}$\m@th\hfil#1#2\hfil$\crcr}}}%
\def\underrightarrow{\mathpalette\underrightarrow@}%
\def\underrightarrow@#1#2{\vtop{\ialign{##\crcr$\m@th\hfil#1#2\hfil
  $\crcr\noalign{\nointerlineskip}\rightarrowfill@#1\crcr}}}%
\def\underleftarrow{\mathpalette\underleftarrow@}%
\def\underleftarrow@#1#2{\vtop{\ialign{##\crcr$\m@th\hfil#1#2\hfil
  $\crcr\noalign{\nointerlineskip}\leftarrowfill@#1\crcr}}}%
\def\underleftrightarrow{\mathpalette\underleftrightarrow@}%
\def\underleftrightarrow@#1#2{\vtop{\ialign{##\crcr$\m@th
  \hfil#1#2\hfil$\crcr
 \noalign{\nointerlineskip}\leftrightarrowfill@#1\crcr}}}%
\def\qopnamewl@#1{\mathop{\operator@font#1}\nlimits@}
\let\nlimits@\displaylimits
\def\setboxz@h{\setbox\z@\hbox}
\def\varlim@#1#2{\mathop{\vtop{\ialign{##\crcr
 \hfil$#1\m@th\operator@font lim$\hfil\crcr
 \noalign{\nointerlineskip}#2#1\crcr
 \noalign{\nointerlineskip\kern-\ex@}\crcr}}}}
 \def\rightarrowfill@#1{\m@th\setboxz@h{$#1-$}\ht\z@\z@
  $#1\copy\z@\mkern-6mu\cleaders
  \hbox{$#1\mkern-2mu\box\z@\mkern-2mu$}\hfill
  \mkern-6mu\mathord\rightarrow$}
\def\leftarrowfill@#1{\m@th\setboxz@h{$#1-$}\ht\z@\z@
  $#1\mathord\leftarrow\mkern-6mu\cleaders
  \hbox{$#1\mkern-2mu\copy\z@\mkern-2mu$}\hfill
  \mkern-6mu\box\z@$}
\def\projlim{\qopnamewl@{proj\,lim}}
\def\injlim{\qopnamewl@{inj\,lim}}
\def\varinjlim{\mathpalette\varlim@\rightarrowfill@}
\def\varprojlim{\mathpalette\varlim@\leftarrowfill@}
\def\varliminf{\mathpalette\varliminf@{}}
\def\varliminf@#1{\mathop{\underline{\vrule\@depth.2\ex@\@width\z@
   \hbox{$#1\m@th\operator@font lim$}}}}
\def\varlimsup{\mathpalette\varlimsup@{}}
\def\varlimsup@#1{\mathop{\overline
  {\hbox{$#1\m@th\operator@font lim$}}}}
\def\align{\@verbatim \frenchspacing\@vobeyspaces \@alignverbatim
You are using the "align" environment in a style in which it is not defined.}
\let\csname endalign*\endcsname =\endtrivlist
\def\alignat{\@verbatim \frenchspacing\@vobeyspaces \@alignatverbatim
You are using the "alignat" environment in a style in which it is not defined.}
\let\csname endalignat*\endcsname =\endtrivlist
\def\xalignat{\@verbatim \frenchspacing\@vobeyspaces \@xalignatverbatim
You are using the "xalignat" environment in a style in which it is not defined.}
\let\csname endxalignat*\endcsname =\endtrivlist
\def\gather{\@verbatim \frenchspacing\@vobeyspaces \@gatherverbatim
You are using the "gather" environment in a style in which it is not defined.}
\let\csname endgather*\endcsname =\endtrivlist
\def\multiline{\@verbatim \frenchspacing\@vobeyspaces \@multilineverbatim
You are using the "multiline" environment in a style in which it is not defined.}
\let\csname endmultiline*\endcsname =\endtrivlist
\def\arrax{\@verbatim \frenchspacing\@vobeyspaces \@arraxverbatim
You are using a type of "array" construct that is only allowed in AmS-LaTeX.}
\def\tabulax{\@verbatim \frenchspacing\@vobeyspaces \@tabulaxverbatim
You are using a type of "tabular" construct that is only allowed in AmS-LaTeX.}
\let\csname endarrax*\endcsname =\endtrivlist
\let\csname endtabulax*\endcsname =\endtrivlist
 \def\endequation{%
     \ifmmode\ifinner 
      \iftag@
        \addtocounter{equation}{-1} 
        $\hfil
           \displaywidth\linewidth\@taggnum\egroup \endtrivlist
        \global\tag@false
        \global\@ignoretrue   
      \else
        $\hfil
           \displaywidth\linewidth\@eqnnum\egroup \endtrivlist
        \global\tag@false
        \global\@ignoretrue 
      \fi
     \else   
      \iftag@
        \addtocounter{equation}{-1} 
        \eqno \hbox{\@taggnum}
        \global\tag@false%
        $$\global\@ignoretrue
      \else
        \eqno \hbox{\@eqnnum}
        $$\global\@ignoretrue
      \fi
     \fi\fi
 } 
 \newif\iftag@ \tag@false
 \def\TCItag{\@ifnextchar*{\@TCItagstar}{\@TCItag}}
 \def\@TCItag#1{%
     \global\tag@true
     \global\def\@taggnum{(#1)}}
 \def\@TCItagstar*#1{%
     \global\tag@true
     \global\def\@taggnum{#1}}
     \def\tag{\@ifnextchar*{\@tagstar}{\@tag}}
     \def\@tag#1{%
         \global\tag@true
         \global\def\@taggnum{(#1)}}
     \def\@tagstar*#1{%
         \global\tag@true
         \global\def\@taggnum{#1}}
\address[1]{%
School of Mechanical and Mining Engineering, The University of Queensland, Brisbane QLD 4072, Australia; E-Mail:~klimenko@mech.uq.edu.au.\vspace{-12pt}}
\abstract{This work considers reasons for and implications of discarding the assumption of transitivity---the fundamental postulate in the
utility theory of von Neumann and Morgenstern, the adiabatic accessibility principle of Caratheodory and most other theories related to preferences or 
competition. 
The examples of intransitivity are drawn from 
different fields, such as law, biology and economics. 
This work is intended as a common platform that allows us to discuss intransitivity in the context of different 
disciplines. 
The basic concepts and terms that are needed for consistent treatment of intransitivity in various applications are presented 
and analysed in a unified manner. The analysis points out conditions that necessitate appearance of intransitivity, such as multiplicity of 
preference criteria and imperfect (\textit{i.e.}, approximate) discrimination of different cases. 
The present work observes that with increasing presence and  
strength of intransitivity, thermodynamics gradually fades away leaving space for more general kinetic considerations. Intransitivity 
in competitive systems is linked to complex phenomena that would be difficult or impossible to explain on the basis of transitive assumptions. 
Human preferences that seem irrational from the perspective of the conventional utility theory, become perfectly logical 
in the intransitive and relativistic framework suggested here. 
The example of competitive simulations for the risk/benefit dilemma 
demonstrates the significance of intransitivity in cyclic behaviour and abrupt changes in the system.
The evolutionary intransitivity parameter, 
which is introduced in the Appendix, 
is a general measure of intransitivity, which is particularly useful in evolving competitive systems. 
}
\begin{document}


\section{Introduction}

The strongest case for the existence of methodological similarity between
utility and entropy is represented by two fundamental results: (a) the
utility theory of von Neumann and Morgenstern \cite{VNM1953} and (b)
introduction of entropy through the adiabatic accessibility principle of
Caratheodory \cite{Cara1909}. The latter approach was rigorously formalised
by Lieb and Yngvason \cite{EntOrd2003}. The physical interpretation of this
mathematical theory is linked to the so-called weight process\ previously
suggested by Gyftopoulos and Beretta \cite{Beretta1991}. Both of these
theories link ordering of states to a ranking quantity (utility $U$ or
entropy $S$) and are based on two fundamental principles:

\begin{enumerate}
\item[(1)] Transitivity

\item[(2)]  Linearity (in thermodynamics: extensivity) implying that 
\begin{eqnarray}
U&=&P_{_{\text{A}}}u_{_{\text{A}}}+P_{_{\text{B}}}u_{_{\text{B}}}  \label{iU}
\\
S&=&m_{_{\text{A}}}s_{_{\text{A}}}+m_{_{\text{B}}}s_{_{\text{B}}}  \label{iS}
\end{eqnarray}
\end{enumerate}

In Equation~(\ref{iU}), the overall lottery is a combination of two outcomes
A an B with utilities $u_{_{\text{A}}}$ and $u_{_{\text{B}}}$ and
probabilities $P_{_{\text{A}}}$and $P_{_{\text{B}}}$. In Equation~(\ref{iS}%
), the overall system is a combination of two subsystems A an B with
specific entropies $s_{_{\text{A}}}$ and $s_{_{\text{B}}}$ and masses $m_{_{%
\text{A}}}$and $m_{_{\text{B}}}$.

While the similarity between utility and entropy is obvious, this similarity
remains methodological: theories a and b are generally applied to different
objects taken from different fields of science. There are however some
exceptions, such as competitive systems \cite{K-PS2012,K-PT2013,K_Ent2014a}.
These systems incorporate competition preferences and, at the same time,
permit thermodynamic considerations (here we refer to apparent
thermodynamics---using approaches developed in physical thermodynamics\
and statistical physics to characterise systems not related to heat and
engines).

Further investigations into human decision-making under risk have revealed
substantial disagreements with von Neumann--Morgenstern utility theory,
indicating that preferences\ depend non-linearly on probabilities. One of
the most prominent examples demonstrating non-linearity of human preferences
is known as the Allais paradox \cite{Allais1953}. A spectrum of
generalisations introducing utilities that are non-linear with respect to
probabilities has appeared \cite{pref1955, pref1982T, pref1982, pref1983},
most notably the cumulative prospect theory~\cite{pref1992}. In
thermodynamics, generalisations of conventional entropy have brought new
formulations for non-extensive entropies \cite{Abe1}, most notably Tsallis
entropy \cite{Tsallis1} and its modifications \cite{Thurner1}. Physically,
the definitions of non-extensive entropies correspond to the existence of
substantial stochastic correlations between subsystems. All of these
theories do not violate or question the first fundamental principle listed
above---transitivity.

In this work, we are interested in the phenomenon of intransitivity, \textit{i.e.},
violations of \mbox{transitivity.} A good example of intransitivity has been known
for a long time under the name of the Condorcet \mbox{paradox~\cite{Cond1785}.} The
existence of intransitivity in human preferences has been repeatedly
\mbox{suggested~\cite{Tversky1969,Rubinstein1988,Temkin1996}} and has its advocates
and critics. The main argument against intransitivity is its perceived
irrationality~\cite{Intrans1964}, which was disputed by Anand \cite%
{Intrans1993} from a philosophical perspective. Critics of intransitivity
often argue that \textquotedblleft abolishing\textquotedblright\
transitivity is wrong as we need utility and entropy, while these quantities
are linked to transitivity. The question, however, is not merely in the
replacement of one assumption by its negation: while transitivity is a
reasonable assumption in many good theories, its limitations are a barrier
for explaining the complexity of the surrounding world. Both transitive and
intransitive effects are common and need to be investigated irrespective of
what we tend to call \textquotedblleft rational\textquotedblright\ or
\textquotedblleft irrational\textquotedblright . As discussed in this work,
intransitivity appears under a number of common conditions and, therefore,
must be ubiquitously present in the real world. We have all indications that
intransitivity is a major factor stimulating emergence of complexity in the
competitive world surrounding us \cite{K-PS2012,K-PT2013}. It is interesting
to note that the presence of intransitivity is acknowledged in some
disciplines \linebreak (e.g., population biology) but is largely overlooked in others
(e.g., economics). This work is intended as a common platform for dealing
with intransitivity across different disciplines.

Sections \ref{S1}, \ref{S4} and \ref{S6} and Appendices present a general
analysis. Sections \ref{S1_2}--\ref{S3} and \ref{S5} present
examples from game theory, law, ecology and behavioral economics. Section %
\ref{S7} presents competitive simulations of the risk/benefit dilemma.
Section \ref{S8} discusses thermodynamic aspects of intransitivity.
Concluding remarks are in Section \ref{S9}.

\section{Preference, Ranking and Co-Ranking\label{S1}}

This section introduces main definitions that are used in the rest of the
paper. The basic notion used here is \textit{preference}, which is denoted
by the binary relation A$\prec$B, or equivalently by B$\succ$A, implying
that element B has some advantage over element A. In the context of a
competitive situation, A$\prec$B means that B is the winner in competition
with A. The notation A$\preceq$B indicates that either element B is
preferred over element A (\textit{i.e.}, A$\prec$B), or the elements are equivalent
(\textit{i.e.}, A$\sim$B, although A and B are not necessarily the same A$\neq$B). The
elements are assumed to be comparable to each other (\textit{i.e.}, possess relative
characteristics), while absolute characteristics of the elements may not
exist at all or remain unknown. If equivalence (indifference) A$\sim$B, is
possible only when A$=$B then this preference is called~\textit{strict}.

The preference is \textit{transitive} when 
\begin{equation}
\text{A}\preceq\text{B}\preceq\text{C}\Longrightarrow\text{A}\preceq\text{C}
\end{equation}
for any three elements A, B and C. Otherwise, existence of at least one
intransitive triplet 
\begin{equation}
\text{A}\preceq\text{B}\preceq\text{C}\prec\text{A}  \label{r_trip}
\end{equation}
indicates \textit{intransitivity} of the preference. Generally, we need to
distinguish \textit{current transitivity}---\textit{i.e.}, transitivity of
preference on the current set of elements---from the overall transitivity
of the preference rules (if such rules are specified): the latter requires
the former but not vice versa. Intransitive rules may or may not reveal
intransitivity on a specific set of elements. Intransitivity is called 
\textit{potential} if it can appear under considered conditions but may or
may not be revealed on the current set of elements.

\subsection{Co-Ranking}

The preference of B over A can be equivalently expressed by a \textit{%
co-ranking }function $\rho ($B,A$)$ so that 
\begin{equation}
\text{A}\preceq \text{B }\Longleftrightarrow \rho (\text{A,B})\leq 0
\end{equation}%

This implies the following functional form for co-ranking: 
\begin{equation}
\rho (\text{B,A})=\left\{ 
\begin{array}{c}
\rho (\text{B,A})>0\;\;\text{if\ B}\succ \text{A} \\ 
\rho (\text{B,A})=0\;\;\text{if \ B}\sim \text{A} \\ 
\rho (\text{B,A})<0\;\;\text{if \ B}\prec \text{A}%
\end{array}%
\right.   \label{r_rho}
\end{equation}%

By definition, co-ranking is antisymmetric: 
\begin{equation}
\rho (\text{B,A})=-\rho (\text{A,B})
\end{equation}%

Co-ranking is a relative characteristic that specifies properties of one
element with respect to the other, while absolute characteristics of the
elements may not exist or be unknown. Co-ranking can be \textit{graded},
when the value $\rho ($B,A$)$ is indicative of the strength of our
preference, or \textit{sharp} otherwise. Generally we presume graded
co-ranking. However, a graded co-ranking may or may not be specified (and
exist) for a given preference. If ranking is sharp, only the sign of $\rho ($%
B,A$)$ is of interest while the magnitude $\rho ($B,A$)$\ is an arbitrary
value. The\ following definition of the \textit{indicator co-ranking} 
\begin{equation}
R(\text{B,A})=\func{sign}\left( \rho (\text{B,A})\right) =\left\{ 
\begin{array}{c}
+1\;\;\text{if\ B}\succ \text{A} \\ 
0\;\;\;\;\text{if \ B}\sim \text{A} \\ 
-1\;\;\text{if \ B}\prec \text{A}%
\end{array}%
\right. 
\end{equation}%
corresponds to information available in sharp preferences. The indicator
co-ranking $R$ is a special case of co-ranking $\rho $. The function $R($B,A$%
)$ can be also called the indicator function of the preference.

\subsection{Absolute Ranking\label{S1b}}

If the preference is transitive, it can be expressed with the use of a
numerical function $r($...$)$ called \textit{absolute ranking} so that 
\begin{equation}
\text{A}\preceq \text{B }\Longleftrightarrow r(\text{A})\leq r(\text{B})
\end{equation}%
for any A and B. 
(We consider mainly discrete systems but, in the case of continuous
distributions, the existence of absolute rankings for transitive preferences
is subject to the conditions of the Debreu \mbox{theorem}~\cite{Debreu1964} (\textit{i.e.},
continuity of the preferences), which are presumed to be satisfied in this
work.) Ranking is called \textit{strict} when the corresponding preference
is strict\ (\textit{i.e.}, $r($A$)=r($B$)$ demands A$=$B).\ As with co-rankings, we
distinguish \textit{graded} and \textit{sharp }rankings. In the case of the
sharp ranking, the value $r($B$)-r($A$)>0$ tells us only that B is better
than A but, generally, does not give any indication of the magnitude of our
preference. From a mathematical perspective, a sharp ranking represents a%
\textit{\ total pre-ordering} of the ranked elements, while a sharp strict
ranking represents an \textit{ordering}.\ Any strictly monotonic function $%
f_{m}(...)$ of a sharp ranking is still an equivalent ranking $%
f_{m}(r(...))\sim r(...)$ with the same ordering. In the case of the graded
ranking, the value $r($B$)-r($A$)$ represents the magnitude of our
preference of B over A. In many cases, a graded \ ranking corresponds to a
physical quantity that can be directly determined or measured. In economics,
graded rankings are called \textit{utility}, in biology graded rankings are
called \textit{fitness}, in thermodynamics graded rankings correspond to 
\textit{entropy}. Here, we follow the notation of economics and, when
applicable, refer to graded rankings as utility. Practically, the line
between sharp and graded rankings is blurred. It is often the case that even
a nominally sharp ranking can give some indication of the magnitude of the
preference, for example, in terms of the density of elements. Different
rankings (or co-rankings) are called \textit{equivalent} if they correspond
to the same preference (but might still can have different preference
magnitudes).

If a co-ranking specifies a transitive preference, there exists an absolute
ranking for this preference (which is not unique). The absolute ranking
induces a co-ranking, which is linked to the absolute ranking by the
relation 
\begin{equation}
\rho(\text{B,A})=r(\text{B})-r(\text{A}),  \label{r_diff}
\end{equation}
and referred to as the \textit{absolutely} \textit{transitive co-ranking}.
By definition, the original co-ranking is equivalent to co-ranking Equation~(\ref%
{r_diff}) but does not necessarily coincide with it indicating different
magnitude of the preference. The co-rankings defining intransitive
preferences are called intransitive, while co-rankings that define
transitive preferences on a given set of elements but cannot be represented
by Equation~(\ref{r_diff}) are called potentially intransitive as intransitivity can
appear for these co-rankings under conditions specified in Appendix \ref{AA1}%
.

\subsection{Average Rankings}

We are interested in ranking element A with respect to a group of elements
(that may or may not include A), say, group $\mathbb{G}$ represented by a
set $\mathbb{S}$ of elements C$_{i}\in\mathbb{S}$ and the corresponding
weights $g_{i}=g($C$_{i})>0,$\ while $g($C$_{j})=0$ for C$_{j}\notin\mathbb{S%
}$. We can also write C$_{i}\in\mathbb{G}$ implying that C$_{i}\in\mathbb{S}$%
. The average co-ranking of element A and group $\mathbb{G}$, is defined by
the equation 
\begin{equation}
\rho_{_{\mathbb{G}}}(\text{A})=\bar{\rho}(\text{A,}\mathbb{G})=-\bar{\rho }(%
\mathbb{G}\text{,A})=\frac{1}{G}\sum_{\text{C}_{i}\in\mathbb{G}}g(\text{C}%
_{i})\rho(\text{A}\mathbf{,}\text{C}_{i})  \label{r_G}
\end{equation}
where $G$ is the total weight of the group 
\begin{equation}
G=\sum_{\text{C}_{i}\in\mathbb{G}}g(\text{C}_{i})
\end{equation}

Note that $\bar{\rho}($A,$\mathbb{\{}$C$\mathbb{\}})=\rho($A,C$)$. If all
weights are the same $g_{i}=1$ within the group, then $G$ is the number of
elements in the group and the meanings of $\mathbb{S}$ and $\mathbb{G}$ are
essentially the same while the terms ``set" and ``group" become interchangable
(\textit{i.e.}, specification $\mathbb{G}$ as a set implies unit weights for the
elements). The co-ranking $\rho($A$\mathbf{,}$C$_{i})$ and corresponding
preference are referred to as \textit{underlying} \mbox{co-ranking} and preference\
of the ranking $\rho_{_{\mathbb{G}}},$ while the preference 
\begin{equation}
\text{A}\preceq_{_{\mathbb{G}}}\text{B}\Longleftrightarrow\rho_{_{\mathbb{G}%
}}(\text{A})\leq\rho_{_{\mathbb{G}}}(\text{B})\text{ }
\end{equation}
is called the preference \textit{induced by} the ranking $\rho_{_{\mathbb{G}%
}}$. The ranking $\rho_{_{\mathbb{G}}}$ and the preference induced by $%
\rho_{_{\mathbb{G}}}$ are called \textit{conditional} indicating
conditioning of ranking on $\mathbb{G}$. The group $\mathbb{G}$ and weights $%
g_{i}$ are called the \textit{reference group} and \textit{reference weights.%
}

The co-ranking of two groups $\mathbb{G}^{\prime }$ and $\mathbb{G}^{\prime
\prime },$ which is called \textit{group co-ranking}, is defined in the same~way 
\begin{equation}
\bar{\rho}(\mathbb{G}^{\prime },\mathbb{G}^{\prime \prime })=\frac{1}{%
G^{\prime }G^{\prime \prime }}\sum_{\text{C}_{i}\in \mathbb{G}^{\prime
}}\sum_{\text{C}_{j}\in \mathbb{G}^{\prime \prime }}g^{\prime }(\text{C}%
_{i})g^{\prime \prime }(\text{C}_{j})\rho (\text{C}_{i}\mathbf{,}\text{C}%
_{j})  \label{r_G1G2}
\end{equation}%
where $g^{\prime }$ and $g^{\prime }$ are the weights and $G^{\prime }$ and $%
G^{\prime \prime }$ are the total weights associated with the groups $%
\mathbb{G}^{\prime }$ and $\mathbb{G}^{\prime \prime }$. In case of
continuous distributions, the sums are to be replaced by the corresponding
integrals.

If the underlying preference is specified by an absolutely transitive
co-ranking (as represented by Equation~(\ref{r_diff})), then we can write for the
average co-ranking 
\begin{equation}
\rho_{\mathbb{G}}(\text{A})=\bar{\rho}(\text{A,}\mathbb{G})=\frac{1}{G}\sum_{%
\text{C}_{i}\in\mathbb{G}}g(\text{C}_{i})\left( r(\text{A})-r(\text{C}%
_{i})\right) =r(\text{A})-\bar{r}(\mathbb{G}),\;\;  \label{r_Gt}
\end{equation}
and 
\begin{equation}
\bar{\rho}(\mathbb{G}^{\prime},\mathbb{G}^{\prime\prime})=\frac{1}{G^{\prime
}G_{2}^{\prime\prime}}\sum_{\text{C}_{i}\in\mathbb{G}^{\prime}}\sum _{\text{C%
}_{j}\in\mathbb{G}^{\prime\prime}}g^{\prime}(\text{C}_{i})g^{\prime\prime}(%
\text{B}_{j})\left( r(\text{C}_{i})-r(\text{C}_{j})\right) =\bar{r}(\mathbb{G%
}^{\prime})-\bar{r}(\mathbb{G}^{\prime\prime})
\end{equation}
where 
\begin{equation}
\bar{r}(\mathbb{G})=\frac{1}{G}\sum_{\text{C}_{i}\in\mathbb{G}}g(\text{C}%
_{i})r(\text{C}_{i})  \label{r_Gav}
\end{equation}
is the average absolute ranking of group $\mathbb{G}$.

\begin{proposition}
\label{P1}All conditional rankings indicate the same magnitude of
preference, \textit{i.e.}, 
\begin{equation}
\rho_{_{\mathbb{G}^{\prime}}}(\text{A})-\rho_{_{\mathbb{G}^{\prime}}}(\text{B%
})=\rho_{_{\mathbb{G}^{\prime\prime}}}(\text{A})-\rho_{_{\mathbb{G}%
^{\prime\prime}}}(\text{B})  \label{r_ABGG}
\end{equation}
for any A, B$\in\mathbb{G}$ and any $\mathbb{G}^{\prime}$,$\mathbb{G}%
^{\prime\prime}$ $\subset\mathbb{G}$, if and only if the underlying
co-rankings are absolutely transitive.
\end{proposition}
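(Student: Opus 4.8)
The plan is to prove the two implications separately, disposing of the sufficiency of absolute transitivity first since it follows at once from formulas already in the text. If the underlying co-ranking has the form $\rho(\text{A},\text{C})=r(\text{A})-r(\text{C})$, then Equation~(\ref{r_Gt}) gives $\rho_{_{\mathbb{G}'}}(\text{A})=r(\text{A})-\bar r(\mathbb{G}')$ for every reference group $\mathbb{G}'\subset\mathbb{G}$; subtracting the analogous expression for B cancels the $\bar r(\mathbb{G}')$ term and leaves $r(\text{A})-r(\text{B})$, which is independent of the reference group. Hence both sides of Equation~(\ref{r_ABGG}) equal $r(\text{A})-r(\text{B})$ and the identity holds for all $\text{A},\text{B}\in\mathbb{G}$ and all $\mathbb{G}',\mathbb{G}''\subset\mathbb{G}$.

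For the converse I would feed into Equation~(\ref{r_ABGG}) the simplest possible reference groups, namely singletons, which is legitimate because $\bar\rho(\text{A},\{\text{C}\})=\rho(\text{A},\text{C})$. Taking $\mathbb{G}'=\{\text{C}\}$ and $\mathbb{G}''=\{\text{D}\}$ yields $\rho(\text{A},\text{C})-\rho(\text{B},\text{C})=\rho(\text{A},\text{D})-\rho(\text{B},\text{D})$ for all $\text{A},\text{B},\text{C},\text{D}\in\mathbb{G}$; that is, $\rho(\text{A},\text{C})-\rho(\text{B},\text{C})$ does not depend on the reference element C. I would then fix an element $\text{O}\in\mathbb{G}$ and define $r$ by $r(\text{A})=\rho(\text{A},\text{O})$. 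Replacing the reference element O by B in the invariant difference gives $r(\text{A})-r(\text{B})=\rho(\text{A},\text{B})-\rho(\text{B},\text{B})$, and antisymmetry of co-ranking forces $\rho(\text{B},\text{B})=0$, so $\rho(\text{A},\text{B})=r(\text{A})-r(\text{B})$. It remains to confirm that $r$ is a bona fide absolute ranking and not merely some function obeying Equation~(\ref{r_diff}): this follows from the chain $\text{A}\preceq\text{B}\Longleftrightarrow\rho(\text{A},\text{B})\leq 0\Longleftrightarrow r(\text{A})\leq r(\text{B})$, with strictness of $r$ matching strictness of the preference exactly as in Section~\ref{S1b}.

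I do not anticipate a real obstacle; the entire content of the proposition is the translation invariance of $\rho(\,\cdot\,,\text{C})$ in its second argument, and the one insight needed is that singleton reference groups are the right test objects for extracting it. The only points deserving a line of care are that the hypothesis must be available for all weighted, many-element subgroups (handled for free, since singletons are a special case and the statement is quantified over all subgroups) and the identification of $r$ with an honest ranking rather than a mere numerical potential, as noted above.
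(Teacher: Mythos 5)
Your proof is correct and follows essentially the same route as the paper: the direct part via Equation~(\ref{r_Gt}), and the converse by specialising the reference groups to singletons, setting the second singleton to B, and defining the absolute ranking from a fixed reference element. Your explicit remarks on antisymmetry forcing $\rho(\text{B},\text{B})=0$ and on checking that $r$ reproduces the underlying preference are minor elaborations of steps the paper leaves implicit.
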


Equation~(\ref{r_Gt}) demonstrates validity of the direct part of the
proposition 
\begin{equation}
\rho_{_{\mathbb{G}^{\prime}}}(\text{A})-\rho_{_{\mathbb{G}^{\prime}}}(\text{B%
})=r(\text{A})-r(\text{B})\text{ }
\end{equation}

The inverse part can be shown by considering groups represented by
one-element sets $\mathbb{G}^{\prime}=\{$C$\}$ and $\mathbb{G}%
^{\prime\prime}=\{$D$\}$. Then Equation~(\ref{r_ABGG}) becomes%
\begin{equation}
\rho(\text{A,C})-\rho(\text{B,C})=\rho(\text{A,D})-\rho(\text{B,D})
\end{equation}
and in, particular, if D $=$ B then $\rho($A,B$)=\rho($A,C$)-\rho($B,C$)$ for
all A, B and C. Hence, we can define absolute ranking by the following
relation: 
\begin{equation}
r(\text{B})=\rho(\text{B,A})+r(\text{A})
\end{equation}
for arbitrary B and fixed A.

\begin{proposition}
\label{P1a}If the preferences induced by all conditional rankings are
equivalent, \textit{i.e.}, 
\begin{equation}
\text{A}\preceq_{_{\mathbb{G}^{\prime}}}\text{B}\Longleftrightarrow \text{A}%
\preceq_{_{\mathbb{G}^{\prime\prime}}}\text{B}  \label{r_GG1}
\end{equation}
for any A,B$\in\mathbb{G}$ and any $\mathbb{G}^{\prime}$,$\mathbb{G}%
^{\prime\prime}$ $\subset\mathbb{G}$, then the underlying preference is
currently transitive.
\end{proposition}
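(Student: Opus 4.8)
The plan is to reduce everything to \emph{one-element} reference groups and to exploit the reflexivity $\rho(\text{C},\text{C})=0$, which follows at once from antisymmetry of the co-ranking. First I would record two elementary identifications. For any $\text{C}\in\mathbb{G}$, since $\bar{\rho}(\text{X},\{\text{C}\})=\rho(\text{X},\text{C})$, the conditional ranking with reference group $\{\text{C}\}$ is simply $\rho_{\{\text{C}\}}(\text{X})=\rho(\text{X},\text{C})$, so that
\begin{equation}
\text{A}\preceq_{\{\text{C}\}}\text{B}\Longleftrightarrow\rho(\text{A},\text{C})\leq\rho(\text{B},\text{C}).
\end{equation}
In particular, taking $\text{C}=\text{B}$ and using $\rho(\text{B},\text{B})=-\rho(\text{B},\text{B})=0$ gives $\text{A}\preceq_{\{\text{B}\}}\text{B}\Longleftrightarrow\rho(\text{A},\text{B})\leq 0\Longleftrightarrow\text{A}\preceq\text{B}$, and likewise $\rho(\text{B},\text{C})\leq 0\Longleftrightarrow\text{B}\preceq\text{C}$.

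Next, given an arbitrary triple $\text{A},\text{B},\text{C}\in\mathbb{G}$ with $\text{A}\preceq\text{B}$ and $\text{B}\preceq\text{C}$, I would argue as follows. From $\text{A}\preceq\text{B}$ and the first identification we have $\text{A}\preceq_{\{\text{B}\}}\text{B}$. Now apply hypothesis~(\ref{r_GG1}) with the two (one-element, hence admissible) reference groups $\mathbb{G}^{\prime}=\{\text{B}\}$ and $\mathbb{G}^{\prime\prime}=\{\text{C}\}$: it yields $\text{A}\preceq_{\{\text{C}\}}\text{B}$, i.e. $\rho(\text{A},\text{C})\leq\rho(\text{B},\text{C})$. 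On the other hand $\text{B}\preceq\text{C}$ means $\rho(\text{B},\text{C})\leq 0$. Chaining the two inequalities gives $\rho(\text{A},\text{C})\leq 0$, that is $\text{A}\preceq\text{C}$. Since $\text{A},\text{B},\text{C}$ were arbitrary elements of $\mathbb{G}$, the underlying preference is transitive on $\mathbb{G}$, which is exactly current transitivity.

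The argument is short, so there is no deep obstacle; the one point that needs care is verifying that the single-element reference groups genuinely fall under the scope of~(\ref{r_GG1})---they do, because a set is identified with a group carrying unit weights and $\{\text{B}\},\{\text{C}\}\subset\mathbb{G}$ whenever $\text{B},\text{C}\in\mathbb{G}$---and keeping track of the boundary value $\rho(\text{C},\text{C})=0$, which is precisely what converts the conditional comparisons back into the raw preference $\preceq$. It is worth noting that only the forward direction of~(\ref{r_GG1}) (passing from $\mathbb{G}^{\prime}=\{\text{B}\}$ to $\mathbb{G}^{\prime\prime}=\{\text{C}\}$) is needed, and that, in contrast with Proposition~\ref{P1}, mere equivalence of the induced preferences is strictly weaker than equality of the preference magnitudes, so transitivity is all one can extract here.
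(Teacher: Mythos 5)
Your proof is correct and rests on essentially the same device as the paper's: one-element reference groups $\{\text{B}\}$ and $\{\text{C}\}$ together with $\rho(\text{C},\text{C})=0$, which converts conditional comparisons back into the raw preference. The only difference is presentational---you argue directly that A$\preceq$B$\preceq$C forces $\rho(\text{A},\text{C})\leq\rho(\text{B},\text{C})\leq 0$ and hence A$\preceq$C, whereas the paper assumes an intransitive triplet A$\preceq$B$\preceq$C$\prec$A and reaches a contradiction by showing the conditional preferences of A versus C differ between the references $\{\text{B}\}$ and $\{\text{C}\}$.
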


We consider intransitive triplet (\ref{r_trip}) (\textit{i.e.}, A$\preceq$B$\preceq $C$%
\prec$A)---there must be at least one if the preference is intransitive---and demonstrate that at least some of the conditional rankings are
different. For two one-element groups $\mathbb{G}^{\prime}=\{$B$\}$ and $%
\mathbb{G}^{\prime\prime}=\{$C$\},$ we, obviously, have $0\geq\rho _{_{\text{%
\{B\}}}}($A$)\leq\rho_{_{\text{\{B\}}}}($C$)\geq0$ but $0<\rho_{_{\text{\{C\}%
}}}($A$)>\rho_{_{\text{\{C\}}}}($C$)=0.$ Hence, the following conditional
preferences 
\begin{equation}
\text{A}\preceq_{_{\text{\{B\}}}}\text{C \ but\ A}\succ_{_{\text{\{C\}}}}%
\text{C }
\end{equation}
are different. This contradicts (\ref{r_GG1}) implying that the underlying
preference must be transitive.

Hence, if a binary preference is specified, elements in a given set can
always be ordered transitively by conditional ranking of the elements with
respect to a selected reference group or set (which may or may not coincide
with the given set). If the original preference is transitive, it uniquely
determines the ordering irrespective of the reference group. If the original
preference is intransitive, then the relative positions of at least two
elements in this ordering (with respect to each other: say A before B or A
after B) depend on the presence of the other elements in the reference set.

Conditional ranking and group co-ranking can also be introduced for the
indicator co-ranking 
\begin{equation}
R_{\mathbb{G}}(\text{A})=\bar{R}(\text{A,}\mathbb{G})=-\bar{R}(\mathbb{G},%
\text{A})=\frac{1}{G}\sum_{\text{C}_{i}\in \mathbb{G}}g(\text{C}_{i})R(\text{%
A}\mathbf{,}\text{C}_{i})  \label{r_R_G}
\end{equation}%
\begin{equation}
\bar{R}(\mathbb{G}^{\prime },\mathbb{G}^{\prime \prime })=\frac{1}{G^{\prime
}G^{\prime \prime }}\sum_{\text{C}_{i}\in \mathbb{G}^{\prime }}\sum_{\text{C}%
_{j}\in \mathbb{G}^{\prime \prime }}g^{\prime }(\text{C}_{i})g^{\prime
\prime }(\text{B}_{j})R(\text{C}_{i}\mathbf{,}\text{C}_{j})  \label{r_R_GG}
\end{equation}%
which are distinguished by using the word ``indicator''. For example, conditional indicator co-rankings
are further discussed in Appendix \ref{AA2}. Note that the group preferences
induced by the indicator co-rankings ($\mathbb{G}^{\prime }\succeq _{R}%
\mathbb{G}^{\prime \prime }$ $\Longleftrightarrow $ $\bar{R}(\mathbb{G}%
^{\prime },\mathbb{G}^{\prime \prime })\geq 0$) can be intransitive even if
the underlying element preference is currently transitive (see Proposition %
\ref{PA1aaa}). All co-rankings and conditional rankings (such as $\rho $, $%
R, $ $\rho _{_{\mathbb{G}}},$ $R_{\mathbb{G}},$ $\bar{\rho}$, $\bar{R}$) are 
\textit{relative} as opposite to the \textit{absolute} ranking of the
previous subsection.

\subsection{Is Intransitivity Irrational?}

Preferences are always attached to specific conditions and can become
illogical or contradictory if taken out of context and this work endeavors
to use examples to illustrate this point. The propositions of the previous
subsections demonstrate that intransitivity is associated with relativistic
views. The choice between transitive (absolute) and intransitive
(relativistic) models depends on nature of the processes that these models
are expected to reproduce. Many people, however, have psychological
difficulties in accepting a relativistic approach, expecting an absolute
scale of judgments from ``bad'' to ``good'', which can be suitable in some
cases but excessively simplistic in the others.

The argument for irrationality of intransitivity \cite{Intrans1964} is based
on the alleged impossibility of choosing between A, B and C when A$\prec$B, B%
$\prec$C and C$\prec$A. It is often suggested that in this case a decision
maker will circle between these thee options indefinitely, which is
impossible or open to numerous contradictions. In fact, binary preferences
determine the selection of a single element from pairs, but do not tell us
how the choice should be performed when all three elements are
simultaneously present in the current set of $\{$A,B,C$\}$. This can be
achieved consistently on the basis of the conditional ranking $\rho_{_{%
\mathbb{G}}}(...)=\bar{\rho}(...,\mathbb{G}),$ where $\mathbb{G=}\{$A,B,C$%
\}. $\ It might be the case that $\rho_{_{\mathbb{G}}}($A$)=\rho_{_{\mathbb{G%
}}}( $B$)=\rho_{_{\mathbb{G}}}($C$)$ but this case is no more illogical than
selecting from a set with transitive preferences and equivalent elements,
e.g., A$\sim$B$\sim$C$\sim$A. While identifying the reference set with the
current set is most obvious choice in absence of additional information,
deploying alternative reference sets is also possible in practical
situations. For example, a buyer may use information about popularity of
different models instead of considering the set of models currently
available in the store. In a generic consideration of this subsection, we
follow the logical choice of identifying the reference set with the current
set and setting the reference weights to unity.

The preference between elements A and B selected from the current set of
\{A,B\} is based on the conditional ranking 
\begin{equation}
\rho_{_{\{\text{A},\text{B}\}}}(\text{A})=\frac{\rho(\text{A,B})}{2}%
,\;\;\rho_{_{\{\text{A},\text{B}\}}}(\text{B})=\frac{\rho(\text{B,A})}{2}
\end{equation}
as specified by Equation~(\ref{r_G}) with $g_{i}=1$. The conditional co-ranking,
which is a measure of conditional preference of A over B introduced by
similarity with Equation~(\ref{r_diff}), becomes 
\begin{equation}
\rho_{_{\{\text{A},\text{B}\}}}(\text{A,B})=\rho_{_{\{\text{A},\text{B}\}}}(%
\text{A})-\rho_{_{\{\text{A},\text{B}\}}}(\text{B})=\rho(\text{A,B})
\end{equation}
This is expected: $\rho($A,B$)$ indicates a preference between A and B when
selected from the set of \{A,B\}. Let us compare this to preferences between
A and B when the current (reference) set has three elements $\{$A,B,C$\}$.
The conditional ranking is now specified by 
\begin{equation}
\rho_{_{\{\text{A},\text{B,C}\}}}(\text{A})=\frac{\rho(\text{A,B})+\rho(%
\text{A,C})}{3},\;\;\rho_{_{\{\text{A},\text{B,C}\}}}(\text{B})=\frac{\rho(%
\text{B,A})+\rho(\text{B,C})}{3}
\end{equation}
with the corresponding conditional co-ranking 
\begin{equation}
\rho_{_{\{\text{A},\text{B,C}\}}}(\text{A,B})=\rho_{_{\{\text{A},\text{B,C}%
\}}}(\text{A})-\rho_{_{\{\text{A},\text{B,C}\}}}(\text{B})=\frac {2\rho(%
\text{A,B})+\rho_{_{\{\text{C}\}}}(\text{A,B})}{3}=\rho(\text{A,B})+\frac{%
\delta(\text{C,B,A})}{3}
\end{equation}
where 
\begin{equation}
\rho_{_{\{\text{C}\}}}(\text{A},\text{B})=\rho(\text{A},\text{C})-\rho(\text{%
B},\text{C}),\;\;\delta(\text{C,B,A})=\rho(\text{C,B})+\rho(\text{B,A})+\rho(%
\text{A,C})
\end{equation}
are introduced. In absolutely transitive cases, $\rho_{_{\{\text{C}\}}}($A,B$%
)=\rho($A,B$),$ $\delta($C,B,A$)=0$ and $\rho_{_{\{\text{A},\text{B,C}\}}}($%
A,B$)$ is the same as $\rho($A,B$)$. However, $\rho _{_{\{\text{A},\text{B,C}%
\}}}($A,B$)$ and $\rho($A,B$)$ are not necessarily equivalent in
intransitive cases. The combination of intransitivity with a presumption of
invariance of conditional rankings, which is incorrect in intransitive
cases, may result in logical contradictions. In a consistent approach, the
specification of preference between A and B for selection from the set of $%
\{ $A,B,C$\}$ should be $\rho_{_{\{\text{A},\text{B,C}\}}}($A,B$)$ and not $%
\rho($A,B$)$.

If only relative characteristics (\textit{i.e.}, co-ranking but not absolute ranking)
are specified or known, the preferences defined by these characteristics are
most likely to be intransitive or potentially intransitive. The case of
absolutely transitive co-rankings is a very specific case and, in general,
cannot be presumed a priori. The existence of a transitive ordering can
simplify choices but \textquotedblleft simpler\textquotedblright\ does not
necessarily mean \textquotedblleft more accurate\textquotedblright\ or
\textquotedblleft more realistic\textquotedblright . Intransitivity is not
irrational, but considering intransitivity while neglecting relativistic
nature of conditional preferences is illogical and can lead to
contradictions.\ The examples of the following sections show that our
preferences are indeed relativistic and, generally, intransitive but this is
often (and incorrectly) seen as 
``irrationality'', which is not amendable to logical~analysis.

\section{Intransitivity and Game Theory\label{S1_2}}\vspace{-12pt}

\subsection{\protect\nolinebreak Games with Explicit Intransitivity}

Games with explicit intransitivity involve preset rules that a priori
specify intransitive preferences. The best-known example of such games is
the rock(R)-paper(P)-scissors(S) game. The rules of this game are expressed
by the intransitive preference R$\prec$P$\prec$S$\prec$R, which is
illustrated in Figure \ref{fig0}a. The players $\mathfrak{P}^{\prime}$ and $%
\mathfrak{P}^{\prime\prime}$ independently select one of the options (pure
strategies) R, P or S and the winner is determined by the specified
preference. Obviously, intransitive rules are related to intransitive
strategies. Indeed, if a player has to select between two options, while the
remaining option has to be taken by the opponent, the player would obviously
prefer P to R, S to P and R to S. Hence, appearance of intransitive
strategies in explicitly intransitive games is natural and common.
\begin{figure}[H]
\centering

\includegraphics[width=12cm,page=10, clip ]{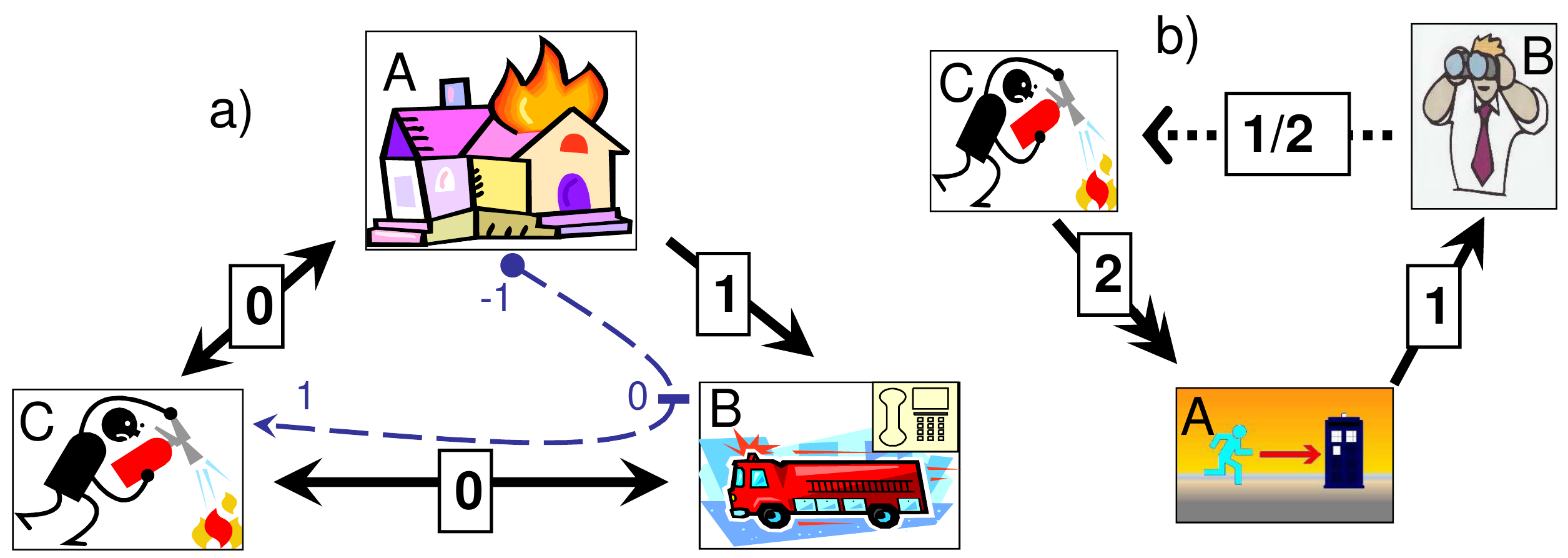}
\caption{ (\textbf{a}) The rock-paper-scissors game is the best-known example of intransitivity in games;
(\textbf{b}) Intransitive dice game where the die thrown by a player wins when it has a higher score
 than the die thrown by the opposition. The opposite sides of the dice have the same numbers.
With the probability of 5/9, die B wins over A, C wins over B, and A wins over C. 
This dice game is a simple version of Efron's dice (see article ``nontransitive dice'' in Wikipedia). 
}
\label{fig0}

\end{figure}

In a more general case, consider players $\mathfrak{P}^{\prime }$ and $%
\mathfrak{P}^{\prime \prime }$ who select options (pure strategies) C$_{1},$C%
$_{2},...$ from the respective subsets $\mathbb{S}^{\prime }$ and $\mathbb{S}%
^{\prime \prime }$ of set $\mathbb{S}$ with respective probabilities $%
g_{i}^{\prime }\geq 0$ and $g_{i}^{\prime \prime }\geq 0$. The sets $\mathbb{%
S}^{\prime }$ and $\mathbb{S}^{\prime \prime }$ may overlap. Hence, the
players' selections are represented by groups $\mathbb{G}^{\prime }$ and $%
\mathbb{G}^{\prime \prime }$. Player $\mathfrak{P}^{\prime }$ chooses group $%
\mathbb{G}^{\prime }\mathbb{\subseteq S}^{\prime }$ while player $\mathfrak{P%
}^{\prime \prime }$ chooses group $\mathbb{G}^{\prime \prime }\mathbb{%
\subseteq S}^{\prime \prime }$ but the players are not allowed to change the
subsets $\mathbb{S}^{\prime }$ and $\mathbb{S}^{\prime \prime }$. Any C$%
_{i}\in \mathbb{S}^{\prime }$\ is called \textit{available} to player $%
\mathfrak{P}^{\prime },$ while any C$_{i}\in \mathbb{G}^{\prime }$ is called 
\textit{selected} by player $\mathfrak{P}^{\prime }$. The relative strength
of pure strategies C$_{i}$ and C$_{j}$, which is called payoff in game
theory, is determined by co-ranking $\rho ($C$_{i},$C$_{j})$. This defines a
general zero-sum game for two players, $\mathfrak{P}^{\prime }$ and $%
\mathfrak{P}^{\prime \prime },$ while the groups $\mathbb{G}^{\prime }$ and $%
\mathbb{G}^{\prime \prime }$ represent mixed strategies of the players. It
is easy to see that the overall payoff of the game is determined by 
\begin{equation}
\bar{\rho}(\mathbb{G}^{\prime },\mathbb{G}^{\prime \prime })=\sum_{\text{C}%
_{i}\in \mathbb{G}^{\prime }}\sum_{\text{C}_{j}\in \mathbb{G}^{\prime \prime
}}g^{\prime }(\text{C}_{i})g^{\prime \prime }(\text{C}_{j})\rho (\text{C}_{i}%
\mathbf{,}\text{C}_{j})
\end{equation}%

Here, the average co-ranking $\bar{\rho}(\mathbb{G}^{\prime },\mathbb{G}%
^{\prime \prime })$ is defined by Equation~(\ref{r_G1G2}) but the total weights are
taken $G^{\prime }=G^{\prime \prime }=1$ since $g_{i}^{\prime }$ and $%
g_{j}^{\prime \prime }$ are interpreted as probabilities. If $\mathbb{S}%
^{\prime }$ and $\mathbb{S}^{\prime \prime }$ are distinct, then the rules
of the game might define only preferences between elements from different
sets $\mathbb{S}^{\prime }$ and $\mathbb{S}^{\prime \prime }$ but not within
each set. If these preferences can be extended transitively to all possible
pairs from $\mathbb{S}$, then the rules of the game are seen as being
transitive (and are intransitive otherwise).

The mixed strategies of this game are known to possess Nash equilibrium \cite%
{Nash1950}, where change of the mixed strategy by each player does not
increase his overall payoff, assuming that mixed strategies of the remaining
player stay the same. This condition can be expressed in terms of
conditional rankings defined by Equation~(\ref{r_G}):

\begin{proposition}
(Nash {\em \cite{Nash1950}}) Nash equilibrium is achieved when and only when all
options in a\ mixed strategy selected by each player have maximal (and the
same within each mixed strategy) ranking conditioned on the mixed strategy
of the opposition: 
\begin{equation}
\begin{array}{cc}
\bar{\rho}(\text{C}_{i},\mathbb{G}^{\prime \prime })=\bar{\rho}_{\max
}^{\prime },\text{ } & \text{if C}_{i}\in \mathbb{G}^{\prime } \\ 
\bar{\rho}(\text{C}_{i},\mathbb{G}^{\prime \prime })\leq \bar{\rho}_{\max
}^{\prime }, & \text{if C}_{i}\notin \mathbb{G}^{\prime }%
\end{array}%
\end{equation}
\end{proposition}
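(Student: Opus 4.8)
The plan is to reduce the equilibrium condition to an elementary fact about maximising a linear functional over a probability simplex. First I would fix the mixed strategy $\mathbb{G}^{\prime\prime}$ of the opponent and, using the bilinearity of the group co-ranking in Equation~(\ref{r_G1G2}) with unit total weights together with the definition of the conditional ranking in Equation~(\ref{r_G}), expand the payoff of player $\mathfrak{P}^{\prime}$ as
\begin{equation*}
\bar{\rho}(\mathbb{G}^{\prime},\mathbb{G}^{\prime\prime})=\sum_{\text{C}_{i}\in\mathbb{G}^{\prime}}g^{\prime}(\text{C}_{i})\,\bar{\rho}(\text{C}_{i},\mathbb{G}^{\prime\prime}).
\end{equation*}
Since $\bar{\rho}(\text{C}_{i},\mathbb{G}^{\prime\prime})=\rho_{_{\mathbb{G}^{\prime\prime}}}(\text{C}_{i})$ is the ranking of $\text{C}_{i}$ conditioned on the opponent's strategy and the weights $g^{\prime}(\text{C}_{i})$ are nonnegative with unit sum, the payoff of $\mathfrak{P}^{\prime}$ is, for a fixed $\mathbb{G}^{\prime\prime}$, a convex combination of the numbers $\{\bar{\rho}(\text{C}_{i},\mathbb{G}^{\prime\prime}):\text{C}_{i}\in\mathbb{S}^{\prime}\}$, with the weight vector free to range over the whole probability simplex on $\mathbb{S}^{\prime}$ (the only restriction imposed by the game is $\mathbb{G}^{\prime}\subseteq\mathbb{S}^{\prime}$, i.e.\ the set of available pure strategies may not be enlarged).

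Next I would apply the standard simplex bound: a convex combination $\sum_{i}g^{\prime}(\text{C}_{i})\,x_{i}$ of reals $x_{i}$ never exceeds $\bar{\rho}_{\max}^{\prime}:=\max_{\text{C}_{i}\in\mathbb{S}^{\prime}}x_{i}$, and equals it exactly when $g^{\prime}(\text{C}_{i})=0$ for every $\text{C}_{i}$ with $x_{i}<\bar{\rho}_{\max}^{\prime}$. Taking $x_{i}=\bar{\rho}(\text{C}_{i},\mathbb{G}^{\prime\prime})$, the condition that no unilateral change of $\mathbb{G}^{\prime}$ increases $\mathfrak{P}^{\prime}$'s payoff becomes the condition that $\mathbb{G}^{\prime}$ attains $\bar{\rho}_{\max}^{\prime}$, which by the equality case is precisely the displayed pair of requirements: every selected option $\text{C}_{i}\in\mathbb{G}^{\prime}$ satisfies $\bar{\rho}(\text{C}_{i},\mathbb{G}^{\prime\prime})=\bar{\rho}_{\max}^{\prime}$, while every available unselected option $\text{C}_{i}\in\mathbb{S}^{\prime}\setminus\mathbb{G}^{\prime}$ satisfies $\bar{\rho}(\text{C}_{i},\mathbb{G}^{\prime\prime})\leq\bar{\rho}_{\max}^{\prime}$ (this last inequality is automatic from the definition of the maximum and is recorded only for symmetry).

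Finally I would run the mirror argument for $\mathfrak{P}^{\prime\prime}$: antisymmetry gives $\bar{\rho}(\mathbb{G}^{\prime\prime},\mathbb{G}^{\prime})=-\bar{\rho}(\mathbb{G}^{\prime},\mathbb{G}^{\prime\prime})$, so $\mathfrak{P}^{\prime\prime}$ maximising its own payoff against a fixed $\mathbb{G}^{\prime}$ is the same simplex problem with the two groups interchanged and with $\bar{\rho}(\cdot,\mathbb{G}^{\prime})$ in place of $\bar{\rho}(\cdot,\mathbb{G}^{\prime\prime})$; it yields the analogous support characterisation with the primes swapped and with its own common value $\bar{\rho}_{\max}^{\prime\prime}$. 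By the definition of Nash equilibrium recalled above---neither player can increase his payoff by changing his own mixed strategy while the opponent's is held fixed---a pair $(\mathbb{G}^{\prime},\mathbb{G}^{\prime\prime})$ is an equilibrium if and only if each group is a best response to the other, hence if and only if both support conditions hold simultaneously; this establishes both directions of the ``when and only when''.

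The argument is essentially routine and I do not anticipate a serious obstacle; the one point that needs care is the ``only when'' direction, where the equality case of the simplex bound must be read correctly---it is exactly the weights of the \emph{strictly sub-maximal} pure strategies that are forced to vanish---and where one must remember that the admissible deviations of $\mathfrak{P}^{\prime}$ are \emph{all} probability vectors supported in $\mathbb{S}^{\prime}$, not merely those supported on the currently chosen $\mathbb{G}^{\prime}$. As a by-product one reads off that $\bar{\rho}_{\max}^{\prime}$ and $\bar{\rho}_{\max}^{\prime\prime}$ are the equilibrium payoffs of the two players and that $\bar{\rho}_{\max}^{\prime}+\bar{\rho}_{\max}^{\prime\prime}=0$, consistent with the zero-sum structure of the game.
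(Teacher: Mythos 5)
Your proof is correct and follows essentially the same route as the paper: the paper's own argument is a one-sentence version of your support/improvement observation (a player holding a selected option with sub-maximal conditional ranking $\bar{\rho}(\text{A},\mathbb{G}^{\prime\prime})<\bar{\rho}_{\max}^{\prime}$ can improve by dropping it), which is exactly the equality case of your simplex bound applied to the bilinear expansion $\bar{\rho}(\mathbb{G}^{\prime},\mathbb{G}^{\prime\prime})=\sum_{i}g^{\prime}(\text{C}_{i})\bar{\rho}(\text{C}_{i},\mathbb{G}^{\prime\prime})$. Your write-up is merely more complete, spelling out both directions, the mirror argument for the opponent via antisymmetry, and the zero-sum consequence $\bar{\rho}_{\max}^{\prime}+\bar{\rho}_{\max}^{\prime\prime}=0$ at equilibrium, none of which changes the underlying approach.
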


Indeed, if there was A$\in \mathbb{G}^{\prime }$ with $\bar{\rho}($A$,%
\mathbb{G}^{\prime \prime })<\bar{\rho}_{\max }^{\prime }$ then player $%
\mathfrak{P}^{\prime }$ can improve his overall payoff by setting $g^{\prime
}($A$)=0$ and eliminating A from $\mathbb{G}^{\prime }$. In this
proposition, $\mathfrak{P}^{\prime }$ is understood as any of the two
players and $\mathfrak{P}^{\prime \prime }$ represents his opposition. Note
that, generally, $\bar{\rho}_{\max }^{\prime \prime }\neq \bar{\rho}_{\max
}^{\prime }$, where%
\begin{equation}
\bar{\rho}_{\max }^{\prime }=\underset{\text{C}_{i}\in \mathbb{S}^{\prime }}{%
\max }\left( \bar{\rho}(\text{C}_{i},\mathbb{G}^{\prime \prime })\right) ,\
\ \ \bar{\rho}_{\max }^{\prime \prime }=\underset{\text{C}_{j}\in \mathbb{S}%
^{\prime \prime }}{\max }\left( \bar{\rho}(\text{C}_{j},\mathbb{G}^{\prime
})\right)
\end{equation}%

If the preferences between strategies and the associated co-rankings are absolutely transitive, then Nash equilibrium
is achieved when each player selects the option(s) with the highest absolute
ranking in the set available to the player. Transitive games are relatively
simple and are not particularly interesting. Finding Nash equilibrium in
case of a game with intransitive preferences can be more complicated.

For example, in the rock-paper-scissors game with all options available to
all players (\textit{i.e.}, \mbox{$\mathbb{S}^{\prime}=\mathbb{S}^{\prime\prime}=\{$R,P,S$\}$}%
), the Nash equilibrium is specified by 
\begin{equation}
\underset{\mathbb{G}^{\prime}}{\underbrace{g^{\prime}(\text{R})=g^{\prime }(%
\text{P})=g^{\prime}(\text{S})}}=\underset{\mathbb{G}^{\prime\prime }}{%
\underbrace{g^{\prime\prime}(\text{R})=g^{\prime\prime}(\text{P}%
)=g^{\prime\prime}(\text{S})}}=\frac{1}{3}  \label{GT_RPS1}
\end{equation}
with all conditional rankings being the same 
\begin{equation}
\rho_{_{\mathbb{G}^{\prime\prime}}}(\text{R})=\rho_{_{\mathbb{G}^{\prime
\prime}}}(\text{P})=\rho_{_{\mathbb{G}^{\prime\prime}}}(\text{S})=\rho_{_{%
\mathbb{G}^{\prime}}}(\text{R})=\rho_{_{\mathbb{G}^{\prime}}}(\text{P}%
)=\rho_{_{\mathbb{G}^{\prime}}}(\text{S})=0
\end{equation}
and the overall payoff of 
\begin{equation}
\bar{\rho}(\mathbb{G}^{\prime},\mathbb{G}^{\prime\prime})=0
\end{equation}

If player $\mathfrak{P}^{\prime}$ alters his mixed strategy $\mathbb{G}%
^{\prime}$ (while $\mathbb{G}^{\prime\prime}$ does not change), the overall
payoff of the game remains the same. However, any strategy of $\mathfrak{P}%
^{\prime}$ that is different from $\mathbb{G}^{\prime}$ specified by Equation~(\ref%
{GT_RPS1}) can be exploited by player $\mathfrak{P}^{\prime\prime}$ to get a
better payoff for $\mathfrak{P}^{\prime\prime}$.

\subsection{Games with Potential Intransitivity}

Some games have rules that do not explicitly stipulate intransitive
relations but allow for optimal intransitive strategies. Consider a game
where two dice are thrown and the one which shows a greater number wins---there is nothing explicitly intransitive in these rules (and an example of a
transitive set of dice can be easily suggested). However, the dice shown in
Figure \ref{fig0}b are clearly intransitive. In accordance with the
terminology used in this work, we call these games potentially intransitive.
Determining existence of intransitive optimal strategies in a particular
potentially intransitive game can be complicated.

It has been noticed that an ordinary cat tends to prefer fish to meat, meat
to milk, and milk to fish \cite{Qcat2009M}. Makowski and Piotrowski \cite%
{Icat2005PM,Qcat2009M,Ent2015MPS} considered a number of models that can
explain these intransitive preferences by the perfectly rational need of
balancing cat's diet. These models include a classical cat \cite{Icat2005PM}
and a quantum cat \cite{Qcat2009M}, and are sufficiently general to be
applied to other problems such as adversarial/cooperative balanced food
games \cite{Ent2015MPS} or to choosing candidates in elections \cite%
{Qelec2011MP} (quantum preferences are briefly discussed in Appendix \ref
{AQP}). In these models, a cat is offered three types of food in pairs with
some probabilities and is forced to chose between them. The cat strives to
achieve a perfect balanced diet and can follow various strategies
(transitive or intransitive), while intransitivity is not enforced in any
way on the cat by the game rules. The main conclusion drawn by Makowski and
Piotrowski \cite{Ent2015MPS} is consistent with the approach taken in this work:
intransitive strategies can be not only perfectly rational but also the best
under certain conditions (while in other circumstances transitive strategies
are optimal). Choosing between transitive and intransitive strategies is no
more than one of the attributes in selecting the best tactics under given
circumstances---this choice is not linked to upholding rationality of
modern science.

\section{Example: Intransitivity of Justice\label{S2}}

The possibility of intransitivity in legal regulations has been noticed and
discussed in several publications \cite{SLR2010,Katz2014}. In this section,
we offer a different example of intransitivity and suggest some~interpretations.

Consider the following options available to people witnessing a fire:

\begin{itemize}
\item[(A)] Doing nothing; 

\item[(B)] Calling the fire service;

\item[(C)] Trying to rescue people from the fire and/or extinguish the fire
\end{itemize}

\subsection{Popular View}

One can guess that popular (and, possibly, somewhat naive) estimate for
utility of these options would~be 
\begin{equation}
r(\text{A})=-1,\;r(\text{B})=0,\;r(\text{C})=1  \label{J_u}
\end{equation}%
implying that people think of option C as having a higher value for society
than option B 
(As a proof, I \mbox{recall} my childhood experienmce: after extinguishing a faulty
camp stove engulfed by flames my \mbox{father} \mbox{became} a local celebrity among other
holidaymakers. I am sure that his treatment would be less \mbox{favirouble} if,
instead, he limited his actions to hitting a firealarm button). This
popular treatment of the choice between options A, B and C, which is shown
in Figure \ref{fig1} by dashed line is perfectly transitive and~self-consistent 
\begin{equation}
\text{A}\prec \text{B}\prec \text{C}\succ \text{A}
\end{equation}%

The corresponding conditional indicator co-ranking with respect to the
reference set of $\mathbb{G}$ $=\{$A,B,C$\}$ is determined by Equation~(\ref%
{r_R_G}) and given by%
\begin{equation}
\bar{R}(\text{A,\{A,B,C\}})=-\frac{2}{3},\;\bar{R}(\text{B,\{A,B,C\}})=0,\;%
\bar{R}(\text{C,\{A,B,C\}})=+\frac{2}{3}.
\end{equation}
\begin{figure}[H]
\centering
\includegraphics[width=12cm,page=1, clip ]{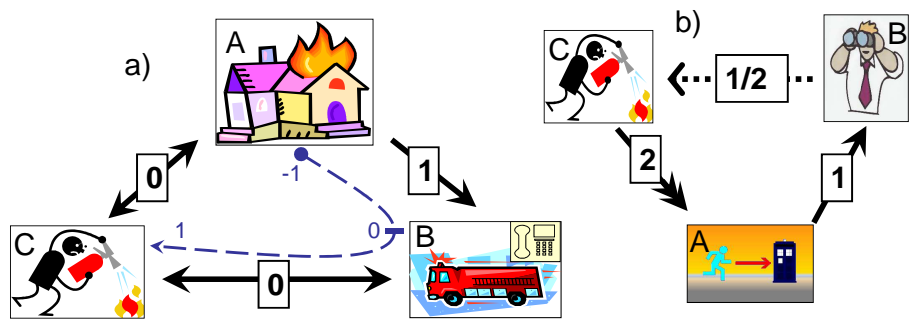}
\caption{Comparison of choices A, B and C in case of a fire emergency: 
(\textbf{a}) ``can do more but not less" (dashed line---naive perspective; solid arrows---legal perspective) 
and (\textbf{b})~``manager's choice" (arrows indicate the regulation requirements).
}
\label{fig1}

\end{figure}
\vspace{-24pt}
\subsection{Treatment by Law: Can Do More but not Less}

Let us examine how the same choice, when made by an employee in the case of
a fire in a work environment, is treated by law. The law expects that the
employee either must rush to call the fire service or can do more by trying
to extinguish the fire. Hence A$\prec$B---the employee must chose B over A,
if only two options A and B are available to him. The employee is free to
chose between B and C (\textit{i.e.}, B$\sim$C). The law, however, does not demand
that the employee risks his life if the choice is to be made between A and C
({\em i.e.}, C$\sim $A). One can see that this treatment of the options is intransitive 
\begin{equation}
\text{A}\prec\text{B}\sim\text{C}\sim\text{A}  \label{J_trip}
\end{equation}
and corresponds to the following indicator co-ranking%
\begin{equation}
R(\text{A,B})=-1,\;\;R(\text{B,C})=0,\;\;R(\text{C,A})=0  \label{J_R}
\end{equation}
which is illustrated in Figure \ref{fig1}a. Note that the intransitivity of
Equation~(\ref{J_trip}) is weak---it violates only transitivity of equivalence. If
the choice is to be made between three options, then B and C are legal while
A is not. This implies the following ordering $\left[ \{\text{B,C}\},\text{A}%
\right] $ (here, the square brackets denote ordered~sets).

The indicator co-ranking conditioned on the reference set of $\mathbb{G}$ $%
=\{$A,B,C$\}$ 
\begin{equation}
\bar{R}(\text{A,\{A,B,C\}})=-\frac{1}{3},\;\bar{R}(\text{C,\{A,B,C\}})=0,\;%
\bar{R}(\text{B,\{A,B,C\}})=+\frac{1}{3},  \label{J_R_ABC}
\end{equation}%
indicates something that we might have guessed already: legally, B is the
safest option. Note that the legal Equation~(\ref{J_R_ABC}) and common Equation~(\ref{J_u})
systems of values may differ. The law prefers option B, tolerates option C
and objects to option A so that the three options, when are ordered
according to likely legal advice, are listed as [B,C,A] (although, as noted
above, options B and C are legal while A is not). This is generally correct:
in most cases, the society would benefit if the employee calls fire fighting
professionals instead of undertaking a heroic effort himself. Fire safety
manuals often instruct employees to call the fire service before trying to
do anything else.

When A is selected out of \{A,B,C\}, which is illegal, intransitivity allows
for a line of defence based on making two legal selections instead of one
illegal. The employee selected C out of \{A,B,C\} first but when he
approached the fire (and B was no longer available) he understood that C is
dangerous or impossible and selected A out of \{A,C\}. This line of defence
is not unreasonable, provided the employee can demonstrate that the two
selections were indeed separated in time and space. Choosing A out of
\{A,C\} is not the same as choosing A out of \{A,B,C\}.

\subsection{Strict Intransitivity in Manager's Choice}

Consider a safety regulation that instructs an industrial site manager how
to act in case of an~emergency.

%
%
%
%
\begin{enumerate}
\item[I.] \textbf{Leadership:} if the manager is on site, he/she is expected to
lead and organise the site personnel, deploying staff as necessary to
actively contain or liquidate the cause of emergency.

\item[II.] \textbf{Safety:}

\begin{enumerate}
\item[(a)] the manager and personnel stay on site during emergency if there is no
immediate danger to personnel but

\item[(b)] personnel evacuation must be promptly enacted whenever there is a
significant danger to personnel.
\end{enumerate}
\end{enumerate}
%

This regulation seems perfectly reasonable but, in fact, it is prone to
intransitivity. Consider the following options that the site manager can
undertake in case of fire:

\begin{itemize}
\item[(A)] Evacuating personnel and abandoning the site; 

\item[(B)] Organising personnel to monitor the situation on site;

\item[(C)] Organising personnel to contain and extinguish the fire.
\end{itemize}

The regulation (clause II-a) clearly prescribes B out of \{A,B\}, since
monitoring fire is safe and does not endanger personnel. Clause I explicitly
requires selecting C out of \{B,C\}. Combating fire, however, becomes
dangerous for the site personnel, triggering clause II-b: the manager must
select A out of $\{$C,A$\}$. This appears to be a case of strict
intransitivity (see Figure \ref{fig1}b)%
\begin{equation}
\text{A}\prec \text{B}\prec \text{C}\prec \text{A}  \label{J_int}
\end{equation}%

Practically, intransitivity of available options is likely to be sufficient
to create reasonable doubts about incorrectness of the manager's choice. The
question about the best course of action prescribed by the manual
nevertheless remains and needs further analysis. Note that assigning utility
to options A, B and C is impossible, as this would be inconsistent with
intransitivity of the choices Equation~(\ref{J_int}). A co-ranking, however, can
still be deployed. The co-rankings are specified in accordance with
perceived importance the corresponding clauses: I)$\ \rho ($C$,$B$)=1/2,\ $%
II-a) $\rho ($B$,$A$)=1$ and \ II-b) $\rho ($A$,$C$)=2$. Here we take into
account that\ the safety clause (II) has a stronger formulation than the
leadership clause (I) and that the safety of personnel (clause II-b) has the
highest priority in clause II. These priorities are illustrated in Figure %
\ref{fig1}b. The corresponding conditional utilities of these three options
are given by 
\begin{equation}
\bar{\rho}(\text{C,\{A,B,C\}})=-\frac{3}{6},\;\bar{\rho}(\text{B,\{A,B,C\}}%
)=+\frac{1}{6},\;\bar{\rho}(\text{A,\{A,B,C\}})=+\frac{2}{6}
\end{equation}%

When choosing from $\{$A,B,C$\},$ the manager can select the best option out
of this set (which is A) or eliminate the worst option (which is C) and
reassess conditional utilities of the remaining set. In the latter case the
process of elimination continues until only one element is left, which is to
be selected. For the present three options, the \textquotedblleft selecting
the best\textquotedblright\ method yields A while the \textquotedblleft
eliminating the worst\textquotedblright\ method yields B at the end. Indeed,
the best option to be selected from $\left\{ \text{A,B}\right\} $ is B,
which is different from A---the best option selected from $\{$A,B,C$\}$.
This is not a fallacy: our choice simply depends on available information
and the presence of option C adds information about fire danger and affects
our evaluation of the other options. It needs to be understood that
different methods do provide the best choices but in different
circumstances. The \textquotedblleft eliminating the
worst\textquotedblright\ method corresponds to the case when the fire danger
is completely eliminated with elimination of option C (for example when C
requires moving personnel to a different location). In a more mixed
situation where the distinction between different options is more blurred,
option C remains a potential danger to personnel even if it is not
specifically selected. For example, someone might try to extinguish the fire
or perhaps there is a probability that switching to option C will be forced
by developing circumstances. According to Propositions \ref{P1} and~\ref{P1a}, dependence
of values of the options on perspective (\textit{i.e.}, dependence of conditional
rankings on the reference set) is a property of intransitive systems. As
considered in Section \ref{S4}, intransitivity is common when multiple
selection criteria (in this case clauses I and II) are in place.

\section{Potential Intransitivity of the Original Lotka-Volterra Model\label%
{S3}}

In ecology and biological population studies, intransitivity and locality of
competition have been recognised as the key factors that maintain
biodiversity \cite{LV2002,LV2007} 
(It is interesting that the same factors---\mbox{intransitivity} and localisation---have been nominated as conditions for complex behavior in generic
competitive systems \cite{K-PS2012,K-PT2013,K_Ent2014a}). A few concepts
that are commonly used in this field indicate the inherent presence of
intransitivity. For example, some invasions and occupying niches are signs
of intransitivity: species in ecological systems are competitive against
existing competitors but may be vulnerable against unfamiliar threats.
Hence, this competitiveness is not absolute---a new invader, which does
not necessarily hold the highest competitive rank in its home environment,
might be very successful, this clearly demonstrates vulnerability of the
system. After the Isthmus of Panama connecting two Americas was formed,
North American fauna was more successful in invading the other continent and
diversifying there \cite{GAI1982}. Hence, it is reasonable to conclude that
North American fauna was more competitive than the fauna of South America 
\cite{K-PT2013}. This statement refers to higher absolute competitiveness
and, hence, is transitive. However, a more detailed consideration reveals
that some of the South American species (such as armadillos and sloths,
which generally do not seem particularly competitive) were quite successful
in invading North America and occupying niches there. This indicates
intransitivity of competitiveness. Indeed, while being highly competitive in
general, North American fauna was not resistant with respect to invasion of
sloths. Propositions \ref{P1}, \ref{P1a}, \ref{PA1a} and \ref{PA1aa} explicitly link relativity of
competitiveness (preferences) to intransitivity.

While the existence of intransitivity in competitive Lotka--Volterra models
that generalise the original version for multiple species is well-known \cite%
{LV2014}, the cyclic behaviour of the original version hints at possible
intransitivity (according to our terminology, it can be called potentially
intransitive).\ Here, we show that such intransitivity is indeed implicitly
present in the original version of the Lotka--Volterra model \cite{LV1920,
LV1939}, which is given by the following system%
\begin{equation}
\frac{dN_{_{\text{R}}}}{dt}=a_{_{\text{R}}}N_{_{\text{R}}}-bN_{_{\text{R}%
}}N_{_{\text{F}}}
\end{equation}%
\begin{equation}
\frac{dN_{_{\text{F}}}}{dt}=b^{\prime}N_{_{\text{R}}}N_{_{\text{F}}}-a_{_{%
\text{F}}}N_{_{\text{F}}}
\end{equation}
where $N_{_{\text{R}}}$ represents the population of rabbits (R) and $N_{_{%
\text{F}}}$ is the population of foxes (F). The model coefficients $a_{_{%
\text{R}}}$, $a_{_{\text{R}}}$, $b$ and $b^{\prime}$ are assumed to be
positive indicating that the foxes win resources from the rabbits due to
presence of the term $bN_{_{\text{R}}}N_{_{\text{F}}}$ ($b$ and $b^{\prime}$
are generally different since $N_{_{\text{F}}}$ and $N_{_{\text{R}}}$ are
measured in different resource units: in foxes and in rabbits). The
intransitivity of this model is not apparent since this model does not
explicitly mention the environment (E), but the presence of the environment
is important. Indeed if $N_{_{\text{R}}}=1$ and $N_{_{\text{F}}}=0$ at $t=0$%
, then $N_{_{\text{R}}}\rightarrow\infty$ as $t\rightarrow\infty$, which is
physically impossible. In real life, this growth of rabbit population would
be terminated by environmental restrictions: $a_{_{\text{R}}}=a_{_{\text{R}%
}}(N_{_{\text{E}}})\rightarrow0$ and $N_{_{\text{E}}}\rightarrow0$ as $N_{_{%
\text{R}}}\rightarrow(N_{_{\text{R}}})_{\text{max}}$. Here, $N_{_{\text{E}}}$
represents free environmental resource, which is approximately treated as
constant in the original Lotka-Volterra model. Figure \ref{fig2} illustrates
the intransitivity of this case: rabbits win from the environment, while
foxes lose to the environment.
\begin{figure}[H]
\centering
\includegraphics[width=8cm,page=2, clip ]{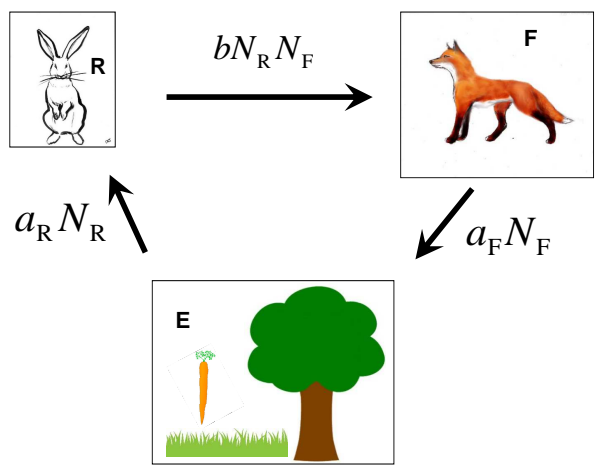}
\caption{Presence of intransitivity in the original Lotka--Volterra model.}
\label{fig2}

\end{figure}
\vspace{-24pt}


\section{Fractional Ranking\label{S4}}

Different options are often characterised by a set of criteria, say, indexed
by $\alpha=1,...,K$.\ A ranking is then specified for each of these criteria 
$r^{(\alpha)}($A$).$ The notations which are used here are similar to those
used in the previous sections. For example all of the following statements 
\begin{equation}
\text{A}\prec^{(\alpha)}\text{B},\text{\ \ }\rho^{^{(\alpha)}}(\text{A,B}%
)<0,\;\;R^{^{(\alpha)}}(\text{A,B})=\func{sign}\left( \rho ^{^{(\alpha)}}(%
\text{A,B})\right) =-1,\;\;r^{^{(\alpha)}}(\text{A})<r^{^{(\alpha)}}(\text{B}%
)
\end{equation}
indicate that B is preferred to A with respect to criterion $\alpha.$ The
word \textit{fractional} (or \textit{partial}) is used to indicate that the
comparison is performed only with respect to a single criterion: $%
r^{^{(\alpha)}}$ is a \textit{fractional ranking}, $\rho^{^{(\alpha)}}$ is a 
\textit{fractional co-ranking}, \textit{etc.} 
(Another equivalent term is ``partial''.
This term is commonly used in calculus but this might be in conflict with
``partial orders'', where
``partial'' is interpreted as
``incomplete''---in partial orders,
preferences are not necessarily defined for all pairs of elements). Of
course, fractional ranking $r^{^{(\alpha)}}$ exists only if elements can be
transitively ordered with respect to criterion $\alpha.$ No criterion alone
determines the \textit{overall preference}: for example, we might have $%
r^{(1)}($A$)>r^{(1)}($B$)$ but $r^{(2)}($A$)<r^{(2)}($B$)$. The fractional
ranking can be either graded or sharp; the former can be called \textit{%
fractional utility}. In this section, we deploy the results of the social
choice theory, where different criteria represent judgment of different
individuals.

\subsection{Commensurable Fractional Rankings}

When fractional rankings are graded they are interpreted as fractional
utilities, \textit{i.e.}, they reflect the degrees of satisfaction with respect to
specific criteria. It is reasonable to expect that these $K$ degrees of
satisfaction can be compared to each other; this means that the fractional
utilities can be rescaled to be measured in the same common ``units of
satisfaction''. \ Hence, when different criteria are commensurable, absolute
utility can be easily introduced according to equations 
\begin{equation}
r(\text{A})=\frac{1}{W}\sum_{\alpha=1}^{K}w^{^{(\alpha)}}r^{^{(\alpha)}}(%
\text{A}),\;\;\;\;W=\sum_{\alpha=1}^{K}w^{^{(\alpha)}}  \label{p_u}
\end{equation}
where the criterion weights $w^{(\alpha)}$ are used to rescale units as
needed. In principle, it is possible to consider cases when fractional
utilities are combined in a non-linear manner but this would not change the
main conclusion of this subsection:

\begin{proposition}
\label{P2}Commensurable fractional utilities correspond to an overall
preference that can be expressed by the absolute utility $r($...$)$ and,
hence, is transitive.
\end{proposition}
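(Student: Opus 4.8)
The plan is to notice that commensurability is exactly the hypothesis needed to turn the $K$ fractional utilities into a single real number attached to each element, after which transitivity is immediate from transitivity of $\leq$ on $\mathbb{R}$ together with the characterisation of absolute ranking in Section~\ref{S1b}. So the work is mostly in making precise what ``commensurable'' gives us, and in checking that nothing is lost when the weighted mean is replaced by a non-linear aggregation.

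First I would spell out the content of commensurability: the degrees of satisfaction $r^{(1)}(\text{A}),\dots,r^{(K)}(\text{A})$ can, after rescaling by the weights $w^{(\alpha)}$, be read on one common ``scale of satisfaction''. Hence the sum on the right of Equation~(\ref{p_u}) is a legitimate addition of like quantities, and $\text{A}\mapsto r(\text{A})$ is a well-defined function from the set of elements into $\mathbb{R}$. The overall preference associated with commensurable fractional utilities is then, by construction, the one that ranks elements by their total satisfaction, i.e.
\begin{equation}
\text{A}\preceq\text{B}\Longleftrightarrow r(\text{A})\leq r(\text{B}).
\end{equation}

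Next I would close the argument. For arbitrary elements A, B, C, if $r(\text{A})\leq r(\text{B})$ and $r(\text{B})\leq r(\text{C})$ then $r(\text{A})\leq r(\text{C})$ since $\leq$ is transitive on the reals; hence $\text{A}\preceq\text{B}\preceq\text{C}\Longrightarrow\text{A}\preceq\text{C}$, which is precisely the transitivity condition, and $r(\cdot)$ is an absolute ranking (a utility) for this preference in the sense of Section~\ref{S1b}. To cover the non-linear case announced after Equation~(\ref{p_u}), replace the weighted mean by any aggregation $r(\text{A})=\Phi\!\left(r^{(1)}(\text{A}),\dots,r^{(K)}(\text{A})\right)$ with $\Phi$ non-decreasing in each argument (so that unanimous fractional preferences are respected). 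Then $r$ is still a real-valued function of the element, so the identical computation applies and the induced overall preference is again transitive; linearity of Equation~(\ref{p_u}) is therefore inessential and only fixes a particular, convenient utility scale.

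The ``hard part'' here is conceptual rather than technical: one must be comfortable asserting that commensurability genuinely places all $K$ fractional utilities on a single numerical axis — rather than merely making their signs comparable — because it is exactly that assertion, and not any nontrivial deduction, that rules out the potential intransitivity of multi-criteria comparison. Once it is granted, transitivity follows in three lines from transitivity of $\leq$ on $\mathbb{R}$; the genuinely delicate situation is the non-commensurable one, which must be treated separately.
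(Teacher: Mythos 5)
Your proposal is correct and follows essentially the same route as the paper, which simply observes that commensurability lets one define the single absolute utility of Equation~(\ref{p_u}) (or any aggregation of the rescaled fractional utilities), after which transitivity is immediate because the preference is represented by a real-valued ranking in the sense of Section~\ref{S1b}. Your explicit remark that any real-valued (e.g., monotone non-linear) aggregation works matches the paper's own comment that a non-linear combination would not change the conclusion.
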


\subsection{Commensurable Fractional Co-Rankings}

There are two cases when utility must be replaced by co-ranking: (1) absolute
fractional rankings do not exist or are unknown and (2) absolute fractional
rankings exist but are incommensurable 
(That is we can compare the magnitudes of partial improvements, say, $%
r^{^{(1)}}($A$)-r^{^{(1)}}($B$)$ and $r^{^{(2)}}($A$)-r^{^{(2)}}($B$)$ but
cannot compare the absolute magnitude, say, $r^{^{(2)}}($A$)$ and $%
r^{^{(2)}}($A$)$. This is a common situation since, as discussed in the next
chapter, human preferences are inherently relativistic). The fractional
preferences can always be expressed by fractional co-rankings, which are
treated in this subsection as graded and commensurable. 
(The relative character of real-world preferences, which is reflected by
co-rankings, is discussed further in the paper. The case of completely
incomparable partial preferences is considered in the following subsection).
The overall co-ranking is expressed in terms of the fractional co-rankings
by the equation 
\begin{equation}
\rho(\text{A,B})=\frac{1}{W}\sum_{\alpha=1}^{K}w^{^{(\alpha)}}\rho
^{^{(\alpha)}}(\text{A,B})  \label{p_ro}
\end{equation}
where the weights $w^{^{(\alpha)}}$ represent the scaling coefficients,
whose physical meaning is similar to that of the weights in Equation~(\ref%
{p_u}). Depending on the functional form of the fractional co-rankings,
three cases are possible (1) fractional and overall co-rankings are
transitive (in this case the fractional and overall utilities exist); (2)
fractional co-rankings are transitive but the overall co-ranking is
intransitive and \mbox{(3)~all~co-rankings} are intransitive. As discussed further
in the paper,\ the second case is common when fractional co-rankings have
non-linear functional forms, which can appear due to imperfect
discrimination or for other reasons.

\subsection{Incommensurable Fractional Preferences}

In this subsection, the case of incommensurable fractional preferences is
considered (irrespective of transitivity of fractional preferences). For
example, it could be the case that $\rho^{(1)}($A$,$B$)$ and $\rho^{(2)}($A$%
, $B$)$ can not be rescaled to produce commensurable quantities. Grading of
fractional rankings or co-rankings becomes useless if different gradings are
incommensurable. The information that can be used in this case is limited to
(1) sharp fractional ranking, if the fractional preferences are transitive or
(2) indicator co-ranking (or sharp fractional co-ranking), if the fractional
preferences are intransitive. The first case in considered first. If sharp
(or incommensurable) fractional absolute rankings are strict, they represent
an \textit{ordering} as discussed in Section \ref{S1b}.

Arrow's theorem \cite{Arrow} states that fractional orderings cannot be
converted into overall ordering in a consistent manner. The conditions of
being consistent are stipulated in the formulation of the theorem, which is
given below and uses notations of the present work.

\begin{theorem}
(Arrow {\em \cite{Arrow}}) For more than two elements, a set of $K$ fractional
orderings cannot be universally converted into an overall ordering in a way
that is:

%
%
\begin{enumerate}
\item[(a)] Non-trivial (non-dictatorial): absolute ranking does not simply
replicate one of the fractional rankings: $r(...)\nsim$ $r^{(\alpha)}(...)$
for all $\alpha;$

\item[(b)] Pairwise independent: preference between any two elements does not
depend on fractional rankings of the other elements, \textit{i.e.}, $R($A$,$B$)$
depends only on all $R^{(\alpha)}($A$,$B$),$ $\alpha=1,...,K;$

\item[(c)] Pareto-efficient: A$\succ$B when $r^{(\alpha)}($A$)>r^{(\alpha)}($B$)$
for all $\alpha.$
\end{enumerate}%
\end{theorem}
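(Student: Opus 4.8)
The plan is to prove the standard reformulation: any aggregation rule $F$ that sends each profile of $K$ strict orderings to a social ordering and satisfies (b) pairwise independence and (c) Pareto-efficiency must be dictatorial, i.e.\ $r(\ldots)\sim r^{(\alpha^{*})}(\ldots)$ for one fixed criterion $\alpha^{*}$, in direct conflict with (a). I use only that the social output is a genuine ordering (hence the social relation is transitive), that the domain is unrestricted (every profile of strict orderings is admissible), and axioms (b), (c). Call a criterion $\alpha$ \emph{decisive} over an ordered pair $(A,B)$ if $A\succ^{(\alpha)}B$ forces $A\succ B$ socially in every profile, and a \emph{dictator} if it is decisive over all ordered pairs.

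First I would establish an \emph{extremal lemma}: if an element $B$ is ranked strictly first or strictly last by every criterion (first for some, last for others is allowed), then $B$ is ranked strictly first or strictly last socially. If not, there are $A,C\neq B$ with $A\succ B\succ C$ socially, hence $A\succ C$ socially. Now have every criterion move $C$ above $A$ while leaving $B$ where it is --- possible precisely because $B$ sits at an extreme for each criterion, so no criterion's $A$-vs-$B$ or $B$-vs-$C$ ranking changes; by (b) the social $A$-vs-$B$ and $B$-vs-$C$ comparisons are unchanged, so still $A\succ C$ socially, while (c) now forces $C\succ A$ socially --- a contradiction. Next, fix $B$, start from a profile with $B$ ranked last by every criterion (so last socially by (c)), and lift $B$ to the top in criteria $1,2,\dots,K$ one at a time. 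By the extremal lemma $B$ is socially first or last at every stage, last at the start and first at the end, so some criterion $n^{*}$ is \emph{pivotal}: just before its move ($\Pi^{-}$) $B$ is socially last, just after ($\Pi^{+}$) $B$ is socially first.

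I would then show $n^{*}$ is decisive over every pair avoiding $B$. Fix $A,C\neq B$ and build a profile $\Pi$ in which $n^{*}$ ranks $A\succ B\succ C$, criteria $<n^{*}$ keep $B$ on top, criteria $>n^{*}$ keep $B$ on bottom, and the $A$-vs-$C$ order of every criterion other than $n^{*}$ is arbitrary. The $A$-vs-$B$ comparisons of $\Pi$ agree criterion by criterion with those of $\Pi^{-}$, so by (b), $A\succ B$ socially in $\Pi$; the $B$-vs-$C$ comparisons of $\Pi$ agree with those of $\Pi^{+}$, so $B\succ C$ socially in $\Pi$; transitivity gives $A\succ C$ socially, and since only $n^{*}$'s $A$-vs-$C$ preference was pinned while all others were free, (b) upgrades this to decisiveness of $n^{*}$ over $(A,C)$. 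Finally a \emph{field-expansion} step promotes local to global decisiveness: for any target ordered pair, bring in a fresh element (this is where ``more than two elements'' is used) and run the same kind of construction, with the comparisons one wants to control now pinned by Pareto rather than read off $\Pi^{\pm}$, so that $n^{*}$'s lone pinned preference together with (c) and transitivity forces the desired social comparison; stripping the auxiliary data by (b) and iterating over all elements makes $n^{*}$ decisive over every ordered pair, i.e.\ $A\succ C\Leftrightarrow A\succ^{(n^{*})}C$ for all $A,C$, so $r(\ldots)\sim r^{(n^{*})}(\ldots)$, contradicting (a).

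I expect the main obstacle to be the profile bookkeeping in the last two steps: the auxiliary profiles must be arranged so that the ``control'' comparisons coincide, criterion by criterion, with those of $\Pi^{-}$, $\Pi^{+}$, or with a Pareto-unanimous pattern, while the ``payload'' comparison stays completely free --- it is meeting all these constraints simultaneously, not any single deduction, that carries the argument. Two lesser points to watch: the construction genuinely needs three or more alternatives (with only two, a majority-type rule satisfies (a)--(c)), which is exactly where the hypothesis enters; and because inputs and output are strict orderings there are no indifference ties to track, so transitivity of the social order may be used without qualification. (An alternative route I would mention but not pursue records the family of decisive coalitions $\mathcal{D}\subseteq 2^{\{1,\dots,K\}}$ and shows that (b), (c) force $\mathcal{D}$ to be an ultrafilter on the finite set $\{1,\dots,K\}$, hence principal --- again a dictator.)
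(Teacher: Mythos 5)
Your proposal is essentially correct, but note that the paper does not prove this statement at all: it is quoted as Arrow's classical impossibility theorem, with a citation to Arrow's \emph{Social Choice and Individual Values}, and the paper's own arguments in that section concern only the surrounding Propositions (which \emph{use} the theorem). So there is no in-paper proof to compare against; what you have written is the standard pivotal-criterion proof in the style of Geanakoplos, and its skeleton is sound. The extremal lemma is argued correctly (the key point, which you state, is that lifting C above A in each criterion leaves every criterion's A-vs-B and B-vs-C comparison untouched because B sits at an extreme, so pairwise independence preserves the social A$\succ$B$\succ$C while Pareto forces C$\succ$A); the pivotal criterion $n^{*}$ exists because B is socially last at the start of the lifting sequence and socially first at the end, with the extremal lemma ruling out intermediate positions; and your bookkeeping for the local-decisiveness profile $\Pi$ does check out (its A-vs-B pattern matches $\Pi^{-}$ criterion by criterion, its B-vs-C pattern matches $\Pi^{+}$, so IIA plus transitivity of the social ordering pins A$\succ$C with only $n^{*}$'s A-vs-C preference constrained). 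The thinnest spot is the field-expansion step, which you describe only in outline; to make it airtight you should spell out one representative case, e.g.\ decisiveness of $n^{*}$ over a pair $(A,B)$ involving the distinguished element B, obtained by introducing a fresh element D, having $n^{*}$ rank A$\succ$D$\succ$B, all criteria rank D$\succ$B (so Pareto gives D$\succ$B socially), and invoking the already-established decisiveness of $n^{*}$ over $(A,D)$ before stripping D by pairwise independence -- this is where the hypothesis of more than two elements does the work, exactly as you flag. You are also right that, in the paper's terminology, ``orderings'' are sharp strict rankings, so your restriction to strict inputs and outputs matches the statement and no indifference cases need tracking; and your concluding identification $r(\ldots)\sim r^{(n^{*})}(\ldots)$ is precisely the negation of condition (a) as the paper formulates it.
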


The underlying reason for the impossibility of Arrow-consistent conversion
(\textit{i.e.}, complying with the three conditions of the Arrow's theorem) of
fractional-criteria orderings into overall ordering is intransitivity. This
point is discussed further below with the use of the following proposition:

\begin{proposition}
\label{P3}Strict fractional preferences (represented by fractional rankings
if transitive or by fractional co-rankings if intransitive) can always be
converted into an overall strict preference in an Arrow-consistent way,
which is (1) non-trivial (for $K>2$), (2) pairwise independent and (3)~Pareto-efficient.
\end{proposition}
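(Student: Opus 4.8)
The plan is to give an explicit majority-type construction and then check Arrow's three conditions in turn; the whole point is that the overall preference produced is allowed to be intransitive, which is precisely the loophole left open by the Arrow theorem and the reason this proposition is compatible with it. In both the transitive and the intransitive fractional case the only pairwise data available is the profile of signs $R^{(\alpha)}(\text{A},\text{B})=\func{sign}\big(\rho^{(\alpha)}(\text{A},\text{B})\big)\in\{-1,+1\}$, where strictness of the fractional preferences is exactly what forbids the value $0$ for $\text{A}\neq\text{B}$. I would define the overall indicator co-ranking by a weighted vote across criteria, in the spirit of Equation~(\ref{p_ro}) but applied to indicator co-rankings,
\begin{equation}
R(\text{A},\text{B})=\func{sign}\Bigg(\sum_{\alpha=1}^{K}w^{(\alpha)}R^{(\alpha)}(\text{A},\text{B})\Bigg),\qquad w^{(\alpha)}>0, \label{p3_def}
\end{equation}
and declare the overall preference by $\text{A}\prec\text{B}\Longleftrightarrow R(\text{A},\text{B})=+1$.

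Next I would fix the weights so that two conditions hold at once: (i) no combination $\sum_{\alpha}\epsilon_{\alpha}w^{(\alpha)}$ with $\epsilon_{\alpha}\in\{-1,+1\}$ vanishes, and (ii) $w^{(\alpha)}<\sum_{\beta\neq\alpha}w^{(\beta)}$ for every $\alpha$. Condition (i) makes the argument of the sign in Equation~(\ref{p3_def}) nonzero whenever $\text{A}\neq\text{B}$, so $R$ is a genuinely \emph{strict} preference; it fails only on a finite union of hyperplanes, hence holds for almost every weight vector. Condition (ii) is an open condition satisfied on a neighbourhood of the equal-weights vector $(1,\dots,1)$ as soon as $K>2$, and it ensures that the other $K-1$ criteria can always outvote criterion $\alpha$. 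Any weight vector close to $(1,\dots,1)$ and off those hyperplanes works; for $K$ odd one may simply take equal weights (ordinary majority rule), and for $K$ even one small perturbation of a single weight suffices.

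The verification of the three properties is then short. Pairwise independence (b) is built in: $R(\text{A},\text{B})$ depends only on $\big(R^{(1)}(\text{A},\text{B}),\dots,R^{(K)}(\text{A},\text{B})\big)$. Pareto efficiency (c): if $r^{(\alpha)}(\text{A})>r^{(\alpha)}(\text{B})$ for every $\alpha$ --- equivalently $R^{(\alpha)}(\text{A},\text{B})=+1$ for every $\alpha$ --- then the sum in Equation~(\ref{p3_def}) equals $\sum_{\alpha}w^{(\alpha)}>0$, so $R(\text{A},\text{B})=+1$, i.e. $\text{A}\prec\text{B}$. Non-triviality (a) for $K>2$: given any criterion $\alpha$ and any distinct pair $\text{A},\text{B}$, pick a profile of strict fractional preferences with $R^{(\alpha)}(\text{A},\text{B})=+1$ but $R^{(\beta)}(\text{A},\text{B})=-1$ for all $\beta\neq\alpha$ (such a profile always exists --- on a set of three or more elements each prescribed pair-sign extends to a strict total order, and in the intransitive case the fractional co-rankings may be prescribed directly); by condition (ii) the weighted sum is negative, so $R(\text{A},\text{B})=-1\neq R^{(\alpha)}(\text{A},\text{B})$, and since $\alpha$ was arbitrary the overall preference replicates none of the fractional preferences, i.e. it is non-dictatorial.

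The only real work, and the step I expect to be the main obstacle, is the bookkeeping around the weights: writing down the two genericity conditions explicitly, checking that they are jointly satisfiable for every $K\ge 3$ (and that they degenerate for $K\le 2$, which is why non-triviality is asserted only for $K>2$ --- for $K=2$ a strict, Pareto-efficient, pairwise-independent rule necessarily lets the heavier criterion dictate), and making clear that this construction does not contradict Arrow's theorem for one reason only: the relation $R$ of Equation~(\ref{p3_def}) need not be transitive and in general is not (it reproduces Condorcet cycles), so no overall ranking $r(\cdot)$ is being claimed, merely an overall preference.
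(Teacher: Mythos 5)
Your construction is essentially the paper's own proof: the paper likewise defines the overall co-ranking as a weighted sum $\rho(\text{A,B})=\frac{1}{W}\sum_{\alpha}w^{(\alpha)}R^{(\alpha)}(\text{A,B})$ and takes $w^{(\alpha)}=1+\varepsilon^{(\alpha)}$ with small random $\varepsilon^{(\alpha)}$ to guarantee strictness, which is exactly your genericity condition (i), while your condition (ii) and the explicit overruled-criterion profile just spell out the non-triviality check the paper states more tersely. Passing to the sign of the sum rather than the sum itself changes nothing in the induced preference, so the argument is correct and matches the paper's route.
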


The proof is straight-forward: the overall co-ranking defined by 
\begin{equation}
\rho(\text{A,B})=\frac{1}{W}\sum_{\alpha=1}^{K}w^{^{(\alpha)}}R^{^{(%
\alpha)}}(\text{A,B})  \label{p_R}
\end{equation}
is non-trivial, pairwise independent and Pareto-efficient. Indeed, (1) $\func{%
sign}(\rho($A,B$))\nsim$ $R^{(\alpha)}($A,B$)$ for any $\alpha$ (we assume $%
K>2$), (2) the formula for $\rho($A,B$)$ does not involve any characteristics
of any third element (say C) and (3) $\rho($A,B$)=1$ when all $R^{(\alpha)}($%
A,B$)=1$. Here we put $w^{(\alpha)}=1+\varepsilon^{(\alpha)},$ where $%
\varepsilon^{(1)},...,\varepsilon^{(K)}$ are small random values, which
ensure that $\rho($A,B$)=0$ only when A$=$B.

The existence of fractional rankings is not essential for this proposition
and Equation~(\ref{p_R}) can be used when the fractional preferences are
intransitive. If fractional co-rankings are transitive and fractional
rankings exist, this, as shown below, cannot ensure transitivity of the
overall co-ranking by Equation~(\ref{p_R}).

\begin{proposition}
\label{P4} Any Arrow-consistent conversion of fractional orderings into an
overall strict preference is potentially intransitive.
\end{proposition}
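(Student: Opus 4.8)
The plan is to deduce Proposition~\ref{P4} as the contrapositive of Arrow's theorem, using Proposition~\ref{P3} only to guarantee that the class of Arrow-consistent conversions is non-empty and to furnish an explicit witness of the intransitivity. The key observation is that an ``Arrow-consistent conversion'' is, by definition, a rule satisfying exactly conditions (a)--(c) of Arrow's theorem, so Arrow's theorem is precisely the obstruction we need.

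First I would fix an arbitrary Arrow-consistent conversion $\Phi$: a rule assigning to every profile $\left(r^{(1)},\dots,r^{(K)}\right)$ of strict fractional orderings of a common set of more than two elements an overall strict preference that is (a)~non-trivial, (b)~pairwise independent and (c)~Pareto-efficient. By Proposition~\ref{P3} at least one such $\Phi$ exists, the canonical one being the strict preference induced by the co-ranking Equation~(\ref{p_R}) with $w^{(\alpha)}=1+\varepsilon^{(\alpha)}$. Next I would invoke Section~\ref{S1b}: a strict preference is transitive if and only if it is an ordering (a sharp strict ranking). Hence ``$\Phi$ is \emph{not} potentially intransitive'' means exactly ``$\Phi$ outputs an ordering for every admissible profile.'' But in that case $\Phi$ would universally convert $K$ fractional orderings into an overall ordering while remaining non-trivial, pairwise independent and Pareto-efficient, which for more than two elements is forbidden by Arrow's theorem. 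This contradiction forces $\Phi$ to produce, for at least one admissible profile, an overall strict preference containing an intransitive triplet in the sense of Equation~(\ref{r_trip}); that is, $\Phi$ is potentially intransitive. Since $\Phi$ was arbitrary, the proposition follows.

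To make the effect concrete on the witness of Proposition~\ref{P3}, I would display the Condorcet profile with $K=3$ criteria and three elements: $\text{A}\succ^{(1)}\text{B}\succ^{(1)}\text{C}$, $\text{B}\succ^{(2)}\text{C}\succ^{(2)}\text{A}$, $\text{C}\succ^{(3)}\text{A}\succ^{(3)}\text{B}$. Each of the pairwise sums $R^{(1)}(\text{A,B})+R^{(2)}(\text{A,B})+R^{(3)}(\text{A,B})$ and its cyclic counterparts equals $+1$, so it is unaffected in sign by the small perturbations $\varepsilon^{(\alpha)}$; Equation~(\ref{p_R}) then gives $\rho(\text{A,B})>0$, $\rho(\text{B,C})>0$ and $\rho(\text{C,A})>0$, i.e.\ the cyclic (hence intransitive) overall preference $\text{A}\succ\text{B}\succ\text{C}\succ\text{A}$, even though every fractional ordering is perfectly transitive.

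The only delicate point I anticipate is bookkeeping rather than computation: one must check that the hypotheses of Arrow's theorem as quoted above (more than two elements; the three consistency conditions) coincide with the definition of an Arrow-consistent conversion, and that the paper's notion of \emph{potential} intransitivity---intransitivity that \emph{can} appear under the considered conditions but need not be revealed on every current set---matches the statement ``there is an admissible profile whose image is intransitive.'' Once these identifications are in place the argument is immediate, and Proposition~\ref{P3} is needed solely to ensure we are not reasoning about an empty class of conversions.
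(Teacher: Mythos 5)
Your argument is correct and is essentially the paper's own proof: both deduce Proposition~\ref{P4} by contraposition from Arrow's theorem, observing that an Arrow-consistent conversion whose output were always transitive (hence an ordering, since the preference is strict) would be exactly the universal conversion that Arrow's theorem forbids. Your added Condorcet profile merely supplies an explicit intransitive witness for the canonical conversion of Proposition~\ref{P3}, which the paper does not spell out but which changes nothing in the logic.
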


Indeed, if the overall Arrow-consistent strict preference of Proposition \ref%
{P3} was necessarily transitive, this preference could be always converted
into an Arrow-consistent ordering, which is impossible according to Arrow's
theorem. Here we refer only to potential intransitivity since the overall
preference might be currently transitive under some specific conditions. For
example, Pareto efficiency requires that A$\succ$B$\succ$C$\prec$A when $%
r^{(\alpha)}($A$)>r^{(\alpha)}($B$)>r^{(\alpha)}($C$)$ for all $\alpha$.

We conclude that intransitivity necessarily appears in overall preference
rules that are consistently derived from a set of incommensurable fractional
criteria. Despite intransitive co-ranking, the elements can still be
transitively ordered by conditional ranking with respect to a selected
reference set. This transitive ordering, however, is in conflict with the
second condition of Arrow's theorem (pairwise independence) due to
dependence of conditional ranking on the reference set in intransitive
systems (see Propositions \ref{P1}, \ref{P1a}, \ref{PA1a} and \ref{PA1aa}).\ The conditions of Arrow theorem
require that the overall preferences cannot have absolutely transitive
co-rankings and, practically, violate at least one of the properties:
transitivity or pairwise independence.

\section{The Subscription Example\label{S5}}

This section is dedicated to detailed analysis of the example ``\textit{The
Economist's} subscription'' used by Dan Ariely \cite{Dan2008} as an
excellent demonstration of the relativity of human preferences: in the real
world, we can express our preferences only in comparison with the other
options available. Classic economic theory sees this kind of behaviour as
irrational. While many examples from Ariely's book ``Predictably
irrational'' \cite{Dan2008} are indeed linked to irrationality of human
behaviour, relativity of our preferences in general and in choosing the 
\textit{The} \textit{Economist's} subscription in particular is perfectly
rational.

Sometimes we can make a choice in absolute terms without resorting to
relative comparison. For example, considering \textit{The Economist's}
subscription, I would reject an annual print subscription for \textit{The
Economist }priced at \$1000 and agree to have this subscription for \$10
without much thought or any further comparisons. One can see that these
statements based on the perceived absolute values of the subscription and
money are quite approximate and, probably, not suitable for the real world.
Realistically, I do not know offhand whether I want \textit{The Economist }%
print\textit{\ }subscription priced at \$120. To make this judgment, I need
to estimate the value of money in terms of utility of published media and
see if the offered price is reasonable or not. I would probably look at
subscriptions for other magazines to make up my mind. In the end, I might
decide that \textit{The Economist} provides good value for the money, or
that I can get a colourful magazine to read at much lower cost. My relative
preference is perfectly rational; in fact, it would be irrational for me to
make up my mind on the basis of the absolute value of money and the absolute
utility of enjoying \textit{The Economist,} without knowing the subscription
market and undertaking relative comparisons.
\newpage
\subsection{Ariely's Subscription Example}

Consider the following options for subscription to \textit{The Economist}

\begin{description}
\item (A) Web (W) subscription, \$60;\vspace{-0.3cm}

\item (B) Print \& Web (P+W) subscription, \$120;\vspace{-0.3cm}

\item (C) Print (P) subscription, \$120
\end{description}

The prices have been slightly adjusted from original \$59 for W and \$125
for P and P+W reported by Ariely \cite{Dan2008} to make evaluation of this
example more simple and transparent.

Ariely \cite{Dan2008} determined that when only two options, A and B, are
given, people tend to make their choices with the following frequencies: 
\begin{equation}
\text{(A) 68\%,\ \ \ (B) 32\%}
\end{equation}

However, when all three options are available, preferences become very
different 
\begin{equation}
\text{(A) 16\%,\ \ \ (B) 84\%,\ \ \ (C) 0\%}
\end{equation}

Although option C is not chosen, it affects the choice between options A and
B. While Ariely\ believes that this is irrational, we argue here that this
is a perfectly rational choice conducted in line with a reasonable
relativistic analysis of the offers. Note that the dependence of the choice
on C cannot be explained within the conventional framework of absolute
preferences (\textit{i.e.}, by any set of absolute utilities assigned to options A, B
and C).

\subsection{Evaluating Co-Rankings}

Since no additional information is given (for example, we have no idea about
realistic costs of the options offered), the choice between the subscription
options can be made rationally only on the basis of comparing these options
to each other. We thus compare the options with respect to the two
fractional criteria, prices $p$ and values $v,$ using three different
grades: the same \textquotedblleft $\sim $\textquotedblright\ \ ($\rho =0$),
better \textquotedblleft $\succ $\textquotedblright\ ($\rho =1$) and clearly
better \textquotedblleft $\succ \succ $\textquotedblright\ ($\rho =2$). The
following estimate of our relative preferences 
\begin{equation}
\$60\succ \succ \$120,\;\;\;\text{P}\succ \text{W, \ \ \ P+W}\succ \succ 
\text{W, \ \ \ P+W}\succ \succ \text{P }  \label{A_pref1}
\end{equation}%
seems reasonable. Note that the first relation is not a mistake: the symbol
\textquotedblleft $\succ $\textquotedblright\ means \textquotedblleft
preferred to\textquotedblright\ and not \textquotedblleft greater
than\textquotedblright . Obviously, we prefer a lower price and $p=\$60$ is
clearly better than $p=\$120.$ Figure \ref{fig3}a shows that our assessment $%
\$60\succ \succ \$120$ corresponds to the price utility 
\begin{equation}
r^{(p)}(\text{A})=3,\;\;r^{(p)}(\text{B})=r^{(p)}(\text{C})=1  \label{A_rp}
\end{equation}%
which can be evaluated from the equation 
\begin{equation}
r^{(p)}=\frac{\$150-p}{\$30}.  \label{A_up}
\end{equation}%

Higher utility corresponds to lower price and any price $\geq \$150$ is not
seen as reasonable.
\begin{figure}[H]
\centering
\includegraphics[width=14cm,page=3, clip ]{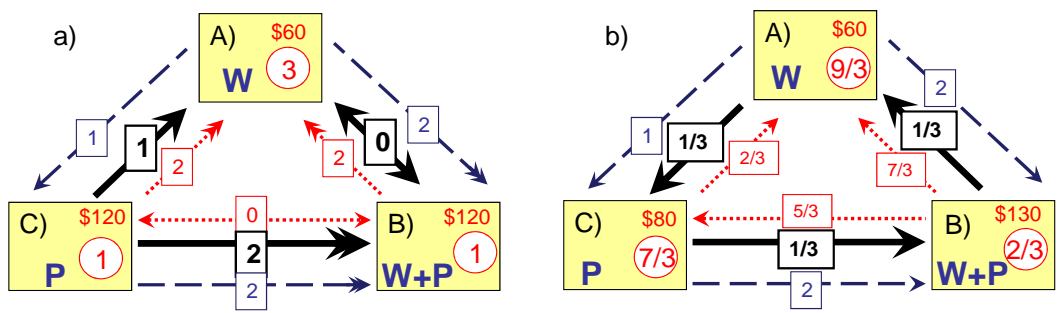}
\caption{Choices A, B and C for  {\it The Economist} subscription \cite{Dan2008}. 
  Dashed blue arrows---value preferences, dotted red arrows---price preference, 
  thick black arrows---overall preferences (shown as the sum of the fractional preferences); the price utility shown in the circles. 
  (\textbf{a}) Original prices; (\textbf{b}) adjusted prices leading to explicit intransitivity}
\label{fig3}

\end{figure}
We concede that the print edition is more convenient than web edition P$%
\succ $W (reading \textit{The Economist} while sitting in an armchair and
having a cup of coffee does have some advantage), although this convenience
is not overwhelming. Having both print and web subscription is clearly
better than any of these subscriptions alone. The result of comparison is
shown in Figure \ref{fig3}a so that the overall co-ranking is given by 
\begin{equation}
2\rho(\text{A,B})=0,\;\;2\rho(\text{A,C})=1,\;\;2\rho(\text{B,C})=2,
\label{A_rho}
\end{equation}
and 
\begin{equation}
\text{A}\sim\text{B}\succ\succ\text{C}\prec\text{A}  \label{A_ABCA}
\end{equation}

As specified by Equation~(\ref{p_ro}), this co-ranking is obtained by summing
fractional co-rankings $\rho^{(v)}$ and $\rho^{(p)}$ with equal weights and $%
W=2.$ When only two options, A and B, are available the conditional ranking
with the reference set of $\mathbb{G}=\{$A$,$B$\}$ is given by 
\begin{equation}
\bar{\rho}(\text{A,\{A,B\}})=0,\;\;\bar{\rho}(\text{B,\{A,B\}})=0
\end{equation}
but when the choice is to be made by selecting from A, B and C, the
conditional ranking using the reference set of $\mathbb{G=}\{$A$,$B$,$C$\}$
becomes 
\begin{equation}
6\bar{\rho}(\text{A,\{A,B,C\}})=1,\;\;6\bar{\rho}(\text{B,\{A,B,C\}})=2,\;\;6%
\bar{\rho}(\text{C,\{A,B,C\}})=-3
\end{equation}
Here, we use Equation~(\ref{r_G}) with all weights $g_{i}$ set to unity.
Hence, we would chose B from $\{$A$,$B$,$C$\}$ but will have difficulty of
selecting between A and B from $\{$A$,$B$\}$. 
(For the sake of our argument, it is sufficient to put $\rho($A,B$)=0$ and
treat the preference between A and B \ as being close to 50\% each. Ariely 
\cite{Dan2008} indicates a marginal preference of A over B, which can be
accommodated by introducing another grade of a preference---``marginally
better'' quantified by, say, 1/3 or 1/2. \ The co-ranking $\rho($A,B$)$ is
thus redefined while the remaining co-rankings in Equation~(\ref{A_rho}) are kept the
without change. If \mbox{$2\rho($A,B$)=1/3$,} then \ A$\prec_{_{\mathbb{G}}}$B
since $2\rho_{_{\mathbb{G}}}($A$)=4/3\;$and \ $2\rho_{_{\mathbb{G}}}($B$%
)=5/3.$ If $2\rho($A,B$)=1/2$, then \ A$\sim _{_{\mathbb{G}}}$B since \mbox{$%
2\rho_{_{\mathbb{G}}}($A$)=2\rho_{_{\mathbb{G}}}($B$)=3/2.$} \ Here, $\mathbb{%
G}=$\{A,B,C\}.
The author has repeated Ariely's experiment in class of 60 students with
half of the class selecting between A and B, while the other half \
selecting between A, B and C. The results \{85\%, 15\%\} and \{35\%, 62\%,
3\%\} clearly confirm the effect discovered by Ariely, although indicate a
higher level of acceptance of electronic communications than a decade~ago).

\subsection{Potential Intransitivity of the Subscription Values}

As suggested by Propositions \ref{P1}, \ref{P1a}, \ref{PA1a} and \ref{PA1aa}, the dependence of the conditional
rankings of A and B on the reference set $\mathbb{G}$ indicates potential
intransitivity of our preference. This intransitivity is not clearly visible
since our preferences Equation~(\ref{A_ABCA}) do not form a strictly intransitive
triplet, but current intransitivity may appear when conditions are altered.
Here we consider a specific example, while a general case is treated in
Appendix \ref{AA1}. The subscription case shown in Figure \ref{fig3}a has a
potentially intransitive value co-ranking and absolutely transitive price
co-ranking. Indeed, current intransitivity can easily appear if we adjust
the prices. Figure \ref{fig3}b indicate that the prices $p_{_{\text{A}%
}}=\$60,$ $p_{_{\text{B}}}=\$130$ and $p_{_{\text{C}}}=\$80$ correspond to
the utilities of 3, 2/3 and 7/3 as specified by Equation~(\ref{A_up}). Our
assessment of the subscription values remains the same as in Figure \ref%
{fig3}a. With the new price utilities, the overall co-ranking becomes 
\begin{equation}
2\rho(\text{A,B})=-1/3,\;\;2\rho(\text{B,C})=-1/3,\;\;2\rho(\text{C,A})=-1/3,
\end{equation}
and our overall preferences given by 
\begin{equation}
\text{A}\prec\text{B}\prec\text{C}\prec\text{A}  \label{A_trip}
\end{equation}
are currently intransitive as shown in Figure \ref{fig3}b.

If we need to get rid of intransitivity, the values of subscriptions have to
be adjusted so that fractional utility $r^{(v)}$ can be introduced and then
the overall utility $r=r^{(v)}+r^{(p)}$ ensures transitivity of our
preferences. Let $r^{(v)}($A$)=1,$ $r^{(v)}($B$)=3$ and $r^{(v)}($C$)=2$.
This corresponds to replacing the last preference in Equation~(\ref{A_pref1}) by P+W$%
\succ $P. This transitive correction does not necessarily represent human
preferences better (in fact P+W$\succ$P is not accurate for me, since I
think that P+W is clearly better than P) but it removes potential
intransitivity.

\subsection{Discussion of the Choices}

Our knowledge of the subscription market and, consequently, our analysis of
the available options given above may be imperfect, but it is not
irrational. It is based on a system of values and on a systematic comparison
between these options---but why does our choice depend on the presence of
option C? This seems to be illogical. We compare A and B and, with the
information available to us, we have difficulties of making a choice between
these options. Option C provides us with additional information that makes
option B more attractive: the web subscription is given to us at no
extra-cost, while the printed version of the magazine has a high cost and,
presumably, high quality and high aesthetic value.

It can be argued that a buyer should care about the value of the product and
the price but not about getting a good deal from a seller. This could be
rational only if the buyer had a complete knowledge of the product and its
future use. In the real world, a responsible buyer checks that he is getting
a reasonable deal even if, in principle, he is prepared to pay more. A buyer
who does not want a good deal is, in fact, irrational---in the real world,
this buyer will be overcharged much too often.

An overzealous deal-seeker is, however, prone to manipulations and to buying
goods and product features that he does not needed. Therefore, the fact that
sellers can be manipulative should not be overlooked. The web subscription
is given for free in option B because its web delivery has a very low cost.
Perhaps, but there could be other reasons. For example let us assume that
the realistic pricing of subscriptions is similar to the prices shown in
Figure \ref{fig3}b. The subscription seller may then lift the price of $p_{_{%
\text{C}}}$ from $\$80$ to $\$130$ to lure his customers into subscribing
for option B. In this case the presence of C in the subscription list is not
information but disinformation. How can the buyers protect themselves
against such manipulations?

In transitive systems, preferences are absolute and independent of
perspective. Proposition \ref{P1} and \ref{P1a} show that, in intransitive systems,
preferences depend on perspective: whether A$\succ_{_{\mathbb{G}}}$B or not
depends on reference set $\mathbb{G}$ and on reference weights $g_{i}$.
Hence, a reasonable choice relies on a good selection of the perspective.
Artificially or unscrupulously selected elements may distort the picture. In
the subscription example, it might be desirable to weight the options by
their estimated market shares. In this case the seller's manipulations with
option C would not have a significant effect on our choice.

Economic theory sees the inherent relativity of our preference as being
irrational. While humans can make irrational choices at times,  relative comparisons are very common and perfectly rational despite being inherently prone to intransitivity.   
In fact, in many cases relative comparisons are the only ones that are practically possible and avoiding them would be irrational.  
Enforcing transitivity does not necessarily make our assessments or theories more accurate but it does make our choices more stable, more
immune from manipulations and easier to predict---transitive preferences are absolute and do not depend on third options.
However, as demonstrated by Ariely's subscription example, a real buyer in
the real world is likely to have a (potentially) intransitive set of
preferences.

\section{Intransitivity Due to Imperfect Discrimination\label{S6}\protect%
\nolinebreak}

Since it is often the case that exact values of the fractional utilities are
not known or, maybe known but, to some extent ignored by decision-makers, we
need to deal with approximate values of the parameters. This, as
demonstrated below, leads to intransitivity.

\subsection{Discrimination Threshold}

In the real world, preferences are typically not revealed whenever
difference in utility values are small, say, smaller than a given threshold $%
\varepsilon $. This corresponds to a \textit{coarse co-ranking} $%
\rho^{^{(\alpha)}}$ defined by 
\begin{equation}
\rho^{^{(\alpha)}}(\text{A,B})=\left\{ 
\begin{array}{cc}
r^{(\alpha)}(\text{A})-r^{(\alpha)}(\text{B}), & \text{if \ }\left|
r^{(\alpha)}(\text{A})-r^{(\alpha)}(\text{B})\right| >\varepsilon^{(\alpha )}
\\ 
0, & \text{if \ }\left| r^{(\alpha)}(\text{A})-r^{(\alpha)}(\text{B})\right|
\leq\varepsilon^{(\alpha)}%
\end{array}
\right.  \label{p_rho_c}
\end{equation}

The corresponding coarsened fractional equivalence is understood as 
\begin{equation}
\begin{array}{cc}
\text{A}\succ^{(\alpha)}\text{B}, & \text{if \ }r^{(\alpha)}(\text{A}%
)-r^{(\alpha)}(\text{B})>\varepsilon^{(\alpha)} \\ 
\text{A}\sim^{(\alpha)}\text{B}, & \text{if \ }\left| r^{(\alpha)}(\text{A}%
)-r^{(\alpha)}(\text{B})\right| \leq\varepsilon^{(\alpha)} \\ 
\text{A}\prec^{(\alpha)}\text{B}, & \text{if \ }r^{(\alpha)}(\text{B}%
)-r^{(\alpha)}(\text{A})>\varepsilon^{(\alpha)}%
\end{array}%
\end{equation}

Fractional co-rankings determine the overall preference according to Equation~(\ref%
{r_rho}). Despite the existence of commensurable fractional utilities and
the overall utility Equation~(\ref{p_u}) for the \textit{fine preference} (\textit{i.e.},
original preference with perfect discrimination), the coarse preference
specified by Equation~(\ref{p_rho_c}) is intransitive and does not have an overall
utility
(Coarsening of partial co-rankings corresponds to coarsening partial
preferences, understood according to coarsening of preferences as defined in
Appendix \ref{AA2}. The overall preferences, however, do not represent
coarsening of the original (fine) overall preferences, at least because the
former can be intransitive while the latter are transitive). The
intransitive properties of coarsening are characterised by the following
proposition due to Yew-Kwang Ng \cite{Ng1977}.

\begin{proposition}
\label{P5}(Ng {\em \cite{Ng1977}}) The overall preferences that correspond to
threshold coarsening of $K$ independent fractional utilities are

%
%
%
\begin{enumerate}

\item[(a)] weakly intransitive (existence of A$\sim$B$\sim$C$\succ$A) if $K=1,$

\item[(b)] semi-weakly intransitive (existence of A$\succ$B$\sim$C$\succ$A) if $K=2$ and $w^{^{(1)}}\varepsilon^{^{(1)}}=w^{^{(2)}}\varepsilon^{^{(2)}},$

\item[(c)] strictly intransitive (existence of A$\succ$B$\succ$C$\succ$A) if $K=2$
and $w^{^{(1)}}\varepsilon^{^{(1)}}\neq w^{^{(2)}}\varepsilon^{^{(2)}}$ or $K\geq3$.
\end{enumerate}%
\end{proposition}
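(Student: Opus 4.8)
The plan is to reduce everything to unit weights and then handle the three parameter ranges, exhibiting explicit three‑element examples for the existence claims and giving short inequality arguments for the accompanying maximality claims (that nothing stronger than the stated form of intransitivity can occur). First I would \emph{normalise away the weights}: since $w^{(\alpha)}\rho^{(\alpha)}(\mathrm A,\mathrm B)$ equals $w^{(\alpha)}\bigl(r^{(\alpha)}(\mathrm A)-r^{(\alpha)}(\mathrm B)\bigr)$ when $\lvert r^{(\alpha)}(\mathrm A)-r^{(\alpha)}(\mathrm B)\rvert>\varepsilon^{(\alpha)}$ and $0$ otherwise, replacing $r^{(\alpha)}$ by $\tilde r^{(\alpha)}=w^{(\alpha)}r^{(\alpha)}$ and $\varepsilon^{(\alpha)}$ by $\tilde\varepsilon^{(\alpha)}=w^{(\alpha)}\varepsilon^{(\alpha)}$ turns $w^{(\alpha)}\rho^{(\alpha)}$ into the coarse co‑ranking of $\tilde r^{(\alpha)}$ at threshold $\tilde\varepsilon^{(\alpha)}$, and by Equation~(\ref{p_ro}) the overall preference is then governed by the sign of $\sum_{\alpha}\tilde\rho^{(\alpha)}$; so one may take all weights equal to $1$ and all thresholds equal to $\tilde\varepsilon^{(\alpha)}$, and the hypothesis of (b) becomes $\tilde\varepsilon^{(1)}=\tilde\varepsilon^{(2)}$. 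For $K=1$ the coarse relation is a semiorder, $\mathrm X\succ\mathrm Y\Leftrightarrow\tilde r(\mathrm X)-\tilde r(\mathrm Y)>\tilde\varepsilon$ and $\mathrm X\sim\mathrm Y\Leftrightarrow\lvert\tilde r(\mathrm X)-\tilde r(\mathrm Y)\rvert\le\tilde\varepsilon$; taking $\tilde r(\mathrm A)=0$, $\tilde r(\mathrm B)=\tfrac{3}{5}\tilde\varepsilon$, $\tilde r(\mathrm C)=\tfrac{6}{5}\tilde\varepsilon$ gives $\mathrm A\sim\mathrm B\sim\mathrm C\succ\mathrm A$, while $\mathrm A\succ\mathrm B$ together with $\mathrm C\succ\mathrm A$ already forces $\tilde r(\mathrm C)-\tilde r(\mathrm B)>2\tilde\varepsilon$ (ruling out $\mathrm B\sim\mathrm C$) and $\mathrm A\succ\mathrm B\succ\mathrm C\succ\mathrm A$ would force $0>3\tilde\varepsilon$, so case (a) is complete.

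For \emph{$K=2$ with common normalised threshold} $\varepsilon$, a semi‑weak cycle is produced on $(\mathrm A,\mathrm B,\mathrm C)$ by $\tilde r^{(1)}=(3\varepsilon,\,0,\,\tfrac12\varepsilon)$ and $\tilde r^{(2)}=(0,\,2\varepsilon,\,2.7\varepsilon)$: checking the six relevant comparisons gives $\rho(\mathrm A,\mathrm B)\propto\varepsilon>0$, $\rho(\mathrm B,\mathrm C)\propto0$, $\rho(\mathrm C,\mathrm A)\propto\tfrac15\varepsilon>0$, i.e.\ $\mathrm A\succ\mathrm B\sim\mathrm C\succ\mathrm A$. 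To exclude a strict cycle $\mathrm A\succ\mathrm B\succ\mathrm C\succ\mathrm A$, observe that each of the three cyclic preferences is a sum of two coarse co‑rankings that is positive, hence is ``won'' (has a strictly positive coarse co‑ranking) by at least one criterion, so by pigeonhole one criterion wins at least two of the three; since in a $3$‑cycle any two preferences are consecutive, relabelling lets us assume criterion $1$ wins $(\mathrm A,\mathrm B)$ and $(\mathrm B,\mathrm C)$. Then $\rho^{(1)}(\mathrm A,\mathrm B)=p>\varepsilon$ and $\rho^{(1)}(\mathrm B,\mathrm C)=q>\varepsilon$ are the true differences, so $\rho^{(1)}(\mathrm C,\mathrm A)=-(p+q)$ and the preference $\mathrm C\succ\mathrm A$ forces $u+v<-(p+q)$ where $u=\tilde r^{(2)}(\mathrm A)-\tilde r^{(2)}(\mathrm B)$ and $v=\tilde r^{(2)}(\mathrm B)-\tilde r^{(2)}(\mathrm C)$; using also $\rho^{(2)}(\mathrm A,\mathrm B)>-p$ and $\rho^{(2)}(\mathrm B,\mathrm C)>-q$ together with $p,q>\varepsilon$, a contradiction results in each of the four cases according to whether $\lvert u\rvert$ and $\lvert v\rvert$ exceed $\varepsilon$. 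This settles case (b).

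For \emph{$K=2$ with $\tilde\varepsilon^{(1)}\neq\tilde\varepsilon^{(2)}$}, say thresholds $1$ and $3$, the assignment $\tilde r^{(1)}=(3,\,\tfrac32,\,0)$ and $\tilde r^{(2)}=(0,\,2,\,4)$ on $(\mathrm A,\mathrm B,\mathrm C)$ has cyclic coarse co‑rankings $\bigl(\rho^{(1)}(\mathrm A,\mathrm B),\rho^{(1)}(\mathrm B,\mathrm C),\rho^{(1)}(\mathrm C,\mathrm A)\bigr)=(\tfrac32,\,\tfrac32,\,-3)$ and $(0,\,0,\,4)$, whence $\rho(\mathrm A,\mathrm B)\propto\tfrac32$, $\rho(\mathrm B,\mathrm C)\propto\tfrac32$, $\rho(\mathrm C,\mathrm A)\propto1$, all positive: a strict cycle $\mathrm A\succ\mathrm B\succ\mathrm C\succ\mathrm A$ (the general unequal‑threshold case is this pattern rescaled, the freedom to do so being exactly what $\tilde\varepsilon^{(1)}\neq\tilde\varepsilon^{(2)}$ supplies). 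For $K\ge3$, use the Condorcet‑type assignment $\tilde r^{(1)}=(2\varepsilon,\varepsilon,0)$, $\tilde r^{(2)}=(0,2\varepsilon,\varepsilon)$, $\tilde r^{(3)}=(\varepsilon,0,2\varepsilon)$ on $(\mathrm A,\mathrm B,\mathrm C)$ with common threshold $\varepsilon$ and all further criteria constant: every criterion registers only its extreme pair (difference $2\varepsilon$), its two adjacent pairs (difference $\varepsilon$) being coarsened to $0$, and summing gives $\mathrm B\succ\mathrm A$, $\mathrm C\succ\mathrm B$, $\mathrm A\succ\mathrm C$, again strictly intransitive.

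The main obstacle is the maximality half of (b): showing that two criteria with a common normalised threshold cannot force a strict $3$‑cycle. The pigeonhole reduction is the crux, but one must notice that in a $3$‑cycle any pair of preferences is consecutive, so the configuration can always be normalised to ``criterion $1$ wins $(\mathrm A,\mathrm B)$ and $(\mathrm B,\mathrm C)$'', and then the four‑way case analysis has to be carried out with care because the cases fail for genuinely different reasons — the $\le\varepsilon$ bound on within‑threshold gaps in two of them, a direct estimate of $u+v$ in another, and the defining inequalities $\rho^{(2)}(\mathrm A,\mathrm B)>-p$, $\rho^{(2)}(\mathrm B,\mathrm C)>-q$ in the last.
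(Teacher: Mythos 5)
Your constructions and exclusions are correct as far as they go, but your route differs from the paper's: the paper does not really prove Proposition~\ref{P5} at all --- it attributes the result to Ng \cite{Ng1977} and merely \emph{illustrates} the existence claims geometrically (Figure~\ref{fig5}a for the weak and semi-weak triplets and the strict triplet with unequal thresholds, Figure~\ref{fig5}b for the $K=3$ Condorcet-type pattern, which is essentially your $(2\varepsilon,\varepsilon,0)$ cyclic assignment with $\varepsilon=1.5$). Your weight normalisation $\tilde r^{(\alpha)}=w^{(\alpha)}r^{(\alpha)}$, $\tilde\varepsilon^{(\alpha)}=w^{(\alpha)}\varepsilon^{(\alpha)}$ is sound, your four explicit examples all check out numerically, and your exclusion arguments --- the telescoping bounds for $K=1$ and the pigeonhole-plus-four-case analysis showing that two criteria with a common normalised threshold admit no strict $3$-cycle --- are valid and correctly isolate where the hypothesis $w^{(1)}\varepsilon^{(1)}=w^{(2)}\varepsilon^{(2)}$ enters (via $p+q>2\tilde\varepsilon$ and the $\pm\tilde\varepsilon$ bounds on hidden gaps). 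So for the existence half you are doing algebraically what the paper does pictorially, and for the maximality half you supply arguments the paper simply delegates to the cited reference.

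One caveat if the classification of Appendix~\ref{AA4} is read literally: ``semi-weakly intransitive'' there requires not only the absence of a strict triplet but the absence of \emph{semi-strict} intransitivity, i.e.\ of strict cycles A$\succ$C$_{1}\succ\dots\succ$C$_{k}\succ$A of any length. Your pigeonhole step does not extend to $k\geq 3$, because in a longer cycle the two edges won by the same criterion need not be consecutive (e.g.\ alternating winners around a $4$-cycle), so case (b) as you prove it only rules out $3$-cycles. The stronger statement is true and provable by a global counting argument around the cycle: with $P_{\alpha}$, $H_{\alpha}$, $N_{\alpha}$ the edges on which criterion $\alpha$ is registered positive, hidden, or registered negative, every edge lies in $P_{1}\cup P_{2}$, every edge of $N_{1}$ lies in $P_{2}$ with $y_{i}>-x_{i}$ (and symmetrically), and summing these inequalities together with $\sum_{i}x_{i}=\sum_{i}y_{i}=0$ over the whole cycle yields $|H_{1}|+|H_{2}|>|P_{1}\setminus N_{2}|+|P_{2}\setminus N_{1}|$, contradicting $H_{1}\subseteq P_{2}\setminus N_{1}$ and $H_{2}\subseteq P_{1}\setminus N_{2}$. (For $K=1$ the analogous extension is the trivial telescoping you already use.) Adding this closes the only gap between your argument and the full statement.
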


The details of the definitions characterising intransitivity can be found in
Appendix \ref{AA4}. The proof is illustrated by Figure \ref{fig5}. The case A%
$\sim$B$^{\prime}\sim$C$^{\prime}\succ$A can be found in Figure \ref{fig5}a
(the second coordinate, $r^{(2)},$ can be ignored in this case). The
semi-weakly intransitive triplet A$\succ$B$\sim$C$^{\prime\prime}\succ$A in
the same figure does not depend on the direction of the red line of constant
fine utility. Finally, the points A$\succ$B$\succ$C$\succ$A form a strictly
intransitive triplet (note that C must be above the red line). In the
three-dimensional case shown in Figure \ref{fig5}b, we select $\varepsilon
^{^{(\alpha)}}=1.5$, \ $\alpha=1,2,3$, hence $1\sim^{(\alpha)}2\sim^{(\alpha
)}3\succ^{(\alpha)}1.$ Most values in the table are equivalent, while the
three strict preferences of 3 over 1 are shown by the arrows. It is easy to
see that the listed points form a strictly intransitive triplet A$\succ $B$%
\succ$C$\succ$A.
\begin{figure}[H]
\centering
\includegraphics[width=14cm,page=4, clip ]{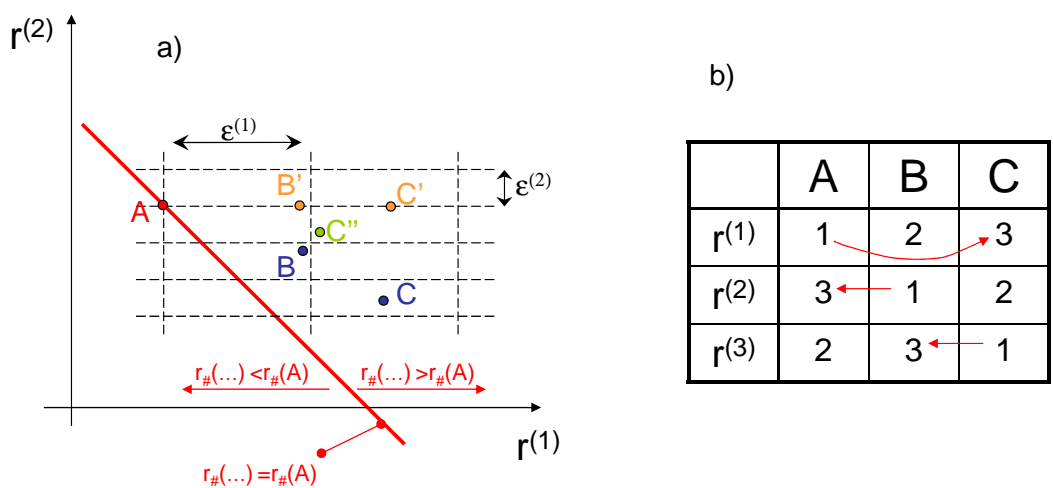}
\caption{Intransitivity of fractional (partial) selection criteria when these criteria are known approximately: (\textbf{a}) case of two criteria (\textbf{b}) case of three criteria. }
\label{fig5}

\end{figure}

Practically, coarsening in multidimensional cases becomes strictly
intransitive. The cases without strict intransitivity are degenerate: either
dimensions are redundant or coarsening is performed after merging the
fractional variables into the overall utility (instead of independent
coarsening for all or some of the criteria). From a\ philosophical
perspective, this statement can be presented as a continuum argument for
intransitivity \cite{Temkin1996}: small alterations are commonly overlooked
for secondary parameters but can be accumulated into critical differences.

\subsection{Imperfect Discrimination Due to the Presence of Noise}

If the exact value of utility $r$ is not known (which is often the case in
the real life), this can be expressed by adding a random variable $\xi ,$
which is assumed to be Gaussian, to the utility 
\begin{equation}
y=r+\xi /2^{1/2},\;\;P_{\xi }(\xi )=\frac{\exp \left( -\frac{\xi ^{2}}{%
2\sigma ^{2}}\right) }{\sigma \left( 2\pi \right) ^{1/2}}
\end{equation}%

The value $y$ is a measured, perceived or known estimate of the unknown
value $r$. When comparing A and B we need to estimate $\Delta r=r($A$)-r($B$%
) $ from known $\Delta y=y($A$)-y($B$).$ We can write $\Delta y=\Delta r+\xi 
$, where a difference of two independent Gaussian random values is shown as
a Gaussian random value, $\xi $. While obviously $\Delta r=\left\langle
\Delta y\right\rangle ,$ averages cannot be evaluated from a single
measurement---the preference must be evaluated deterministically on the
basis of known value $\Delta y.$ It is clear that small changes of $\Delta y$
should be ignored (\textit{i.e.}, $\Delta r\approx 0$) since $\left\vert \Delta
y\right\vert \ll \sigma $ is likely to be induced by the random noise $\xi .$%
\ In this case the sign of $\Delta y$\ does not tell us much about the sign
of $\Delta r$. If, however, $\left\vert \Delta y\right\vert \gg \sigma ,$
then $\Delta r\approx \Delta y$ with a high degree of certainty so that $%
\Delta y$ and $\Delta r$ are most likely to have the same sign. The
magnitude $\sigma $ of the noise is presumed to be known. While coarsening Equation~(%
\ref{p_rho_c}) implements these ideas abruptly (all or nothing), it is clear
that our confidence in estimating $r$ increases gradually as $y$ increases.

The estimate $\Delta y$ of the value $\Delta r$ is considered to be
satisfactory if 
\begin{equation}
\left\vert \Delta r-\Delta y\right\vert \leq\beta\;\Delta y
\end{equation}
with a sufficiently small $\beta$. In this case our preference is modelled
with the use of the following function 
\begin{equation*}
F(\Delta r,\Delta y)=\left\{ 
\begin{array}{cc}
0, & \text{if \ }\left\vert \Delta r-\Delta y\right\vert >\beta y \\ 
\Delta r, & \text{if \ }\left\vert \Delta r-\Delta y\right\vert \leq\beta y%
\end{array}
\right.
\end{equation*}

This function, which is illustrated in Figure \ref{fig6}, coincides with graded co-ranking $\Delta r=r($A$)-r($B$)$ when
our estimate is satisfactory and is set to zero otherwise, \textit{i.e.},
unsatisfactory estimates are ignored. The average of this function is 
\begin{equation}
\bar{F}(\Delta y)=\left\langle F(\Delta r,\Delta y)\right\rangle
=\int\limits_{\Delta y(1-\beta)}^{\Delta y(1+\beta)}\Delta rP_{\xi}\left( \Delta y-\Delta
r\right) d\Delta r=\Delta y\func{erf}\left( \frac{\left\vert \Delta
y\right\vert }{\varepsilon}\right)
\end{equation}
where $\varepsilon=2^{1/2}\sigma/\beta$. The function $\bar{F}(\Delta y)$
represents $\Delta y$ multiplied by a factor representing reliability of $%
\Delta y$ giving a satisfactory estimate for $\Delta r$, \textit{i.e.}, $\bar{F}%
(\Delta y)$ is the reliable fraction of $\Delta y$. This models our
inclination to ignore small $\Delta y=y($A$)-y($B$)$ and accept large $%
\Delta y$ while comparing A and B.

Assuming that the measured values $y^{(\alpha)}$ of the fractional utilities 
$r^{(\alpha)}$ are different from the true values due to presence of some
random noise, we are now compelled to define the fractional co-ranking by 
\begin{equation}
\rho^{^{(\alpha)}}(\text{A,B})=\bar{F}\left( y^{(\alpha)}(\text{A}%
)-y^{(\alpha)}(\text{B})\right) =\rho_{0}^{^{(\alpha)}}(\text{A,B})\func{erf}%
\left( \frac{\left\vert \rho_{0}^{^{(\alpha)}}(\text{A,B})\right\vert }{%
\varepsilon^{(\alpha)}}\right)  \label{c_rho}
\end{equation}
where $\rho_{0}^{^{(\alpha)}}($A,B$)=y^{(\alpha)}($A$)-y^{(\alpha)}($B$)$,
is the fine-graded co-ranking, which does not take into account the presence
of the noise (\textit{i.e.}, is based upon $\Delta r=\Delta y$), and $\varepsilon
^{(\alpha)}$ is determined by the intensity of noise in direction $\alpha$.
The fine co-ranking and two coarse co-rankings that correspond to threshold
coarsening Equation~(\ref{p_rho_c}) and Gaussian coarsening Equation~(\ref{c_rho}) with $%
\varepsilon^{(\alpha)}=1$ are shown in Figure \ref{fig1m}a. The effect of
smooth coarse grading on intransitivity is qualitatively similar to the
threshold case:

\begin{figure}[H]
\centering
\includegraphics[width=10cm,page=5, clip ]{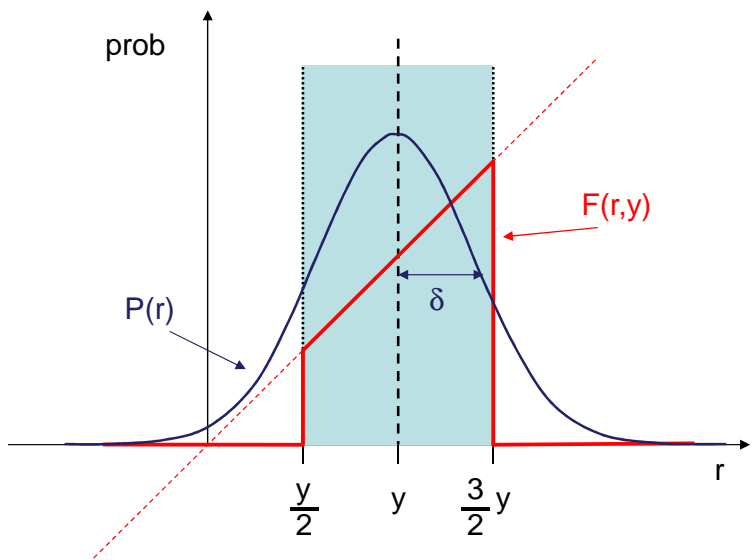}
\caption{Estimating the reliable fraction of a utility criterion in presence of Gaussian noise.} 
\label{fig6}
\end{figure}

\begin{proposition}
Gaussian coarsening in multiple dimensions $K>1$ leads to strict
intransitivity provided that not all $w^{^{(\alpha)}}\varepsilon^{^{(%
\alpha)}}$ are the same.
\end{proposition}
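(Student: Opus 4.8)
The plan is to imitate, for the smooth Gaussian coarsening of Equation~(\ref{c_rho}), the geometric mechanism behind the threshold case of Proposition~\ref{P5}(c): a triple whose pairwise steps are ``large'' in one dominant criterion and heavily discounted in the other, with the dominant criterion cycling (as in Figure~\ref{fig5}). First I would reduce to two criteria. Since ``not all $w^{^{(\alpha)}}\varepsilon^{^{(\alpha)}}$ are the same'' means exactly that there is a pair, say $\alpha=1,2$, with $w^{^{(1)}}\varepsilon^{^{(1)}}\neq w^{^{(2)}}\varepsilon^{^{(2)}}$, I would build a three-element set $\{$A,B,C$\}$ on which every fractional utility $y^{^{(\gamma)}}$ with $\gamma\neq1,2$ is constant, so those criteria drop out of Equation~(\ref{p_ro}); and, absorbing the (positive) weights by the substitution $\tilde y^{^{(i)}}=w^{^{(i)}}y^{^{(i)}}$ ($i=1,2$), I would note that on this set $\func{sign}\,\rho(\text{X,Y})=\func{sign}\,\Phi\big(\tilde y(\text{X})-\tilde y(\text{Y})\big)$, where $\tilde y=(\tilde y^{^{(1)}},\tilde y^{^{(2)}})$, $\Phi(\mathbf{w})=g_{1}(w_{1})+g_{2}(w_{2})$, $g_{i}(x)=x\func{erf}(|x|/\varepsilon_{i})$ and $\varepsilon_{i}:=w^{^{(i)}}\varepsilon^{^{(i)}}$ with $\varepsilon_{1}\neq\varepsilon_{2}$; relabelling, assume $\varepsilon_{1}<\varepsilon_{2}$. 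The difference vectors $\tilde y(\text{X})-\tilde y(\text{Y})$ around the triple may be prescribed freely subject only to summing to zero.

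Second, I would record the features of $\Phi$ that drive the argument. Each $g_{i}$ is odd, $C^{\infty}$, strictly increasing, satisfies $0<g_{i}(x)<x$ for $x>0$, equals $\tfrac{2}{\sqrt{\pi}}x^{2}/\varepsilon_{i}+O(x^{4})$ near $0$ and $x+o(1)$ as $x\to\infty$. Hence in the quadrant $w_{1}>0>w_{2}$ the indifference set $\{\Phi=0\}$ is the graph $w_{2}=-\psi(w_{1})$, where $\psi>0$ solves $g_{1}(w_{1})=g_{2}(\psi(w_{1}))$, with $\Phi>0$ strictly above the graph and $\Phi<0$ strictly below it. From the two asymptotics of the $g_{i}$ one gets $\psi(\delta)/\delta\to\sqrt{\varepsilon_{2}/\varepsilon_{1}}>1$ as $\delta\to0^{+}$ and $\psi(L)-L\to0$, hence $\psi(L+\delta)-\psi(L)\to\delta$, as $L\to\infty$; the strict inequality $\varepsilon_{1}\neq\varepsilon_{2}$ is precisely what makes the near-origin slope of $\psi$ exceed $1$ (when $\varepsilon_{1}=\varepsilon_{2}$ one has $\psi(w_{1})\equiv w_{1}$, a straight line, and $\func{sign}\,\Phi=\func{sign}\,(w_{1}+w_{2})$ gives a transitive preference).

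Third, I would exhibit the triple. Put $\tilde y(\text{B})=(0,0)$, $\tilde y(\text{A})=(L,\,-(\psi(L)-\eta))$ and $\tilde y(\text{C})=(-\delta,\,\psi(\delta)-\eta')$, and set $c_{0}:=\tfrac12\big(1+\sqrt{\varepsilon_{2}/\varepsilon_{1}}\big)$, which lies strictly between $1$ and $\sqrt{\varepsilon_{2}/\varepsilon_{1}}$. Choose $\delta>0$ so small that $\psi(\delta)>c_{0}\delta$; then $L$ so large that $\psi(L+\delta)-\psi(L)<c_{0}\delta$; then $\eta,\eta'>0$ small enough that still $(\psi(L)-\eta)+(\psi(\delta)-\eta')>\psi(L+\delta)$, which is possible since $\psi(\delta)>c_{0}\delta>\psi(L+\delta)-\psi(L)$. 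Now $\tilde y(\text{A})-\tilde y(\text{B})=(L,-(\psi(L)-\eta))$ lies strictly above the graph, so $\rho(\text{A,B})>0$; $\tilde y(\text{B})-\tilde y(\text{C})=(\delta,-(\psi(\delta)-\eta'))$ lies strictly above the graph, so $\rho(\text{B,C})>0$; and $\tilde y(\text{A})-\tilde y(\text{C})=(L+\delta,\,-((\psi(L)-\eta)+(\psi(\delta)-\eta')))$ lies strictly below the graph by the chosen inequality, so $\Phi\big(\tilde y(\text{A})-\tilde y(\text{C})\big)<0$ and, by oddness of $\Phi$, $\rho(\text{C,A})>0$. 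Thus A$\succ$B$\succ$C$\succ$A with all three preferences strict, i.e.\ strict intransitivity in the sense of Proposition~\ref{P5}(c).

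I expect the main obstacle to be the second step: justifying the qualitative shape of the implicitly defined curve $\psi$ rigorously --- in particular its near-origin slope $\sqrt{\varepsilon_{2}/\varepsilon_{1}}$ and the relation $\psi(L)-L\to0$, both of which come from the power series and the exponential tail of $\func{erf}$ --- and then ordering the scales ($\delta$, then $L$, then $\eta,\eta'$) so that the three strict inequalities survive the lower-order corrections; the remaining verifications are routine. The argument also shows the hypothesis is essential: when all $w^{^{(\alpha)}}\varepsilon^{^{(\alpha)}}$ agree the indifference curve degenerates to a straight line and the construction collapses, exactly as the equality $w^{^{(1)}}\varepsilon^{^{(1)}}=w^{^{(2)}}\varepsilon^{^{(2)}}$ downgrades strict to semi-weak intransitivity in Proposition~\ref{P5}.
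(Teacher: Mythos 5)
Your construction is correct, but it is a genuinely different—and substantially stronger—argument than the one the paper gives. The paper does not prove this proposition analytically: it asserts that the effect of smooth coarse grading is ``qualitatively similar to the threshold case'' and then points to Figure \ref{fig1m}b, which exhibits the intransitive triplet A$\succ$B$\succ$C$\succ$A graphically for the single parameter choice $\varepsilon^{(1)}=1$, $\varepsilon^{(2)}=1/10$ by plotting the indifference curves X$\sim$A and X$\sim$B of the Gauss-coarsened co-ranking. Your proof is, in effect, an analytic and general version of that picture: after reducing to the two criteria with $w^{^{(1)}}\varepsilon^{^{(1)}}\neq w^{^{(2)}}\varepsilon^{^{(2)}}$ (freezing the others, which drop out of Equation~(\ref{p_ro}) since Equation~(\ref{c_rho}) vanishes at zero difference) and absorbing the weights, you identify the indifference set $g_{1}(w_{1})+g_{2}(w_{2})=0$ as a curve $w_{2}=-\psi(w_{1})$ whose slope is $\sqrt{\varepsilon_{2}/\varepsilon_{1}}>1$ near the origin but tends to $1$ at infinity, and you exploit exactly this curvature ($\psi(\delta)>c_{0}\delta$ for small $\delta$ versus $\psi(L+\delta)-\psi(L)<c_{0}\delta$ for large $L$) to place a triple with two ``above-curve'' steps whose sum lands below the curve. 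I checked the sign computations and the ordering of the choices ($\delta$, then $L$, then $\eta,\eta'$); they are consistent, and the two asymptotics of $\psi$ you flag follow routinely from the quadratic behaviour of $x\func{erf}(|x|/\varepsilon)$ near zero and its exponentially fast approach to $x$ at infinity. What your route buys is a proof valid for every $K>1$ and every unequal pair $w^{^{(\alpha)}}\varepsilon^{^{(\alpha)}}$, together with a clean explanation of why the hypothesis is needed (equal $w\varepsilon$ degenerates the indifference set to the line $w_{1}+w_{2}=0$, i.e., an absolute utility, hence transitivity—slightly stronger than the semi-weak residue in the threshold case of Proposition~\ref{P5}); what the paper's route buys is brevity and a visual demonstration, at the cost of establishing only one numerical instance.
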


Figure \ref{fig1m}b demonstrates intransitivity A$\succ$B$\succ$C$\succ$A
for the case when coarsening occurs along the first direction (that is $%
\varepsilon^{(1)}=1$ and $\varepsilon^{(2)}=1/10$).
\begin{figure}[H]
\centering
\includegraphics[width=\textwidth,page=1, clip]{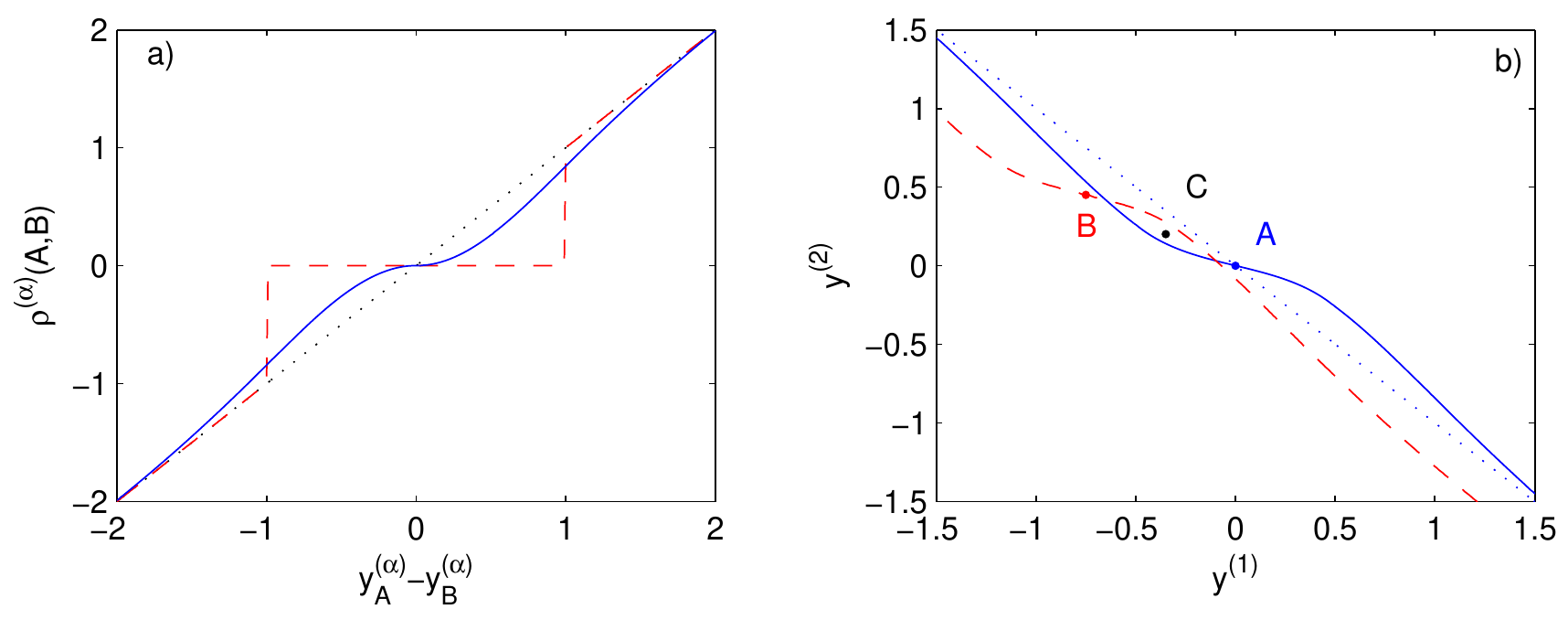}
\caption{(\textbf{a}) Co-ranking functions \textit{\textit{vs.}} fractional (partial) utility: dotted \mbox{line---original} \mbox{fine-graded;} dashed line---threshold-coarsened; 
solid line---Gauss-coarsened; \mbox{(\textbf{b}) Intransitivity} due to coarsening in two dimensions:  
solid line (blue) represents elements X that X$\sim$A and dashed line (red) represents elements X that X$\sim$B for  
Gauss-coarsened co-rankings with $\varepsilon^{(1)}=1$ and $\varepsilon^{(2)}=1/10$. 
Dotted (blue) line  represents elements X that X$\sim$A for \mbox{fine-graded~co-ranking. }
}
\label{fig1m}
\end{figure}
\vspace{-24pt}

\section{Risks and Benefits\label{S7}}

The problem considered in this section has two main parameters that
represent different quantities and are not trivially combinable into a
single value. One of these parameters has a greater uncertainty or is
contaminated by random noise. While this consideration is generic, we
interpret these parameters as risk and benefit. This is determined by three
factors. First, as discussed in the introduction, balancing risk and benefit
has been investigated in various contexts (decision-making under
uncertainty, personal preferences, portfolio management, \textit{etc.}). Second, risk
and benefit are not commensurable (at least not in a trivial or obvious
manner). The benefit is defined as the average pay-off so that increasing or
decreasing risk does not affect it directly. Given the benefit (which is
presumed to always have a positive utility), in real life people can be
risk-adverse or risk-seeking depending on the situation but, in this work,
we treat risk as a detrimental factor having negative utility. Third,
benefits (which are linked to mean values) are typically known with less
uncertainty than the associated risks (which are linked to stochastic
variances).

The fact that people tend to ignore small increments in risk has been
noticed in many publications. 
(Here we refer only to small increments of risk---most people are
over-sensitive to small risks in \mbox{comparison} to absence of any risk.
Typically people are risk-seeking for small probabilities of gains and
substantial probability of losses but the same people are risk-aversive for
small probabilities of losses and substantial probability of gains \cite%
{risk2013}). Rubinstein \cite{Rubinstein1988} suggested that the Allais
paradox is linked to common treatment of close probabilities as being
equivalent and noted that intransitivities are likely to appear in this
case, undermining the existence of utility. Leland \cite{Leland2002} noted
limited abilities of individuals to discriminate close probabilities.
Lorentziadis \cite{Lorentz2013} introduced division of probabilities into
ranges to reflect coarsened treatment of probabilities. This approach
requires an individual to discriminate very close probabilities located of
different sides of the range divides (this does not seem realistic but
preserves transitivity). Here, we follow these works and assume that the
risk is known to us with a substantial degree of uncertainty.

\subsection{Hidden Degradation}

It is often the case that seemingly positive incremental developments are
accumulated to create problems or malfunctions. This is impossible in
transitive systems (due to the absolute nature of transitive improvements)
but, if intransitivity is present, then an obvious improvement in one
parameter (e.g., higher benefit) may be accompanied by a tacit decrease in
performance with respect to another parameter (e.g., increasing risk). The
problem occurs when the risk becomes too high and the system malfunctions or
collapses. Tacit loss of competitiveness is called hidden degradation. In general, 
evolution of an intransitive system may result in competitive escalation or
competitive degradation. The degradation can be explicit or hidden \cite
{K-PS2012,K-PT2013,K_Ent2014a}.

Consider the following example: in order to improve the performance of
industrial turbines, the manufacturers commonly cut technical margins for
operational conditions of the components. This does not make the turbines
unsafe and does improve their efficiency. Competition between manufacturers
forces each of the competitors to cut margins further and further to reach
higher and higher efficiency producing turbines that become more and more
sensitive to fuel, servicing and other requirements. As discussed above, we
immediately acknowledge the increase in performance but hardly notice any
tiny increments in risk associated with reducing the margins. However, these
increments can gradually accumulate into vulnerability of the product. An
unexpected change in conditions (which can be very small in magnitude---a
different fuel, for example) can cause a malfunction or even make the
technology impractical. In intransitive conditions, competition and gradual
apparent improvements may result in a collapse due to accumulated negative
effects, which are collectively referred to as ``risk''. This is
intransitivity in action: each new design is better than the previous one
and, yet, one day the latest and seemingly best design fails miserably and
gives way to alternative technology.

It is interesting that knowledge of the treacherous nature of intransitive
competition does not always allow us to avoid its unwanted consequences. For
example a cautious turbine manufacturer deciding not to improve the
efficiency of its turbine is likely to be forced out of business well before
any intransitive effects will come into play.

Similar effects can be found in biology, economics and other disciplines.
For example, as the capacity for economic growth becomes more and more
saturated, investors have to undertake higher and higher risks to uphold
their profits. Accumulation of invisible risks makes the market unstable,
and, one day, the market collapses. There might be an external factor that
triggers the collapse, but the fundamental reason that makes this collapse
possible is the intransitivity of economic competition.

\subsection{Competitive Simulations for Risk-Benefit Dilemma}

This dilemma has two independent variables: (undesirable) risk $y^{(1)}$ and
(desirable) benefit $y^{(2)}$. In a simple transitive model, there exists a
1:1 trade off between the risk and the benefit according to co-ranking
defined by 
\begin{equation}
\rho_{0}(\text{A,B})=\left( y_{_{\text{A}}}^{(2)}-y_{_{\text{B}%
}}^{(2)}\right) -\left( y_{_{\text{A}}}^{(1)}-y_{_{\text{B}}}^{(1)}\right)
\label{s_rho0}
\end{equation}

It is obvious that, due to its linear form, co-ranking Equation~(\ref{s_rho0})
implies the absolute utility, which can be written~as 
\begin{equation}
r_{0}(\text{A})=y_{_{\text{A}}}^{(2)}-y_{_{\text{A}}}^{(1)}  \label{s_u0}
\end{equation}

Existence of the absolute utility $r_{0}($A$)$ indicates the transitivity of
this case.

If there is some uncertainty in evaluating the risk, the co-ranking takes
another form in accordance with Equation~(\ref{c_rho}) 
\begin{equation}
\rho(\text{A,B})=\left( y_{_{\text{A}}}^{(2)}-y_{_{\text{B}}}^{(2)}\right)
-\left( y_{_{\text{A}}}^{(1)}-y_{_{\text{B}}}^{(1)}\right) \func{erf}\left( 
\frac{\left| \left( y_{_{\text{A}}}^{(1)}-y_{_{\text{B}}}^{(1)}\right)
\right| }{\varepsilon}\right) ^{k}  \label{s_rho1}
\end{equation}
with $\varepsilon=1$ and $k=1$. Here we put $k=1$ but, in principle, $k$ can
be set to other values, thus changing the degree of coarsening. If $k=0$,
then Equation~(\ref{s_rho1}) coincides with Equation~(\ref{s_rho0}). If $k>0$, than the
co-ranking becomes intransitive and the corresponding absolute utility does
not exist.

In the simulations, the competing elements are represented by particles (the
same as particles used in modelling of reacting flows). The particles
compete according to the preferences specified by the co-ranking functions.
The losers are assigned the properties of the winners subject to mutations,
which are predominately directed towards $y^{(1)}=y^{(2)}=0$. The details
can be found in previous \mbox{publications~\cite{K-PT2013, K-PS2012, K_Ent2014a,
K_JSSSE}.} The gray area in Figure \ref{fig9} indicates prohibited space. The
boundary is Pareto-efficient: it is impossible to increase the benefit
without increasing the associated risk.
\begin{figure}[H]
\centering
\includegraphics[width=0.7\textwidth,page=8, clip]{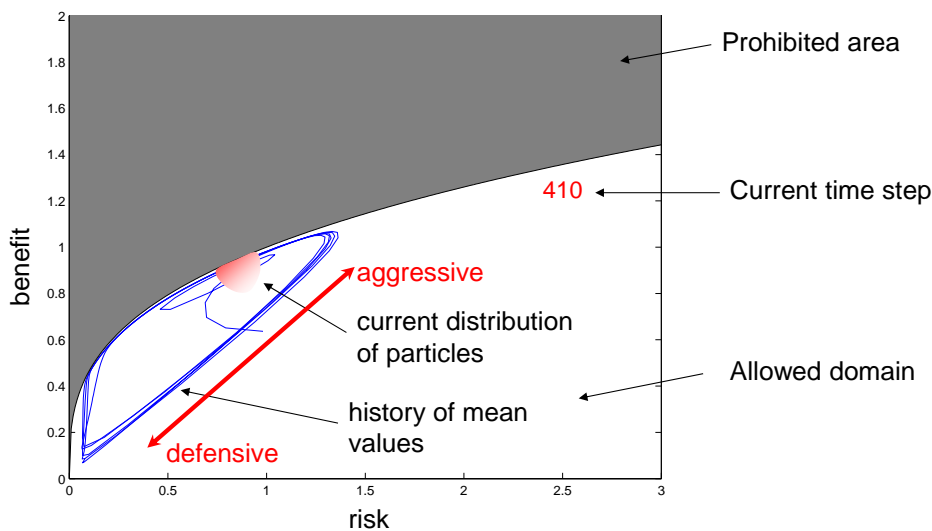}
\caption{Schematic of the simulation domain for the risk-benefit dilemma.}
\label{fig9}

\end{figure}

\begin{figure}[H]
\centering
\includegraphics[width=14cm,page=2, clip]{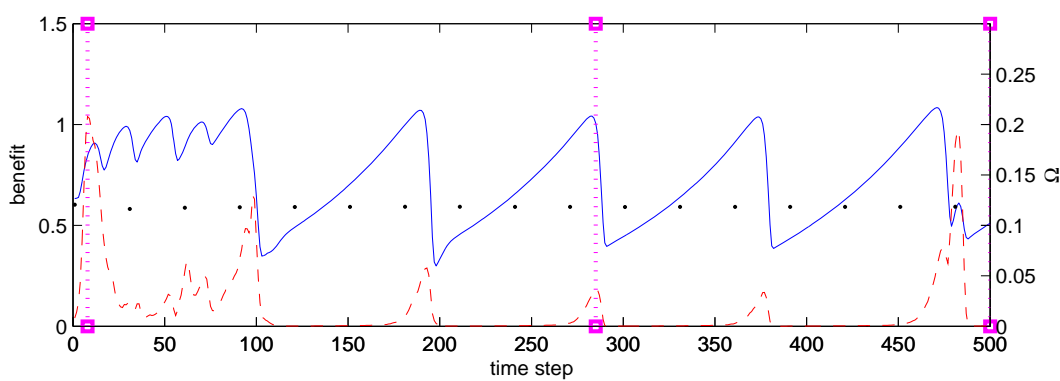}
\caption{Simulations of the risk-benefit dilemma. 
 Scale on the left: solid line \mbox{(blue)---average} benefit,  dots (black)---equilibrium state in the corresponding transitive competition. 
 Scale on the right: dashed line (red)---evolutionary intrasitivity parameter $\Omega$ (see Appendix \ref{AA3}).
 Vertical lines: cases shown in Figure \ref{fig3m}. }
\label{fig2m}

\end{figure}

The simulations are qualitatively similar to previous simulations \cite%
{K_Ent2014a, K_JSSSE} with a power-law \mbox{representation} of the co-ranking
functions. The simulations start from arbitrary conditions but promptly
(within $\sim$20 time steps) relax to quasi-equilibrium states that may
continue to evolve. These initial evolutions are similar in transitive and
intransitive cases and, as shown in \cite{K_Ent2014a}, these cases have
similar quasi-equilibrium distributions (although the present simulations
show more variations). The transitive cases Equation~(\ref{s_rho0}) quickly reach
equilibrium and stay in the equilibrium forever. The intransitive cases Equation~(\ref%
{s_rho1}) continue to evolve cyclically: the benefit and risk grow until the
system collapses into a defensive state of low benefit and low risk. Figure %
\ref{fig2m} indicates the existence of two periods of around 100 steps and
30 steps respectively in intransitive simulations with co-ranking defined by
Equation~(\ref{s_rho1}). There is a switch to the dominant mode at 100 steps. The
dots indicate the equilibrium state, which is necessarily achieved in
transitive simulations with co-ranking defined by Equation~(\ref{s_rho0}). The dashed
lines in Figures~\ref{fig2m} and \ref{fig3m} demonstrate the apparent
similarity of intransitive simulations to a transitive preference (although
not given by Equation~(\ref{s_rho0})). This similarity disappears prior to
collapses of the high-benefit states of the system. The details are given in
Appendix \ref{AA3}.

\begin{figure}[H]
\centering
\includegraphics[width=16cm,page=3]{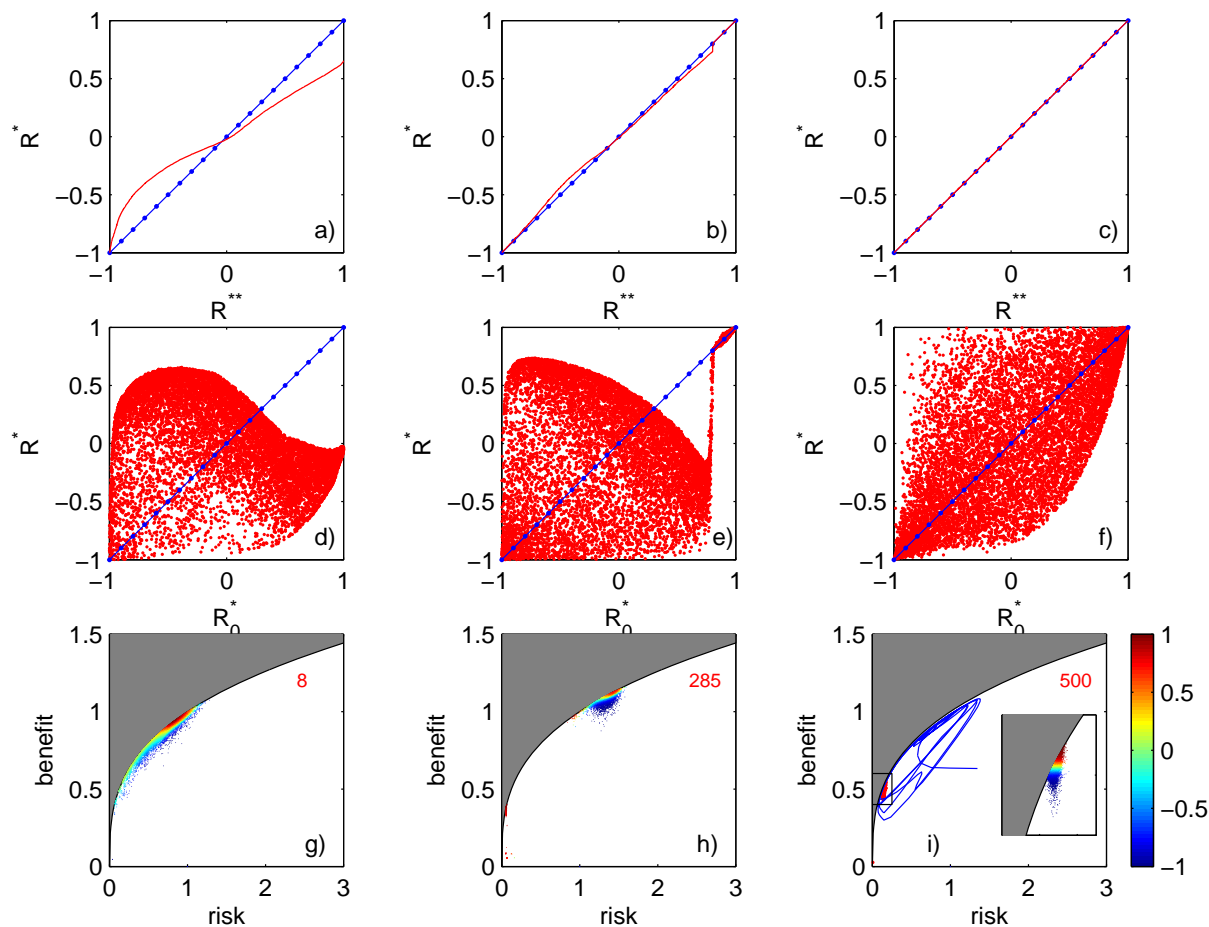}
\caption{ The case as in Figure \ref{fig2m}: (\textbf{a},\textbf{d},\textbf{g}) at 8 time steps; (\textbf{b},\textbf{e},\textbf{h}) at 285 time steps; (\textbf{c},\textbf{f},\textbf{i}) at 500 time steps.
Plots a,b,c: Primary current ranking $R^\ast$ \textit{vs.} secondary current ranking $R^{^{\ast\ast}}$ (solid red line). The cases shown are the same as in Figure \ref{fig3m}.
The solid blue line with dots corresponds to transitive or effectively transitive case where primary and secondary rankings coincide (see Appendix \ref{AA2a}).
Plots d,e,f: Intransitive primary current ranking $R^\ast$ evaluated for co-ranking $\rho$ specified by Equation~(\ref{s_rho1}) \textit{vs.} 
transitive primary current ranking $R_0^\ast$ evaluated for co-ranking $\rho_0$ specified by Equation~(\ref{s_rho0}). 
Deviations from the solid line with dots indicate differences 
between current rankings based on $\rho$ and on $\rho_0$.
Plots g,h,i: Domain snapshots. Colour shows the primary current ranking $R^\ast$ for each particle according to the colour bar. Plot i: blue line is a 500-step history of mean values;
enlarged box is shown as~insert.    
}
\label{fig3m}

\end{figure}
\vspace{-24pt}
\section{Thermodynamics and Intransitivity\label{S8}}

Physical thermodynamics is fundamentally transitive due to restrictions
imposed by the second and zeroth laws. For example, $T_{\text{A}}>T_{\text{B}%
}$ and $T_{\text{B}}>T_{\text{C}}$ require that $T_{\text{A}}>T_{\text{C}}.$
The concept of negative temperatures is compliant with the laws of
thermodynamics and does not alter transitivity. 
Negative temperatures are placed above positive temperatures (for example $%
T=-300$ K is hotter than \mbox{$T=300$~K}) but all temperatures are still linearly
ordered according to $-1/T.$ That is $T=+0$ K is the lowest possible
temperature, while $T=-0$ K is the highest possible temperature (see \cite%
{K-OTJ2012}). If +0 K were identical to $-$0 K (which is not the case), then the
thermodynamic temperatures would be intransitive). Hence, the constraints of
physical thermodynamics allow for cyclic or intransitive behaviour only in
thermodynamically open systems. There is no evidence of any kind that the
laws of thermodynamics are violated in complex evolutionary processes
(biological or social). Increase of order in a system is always compensated
by dispersing much larger quantities of entropy. The question that is often
discussed in relevant publications \cite{EIE1988} is not the letter but the
spirit of the thermodynamic laws---the possibility of explaining complex
evolutions using thermodynamic principles. Such explanations can be referred
to as apparent thermodynamics (\textit{i.e.}, thermodynamics-like behaviour explained
with the use of the theoretical machinery of thermodynamics).

\subsection{Transitive Competitive Thermodynamics}

As noted above, complex evolutionary systems are closer to thermodynamic
systems placed in an environment than to isolated thermodynamic systems. The
latter are characterised by maximisation of entropy while the former are
characterised by minimisation of Gibbs free energy or by maximisation of
free entropy, which, effectively, is Gibbs free energy taken with the
negative sign. Typical free energy and free entropy have two terms:
configurational and potential. In competitive systems, the configurational
terms reflect existence of chaotic mutations, while the potential terms
reflect the ordered competitiveness (or fitness, or utility), which
represents the propensity of elements to survive (theory and examples can be
found in previous publications \cite{K-PS2012,K-PT2013,K_Ent2014a}). Here we
refer to effective potentials that reflect competitiveness of elements
placed into certain conditions (or environment). As in conventional
thermodynamics, equilibrium is the balance between chaos (configurational
terms) and order (potential terms). This approach explains transitive
effects such as evolution directed towards increase of fitness, which is
characterised by increase in apparent entropy. The process of competition is
also characterised by production of physical entropy at least because
information is continuously destroyed when the properties of losers are
discarded.

While physical thermodynamics is always transitive, the same cannot be
stated \textit{a priori} about apparent~thermodynamics.

\subsection{Nearly Transitive Systems}

Here we refer to the systems as intransitive but their evolution remains
close, in one way or another, to evolution of transitive systems. These
cases are shown in Figure \ref{fig7} as: (a) locally intransitive systems,
where large-scale preference is transitive or (b) globally intransitive
systems, where transitivity is preserved locally. In case (a), we can treat
evolution as transitive if local details of the evolution can be neglected.
In this approximate consideration, we treat elements A$_{1}$ B$_{1}$ and C$%
_{1}$ as equivalent and distinguish only transitive preferences 
\begin{equation}
\text{A}_{1}\sim\text{B}_{1}\sim\text{C}_{1}\succ\text{A}_{2}\sim\text{B}%
_{2}\sim\text{C}_{2}\succ...
\end{equation}

This corresponds to the \textit{transitive closure} of the original
preference \cite{K-PS2012}.
\begin{figure}[H]
\centering
\includegraphics[width=15cm,page=6, clip ]{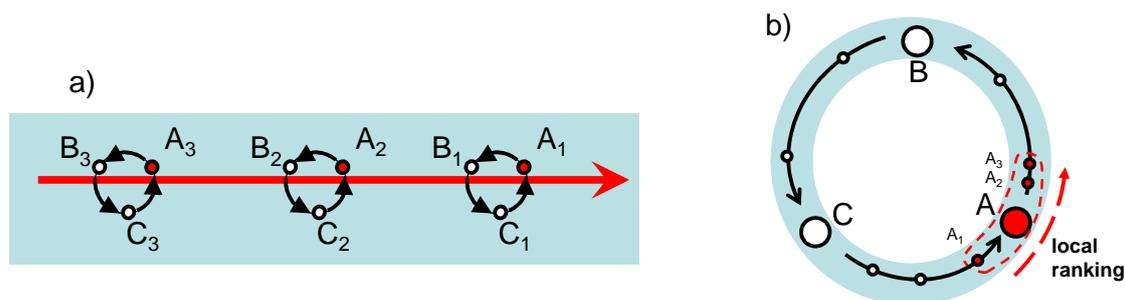}
\caption{Intransitive systems that display a degree of similarity with transitive systems: 
 (\textbf{a}) locally intransitive globally transitive system and (\textbf{b}) locally transitive globally intransitive~system.}
\label{fig7}

\end{figure}
In case (b), the properties of conventional thermodynamics are preserved as
long as our consideration is confined to regions where conditions are
transitive, but intransitive cycles can appear on larger scales. Consider
competitive system shown in Figure \ref{fig7}b where the competition between
A, A$_{1},...,$A$_{3}$ is transitive and unique ranking is possible within
the region around element A. Under conditions specified in \cite%
{K-PS2012}, this local competition can be characterised by competitive
potentials $\chi$, which are analogous to chemical potentials of
conventional thermodynamics taken with a negative sign. However, the
competition is intransitive if we look at larger scales: A$\prec$B$\prec$C$%
\prec$A. If three identical systems have elements A, B and C as leading
particles, the competitive potentials are related to each other by 
\begin{equation}
\chi_{_{\text{A}}}\prec\chi_{_{\text{B}}}\prec\chi_{_{\text{C}}}\prec
\chi_{_{\text{A}}}
\end{equation}

Note that we cannot write $\chi_{_{\text{A}}}<\chi_{_{\text{B}}}<\chi_{_{%
\text{C}}}<\chi_{_{\text{A}}}$ since this does not make mathematical sense.
Competitive potentials are numbers in transitive systems but they must be
treated as more complex objects in intransitive systems.

\subsection{Strong Intransitivity}

If strictly intransitive triplets are both local (or dense---can be found
in any small vicinity of every point) and global, we identify this
intransitivity as strong. The properties of such systems differ from
properties of conventional thermodynamic systems, and the benefits of using
thermodynamical analogy for stochastic theories of strongly intransitive
systems are not clear~\cite{K-PS2012,K-PT2013}. In these systems, the
similarity with conventional thermodynamics becomes quite remote and the
term ``kinetics'' seems to be more~suitable.

Complex kinetics of strongly intransitive competitive systems must at least
account for

\begin{enumerate}
\item rapid relaxation to a quasi-steady state (which, possibly, can be
approximately treated as transitive) with subsequent slow evolution;

\item the possibility of alternative directions of evolution (\textit{i.e.},
competitive degradation or \mbox{competitive~escalation);}

\item violation of Boltzmann's \textit{Stosszahlansatz} (the hypothesis of
stochastic independence of the system elements) and formation of cooperative
structures.
\end{enumerate}

Further examples and discussion of competitive thermodynamics and its
limitations can be found in~\cite{K-PS2012,K-PT2013,K_Ent2014a}.

\section{Discussion and Conclusions\label{S9}}

This work endeavours to introduce a consistent approach to treat the problem
of intransitivity, which is relevant to many different fields of knowledge
that have to deal with preferences and/or competition. In general, a
theoretical treatment of competition and preferences allows for two
alternative self-consistent frameworks: absolute and relativistic. The first
framework presupposes the existence of absolute characteristics that are
commensurable and known exactly. This framework is fundamentally transitive
and more simple from the perspective of theoretical analysis. The second
framework is based on relative characteristics. In this framework,
assessment of performance of the elements depends on perspective and
intransitivity is common. Many real world effects that cannot be explained
within the absolute framework (and thus are commonly viewed as irrational or
abnormal), become perfectly logical in the relativistic framework.

The focus of this work is the existence of key links between intransitivity
and dependence of conditional preferences on the reference group. Our main
conclusion is that intransitivity must be typical for systems involving
preferences and competition since intransitivity appears under any of the
following conditions, all of which are common in the real world:

\begin{itemize}
\item relative comparison criteria or

\item multiple comparison criteria that are incommensurable or 

\item multiple comparison criteria that are known approximately or 

\item comparisons of groups of comparable elements.
\end{itemize}

In some disciplines (e.g., social choice, population ecology),
intransitivities have been acknowledged and studied but in other disciplines
(law, economics) they tend to be ignored. While intransitivity represents
rather special cases in law that legal professionals need to be aware of,
intransitive behaviours seem to be ubiquitously present in economic systems.
This does not mean, of course, that assumptions of transitivity are always
erroneous, but large economic systems tend to behave in a typically
intransitive manner (e.g., with cycles and collapses). It is difficult to say
if intransitivity of personal preferences is the main factor causing this
behaviour 
(There could be other reasons. For example, intransitivity of competition,
which is not related to intransitivity of personal preferences of the
consumers, has been detected in the American car industry \cite{K_JSSSE}). 
It is worthwhile to note that intransitive systems can behave so that their
short-term evolution seems perfectly transitive---this effect corresponds
to transitive theories. This point is well illustrated by the competitive
simulations of the risk/benefit dilemma: segments of this evolution seem to
be transitive, until a point is reached where the system collapses and then
behaves cyclically, which is typical of intransitive competitions and
impossible in transitive competitions. Therefore the main purpose of this
work is not in criticising assumptions of transitivity, which can be quite
reasonable in many circumstances, but in pointing out that these assumptions
are excessively restrictive to deal with more complex phenomena. We have all
indications that emergence of complexity in competitive systems is linked to
intransitivity. Disciplines that dismiss intransitivity outright as being
``irrational" or ``non-scientific" are likely to experience an increasing
pressure of a paradigm shift in coming years.

Transitive systems are characterised by absolute quantities (e.g., absolute
rankings). Absolute quantities do not exist in intransitive systems, which
have to operate with relative quantities, such as \mbox{co-rankings.}
Thermodynamics (here we refer to both physical thermodynamics and
\mbox{thermodynamics-like} theories) is primarily equipped to deal with transitive
phenomena and the physical thermodynamics is fundamentally transitive.
Although apparent thermodynamics may allow for a limited presence of
intransitivity, it seems that the strongly intransitive effects should be
studied by the relevant complex kinetics, which is at the very beginning of
its development. The main question is whether common features observed in a
number of complex systems of different physical nature can be analysed at a
generic level while neglecting less significant case-specific details. The
author of this work believes that the answer to this difficult question is
positive and that it requires consideration of evolving systems with
competition based on generally intransitive preferences. The following
patterns of behaviour are indicative of presence of intransitivity in
competitive systems:

\renewcommand{\labelitemi}{$\bullet$}

\begin{itemize}
\item relativity of strength and dependence of preference on perspective

\item cyclic behaviour instead of relaxation to a unique equilibrium 

\item relatively slow evolutions punctuated by sudden collapses and changes%

\item complex patterns of behaviour (e.g., cooperative structures).
\end{itemize}

It must be noted that there exist psychological barriers in accepting
intransitivity as a realistic phenomenon, at least in some fields. This has
delayed investigation of complex effects, although reasons for this are
understandable. The transitive world is black and white: fit versus unfit,
strong versus weak and good versus bad---this is a simplification that can
be useful as it is more easily amenable to theoretical analysis or
explanation. The intransitive world is much more colourful: what is good or
bad depends on perspective and on the current situation, and possible
situations have endless varieties. While theoretical advances are more
difficult in intransitive systems, there lays the explanation of the
complexities of the real world.


\acknowledgements{Acknowledgments} \ The author thanks D.A. Klimenko for
many fruitful discussions and a number of useful corrections. The author
acknowledges funding by the Australian Research Council.

\appendix
\section*{\noindent Appendix}
\setcounter{figure}{0}
\renewcommand\thefigure{A\arabic{figure}}
\setcounter{equation}{0}
\renewcommand\theequation{A\arabic{equation}}

\setcounter{proposition}{0}
\renewcommand\theproposition{A\arabic{proposition}}
\section{Potential Intransitivity\label{AA1}}

It is possible to have a \textit{potentially intransitive} co-ranking $%
\rho(...,...)$ that defines a transitive preference on a given set of
elements but cannot be represented by transitive relation Equation~(\ref{r_diff}). In
this case there always exists an absolutely transitive \mbox{co-ranking} that
defines an equivalent preferences on this set. Indeed, transitivity of the
preference allows for the introduction of an\ absolute ranking and then the
transitive \mbox{co-ranking} is determined by Equation~(\ref{r_diff}). While co-rankings
that cannot be represented by Equation~(\ref{r_diff}) may define a transitive
preference on the current set of elements (\textit{current transitivity}),
the underlying rules for these preferences can be expected to be
intransitive and intransitivity can be revealed when these co-rankings are
used in a different context or conditions. This is illustrated by the
following~example.

Consider a transitive fractional preference between a set three elements 
\begin{equation}
\text{A}\preceq ^{(1)}\text{B}\preceq ^{(1)}\text{C}\succeq ^{(1)}\text{A}
\label{tc_trip}
\end{equation}%
which corresponds to the co-ranking%
\begin{equation}
\rho ^{(1)}(\text{B,A})\geq 0,\;\rho ^{(1)}(\text{C,B})\geq 0,\text{\ }\rho
^{(1)}(\text{C,A})\geq 0  \label{tc_co}
\end{equation}%

If $\delta =0$ in the relation 
\begin{equation}
\delta _{1}=\delta ^{(1)}(\text{C,B,A})=\rho ^{(1)}(\text{C,B})+\rho ^{(1)}(%
\text{B,A})+\rho ^{(1)}(\text{A,C})=\rho _{t}^{(1)}(\text{C,A})-\rho ^{(1)}(%
\text{C,A})  \label{tc_del}
\end{equation}%
where 
\begin{equation}
\rho _{t}^{(1)}(\text{C,A})=\rho ^{(1)}(\text{C,B})+\rho ^{(1)}(\text{B,A})
\end{equation}%
is absolutely transitive approximation for $\rho ^{(1)}($C,A$)$, then
fractional co-ranking Equation~(\ref{tc_co}) can be represented by Equation~(\ref{r_diff}) in
terms of the following absolute fractional ranking 
\begin{equation}
r^{(1)}(\text{A})=0,\;r^{(1)}(\text{B})=\rho ^{(1)}(\text{B,A}),\;r^{(1)}(%
\text{C})=\rho _{t}^{(1)}(\text{C,A})
\end{equation}%

However, if $\delta _{1}\neq 0$, this representation is impossible. In this
case setting $\rho ^{(1)}($C,A$)\geq 0$ to $\rho _{t}^{(1)}($C,A$)\geq 0$
and ensuring that $\delta _{1}=0$ would remove potential intransitivity but
alter the magnitude of our preference. Note that a triplet Equation~(\ref{tc_trip})
with $\delta _{1}\neq 0$ can always be found in an arbitrary set with a
potentially intransitive co-ranking (otherwise we may define $r($B$)=r($A$%
)+\rho ($B,A$)$ for fixed A and arbitrary B).

While intransitivity is not currently visible in this example, it can appear
when co-ranking Equation~(\ref{tc_co}) is used in a different way. The preference Equation~(%
\ref{tc_trip}) does not have any rules specified for other elements, but we
can investigate how this preference is combined with other fractional
preferences that are strictly transitive. Consider a fractional ranking for
the second criterion defined by%
\begin{equation}
r^{(2)}(\text{A})=-\epsilon-\rho^{(1)}(\text{A,B}),\;r^{(2)}(\text{B}%
)=0,\;r^{(2)}(\text{C})=\epsilon-\rho^{(1)}(\text{C,B})
\end{equation}

This ranking corresponds to the fractional co-ranking 
\begin{equation}
\rho^{(2)}(\text{B,A})=\epsilon-\rho^{(1)}(\text{B,A}),\;\rho^{(2)}(\text{C,A%
})=-\rho^{(1)}(\text{B,A})-\rho^{(1)}(\text{C,B})+2\epsilon ,\;\rho^{(2)}(%
\text{C,B})=\epsilon-\rho^{(1)}(\text{C,B})  \label{tc_22}
\end{equation}

If the overall co-ranking is defined as the average $\rho=(\rho^{(1)}+%
\rho^{(2)})/2$ then we obtain 
\begin{equation}
\rho(\text{B,A})=\epsilon/2,\;\;\rho(\text{C,B})=\epsilon/2,\;\;\rho (\text{%
A,C})=\delta_{1}/2-\epsilon  \label{tc_int}
\end{equation}

If $\delta_{1}>0,$ then selecting $\delta_{1}/2>\epsilon>0$ determines an
overall preference that is currently intransitive. 
\begin{equation*}
\text{A}\prec\text{B}\prec\text{C}\prec\text{A}
\end{equation*}

The case $\delta_{1}<0$ is treated in a similar manner. If $\delta_{1}=0$,
intransitivity cannot appear. This discussion can alternatively be
summarised in form of a short proposition:

\begin{proposition}
\label{PAA1}A potentially intransitive co-ranking can always be represented
as a superposition of a currently intransitive co-ranking and an absolutely
transitive co-ranking.
\end{proposition}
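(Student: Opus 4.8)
The plan is to make precise the decomposition asserted by Proposition~\ref{PAA1}
and to verify it using the explicit construction already carried out in this
appendix. First I would fix notation: let $\rho$ be a potentially intransitive
co-ranking on a set of elements, \textit{i.e.}, $\rho$ defines a currently
transitive preference but cannot be written in the form Equation~(\ref{r_diff}).
I want to exhibit two co-rankings $\rho_{\mathrm{int}}$ and $\rho_{\mathrm{tr}}$
on the same set such that $\rho=\rho_{\mathrm{int}}+\rho_{\mathrm{tr}}$, where
$\rho_{\mathrm{int}}$ is currently intransitive (some triplet satisfies
A$\prec$B$\prec$C$\prec$A for the preference it induces) and $\rho_{\mathrm{tr}}$
is absolutely transitive (representable by Equation~(\ref{r_diff}) through some
absolute ranking $r_{\mathrm{tr}}$).

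The key steps, in order, are as follows. \textbf{Step 1: locate a defective
triplet.} Since $\rho$ is potentially but not currently intransitive, there
must exist a triplet A, B, C with
$\delta(\text{C,B,A})=\rho(\text{C,B})+\rho(\text{B,A})+\rho(\text{A,C})\neq 0$;
otherwise, as noted just above the proposition, one could define
$r(\text{B})=r(\text{A})+\rho(\text{B,A})$ for fixed A and arbitrary B and
recover an absolute ranking, contradicting potential intransitivity. Write
$\delta_{1}=\delta(\text{C,B,A})$. \textbf{Step 2: split off a transitive
part.} Define $\rho_{\mathrm{tr}}$ to be the absolutely transitive co-ranking
induced by the absolute ranking that agrees with $\rho$ along the chain
A$\to$B$\to$C, \textit{i.e.}, $r_{\mathrm{tr}}(\text{A})=0$,
$r_{\mathrm{tr}}(\text{B})=\rho(\text{B,A})$,
$r_{\mathrm{tr}}(\text{C})=\rho(\text{C,B})+\rho(\text{B,A})$, and extend
$r_{\mathrm{tr}}$ to all remaining elements by
$r_{\mathrm{tr}}(\text{X})=\rho(\text{X,A})$; then
$\rho_{\mathrm{tr}}(\text{Y,X})=r_{\mathrm{tr}}(\text{Y})-r_{\mathrm{tr}}(\text{X})$.
\textbf{Step 3: identify the remainder.} Set
$\rho_{\mathrm{int}}=\rho-\rho_{\mathrm{tr}}$. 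By construction
$\rho_{\mathrm{int}}$ vanishes on every pair except those involving C together
with B or A, and on those it reduces to the residual recording
$\delta_{1}$; concretely $\rho_{\mathrm{int}}(\text{C,A})=-\delta_{1}$ while
$\rho_{\mathrm{int}}(\text{C,B})=\rho_{\mathrm{int}}(\text{B,A})=0$, so the
triplet A, B, C realises A$\sim$B$\prec$C together with C$\prec$A (or the
mirror situation when $\delta_{1}<0$), which is an intransitive triplet for
the preference induced by $\rho_{\mathrm{int}}$. \textbf{Step 4: check
antisymmetry and consistency.} Both $\rho_{\mathrm{tr}}$ and $\rho_{\mathrm{int}}$
are antisymmetric (the former automatically as a difference of rankings, the
latter as a difference of two antisymmetric functions), so the decomposition
is a legitimate superposition of co-rankings in the sense of
Equation~(\ref{p_ro}) with equal weights. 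One may optionally perturb the weights
slightly, as in the proof of Proposition~\ref{P3}, to make the intransitive
piece strict, matching the strict chain displayed before the proposition.

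The main obstacle I anticipate is not the algebra but pinning down the
\emph{definitional} content: I must be careful that the triplet produced in
Step~3 genuinely counts as ``currently intransitive'' under the paper's
conventions, since the natural residual gives a weakly intransitive
configuration A$\sim$B$\sim$C$\prec$A rather than the strict cycle
A$\prec$B$\prec$C$\prec$A. To get strict intransitivity one has to either
lean on the already-computed example with the second criterion $\rho^{(2)}$
(Equations~(\ref{tc_22})--(\ref{tc_int})), which explicitly yields
A$\prec$B$\prec$C$\prec$A once $\delta_{1}/2>\epsilon>0$, or absorb an
$\epsilon$-perturbation into $\rho_{\mathrm{int}}$ and a compensating
$-\epsilon$ into $\rho_{\mathrm{tr}}$ so that the sum is unchanged while each
piece keeps its required character. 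Either route is routine once the bookkeeping
is set up, so the proof amounts to little more than restating the construction
of this appendix in the compact ``superposition'' language of the proposition.
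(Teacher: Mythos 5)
Your construction is essentially the paper's own proof: pick a triplet with $\delta_{1}\neq 0$, subtract an absolutely transitive co-ranking built from the chain A$\to$B$\to$C, and extend the transitive part to the remaining elements---indeed your $\epsilon$-shifted variant is exactly the paper's decomposition $\rho_{0}=\rho_{1}+\rho_{2}$ with $\rho_{1}=-\rho^{(2)}$ and $\rho_{2}=2\rho$ for $0<\epsilon<\delta_{1}/2$. One inessential slip: the remainder $\rho_{\mathrm{int}}=\rho-\rho_{\mathrm{tr}}$ does not in general vanish on pairs outside the chosen triplet (for a pair X, Y away from A it equals $\delta(\mathrm{Y,X,A})$, which may be nonzero), but this does not harm the argument, since the (weakly, or after the $\epsilon$-shift strictly) intransitive triplet A, B, C is all that is needed to make $\rho_{\mathrm{int}}$ currently intransitive.
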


Three elements with original co-ranking $\rho _{0},$ where $\rho _{0}=\rho
^{(1)}$ in Equation~(\ref{tc_del}) with $\delta _{1}\neq 0,$ can always be selected
from a set with potentially intransitive co-rankings. Consider $\rho
_{0}=\rho _{1}+\rho _{2}$ where, $\rho _{1}=-\rho ^{(2)}$ defined by Equation~(\ref%
{tc_22}) is transitive, and $\rho _{2}=2\rho $ specified by Equation~(\ref{tc_int})
is currently intransitive with a proper choice of $\varepsilon $. This
decomposition is extended to the remaining elements so that $\rho _{1}$
remains absolutely transitive. The representation $\rho _{0}=\rho _{1}+\rho
_{2}$ is, obviously, not unique.

\section{Preference Properties and Indicator Co-Ranking\label{AA2}}\vspace{-12pt}
\setcounter{figure}{0}
\renewcommand\thefigure{B\arabic{figure}}
\setcounter{equation}{0}
\renewcommand\theequation{B\arabic{equation}}
\setcounter{proposition}{0}
\renewcommand\theproposition{B\arabic{proposition}}
\setcounter{theorem}{0}
\renewcommand\thetheorem{B\arabic{theorem}}

\subsection{Coarsenings and Refinements}

Consider two different preference rules specified on the same set of
particles: $\prec_{a}$ and $\prec_{b}$. Rule $b$ represents a \textit{%
coarsening} of rule $a$ if A$\prec_{b}$B demands that A$\prec_{a}$B for any
A and B (although A$\sim_{b}$B may correspond to any of A$\sim_{a}$B$,\;\ $A$%
\prec_{a}$B or A$\succ_{a}$B). If rule $b$ represents a \textit{coarsening}
of rule $a$ then\ the same property is expressed by saying that rule $a$
represents a \textit{refinement} of rule $b,$ implying that A$\preceq_{a}$B
demands that A$\preceq_{b}$B for any A and B (if A$\succ_{b}$B was correct,
then this would demand A$\succ_{a}$B and thus contradict A$\preceq_{a}$B). %
(Statement ``relation $\preceq_{b}$ \ \textit{contains}
relation $\preceq_{a}$''\ is another equivalent, which is
often used in literature but can be confusing in the context of the present
work). If rules $a$ and $b$ are refinements of one another, then these rules
are obviously equivalent.

\subsection{Transitive Preferences in Intransitive Systems}

The \textit{transitive closure} of a preference (generally intransitive) is
the minimal (\textit{i.e.}, most refined) transitive ordering that represents a
coarsening of the original preference (see the Appendix in~\cite%
{K-PS2012} for details). The ordering produced by the transitive closure is
indicated by the subscript \textquotedblleft $t$\textquotedblright\ (\textit{i.e.}, $%
\prec _{t},$ $\sim _{t},\preceq _{t},$ \textit{etc.}) or\ by $\prec \prec $, which is
equivalent to $\prec _{t}$, and by $\approx $, which is equivalent to $\sim
_{t}$. Hence the relation A$\prec \prec $B, which is expressed as
\textquotedblleft A is \textit{transitively preferred} to
B\textquotedblright , indicates that A$\prec $B and A$\prec _{t}$B (where
preference $\prec $ is generally intransitive). The relation A$\prec \prec $%
B implies that there should not exist any set \{C$_{1}$,...,C$_{k}$\} such
that A$\prec $B$\preceq $C$_{1}\preceq ...\preceq $C$_{k}\preceq $A and, in
particular, there is no such C that can form an intransitive triplet C$%
\preceq $A$\prec $B$\preceq $C.

The symbol $\preccurlyeq,$ which is used to indicate $\prec$ but not $%
\prec\prec$ , should be distinguished from $\preceq,$ which implies $\prec$
or $\sim$. \ In general,\ comparison of A and B belongs to one of the five
mutually exclusive possibilities listed in the second row of the following
table 
\begin{equation}
\begin{tabular}{|c|c|c|c|c|}
\hline
\multicolumn{3}{|c|}{$\text{A}\preceq\text{B}$} & \multicolumn{2}{|c|}{$%
\text{A}\succ\text{B}$} \\ \hline
$\text{A}\prec\prec\text{B}$ & $\text{A}\preccurlyeq\text{B}$ & $\text{A}\sim%
\text{B}$ & $\text{A}\succcurlyeq\text{B}$ & $\text{A}\succ\succ\text{B}$ \\ 
\hline
$\text{A}\prec\prec\text{B}$ & \multicolumn{3}{|c|}{$\text{A}\approx\text{B}$%
} & $\text{A}\succ\succ\text{B}$ \\ \hline
\multicolumn{2}{|c|}{$\text{A}\prec\text{B}$} & \multicolumn{3}{|c|}{$\text{A%
}\succeq\text{B}$} \\ \hline
\end{tabular}%
\end{equation}

\newpage
\subsection{Transitivity and the Indicator Co-Ranking}

Different versions of Propositions \ref{P1} and \ref{P1a} are formulated
below for conditional indicator rankings, \textit{i.e.}, for $R_{_{\mathbb{G}}}($A$)$
defined by Equation~(\ref{r_R_G})---the average conditional rankings based on the
indicator co-ranking $R($A,B$)$.

\begin{proposition}
\label{PA1a}If the underlying preference is currently transitive, then the
preference induced by the conditional indicator ranking (\textit{i.e.}, A$\preceq _{_{%
\mathbb{G}}}$B $\Longleftrightarrow R_{_{\mathbb{G}}}($A$)\leq R_{_{\mathbb{G%
}}}($B$)$) represents a\ coarsening of the underlying preference for any
reference group $\mathbb{G}$.
\end{proposition}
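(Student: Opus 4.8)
The plan is to unwind the definition of coarsening from the first subsection of this appendix and reduce the assertion to a pointwise monotonicity property of the indicator co-ranking. Taking $a$ to be the underlying preference $\preceq$ and $b$ the induced preference $\preceq_{\mathbb{G}}$, the claim ``$\preceq_{\mathbb{G}}$ is a coarsening of $\preceq$'' is, by that definition, equivalent to the single implication $\text{A}\preceq\text{B}\Rightarrow\text{A}\preceq_{\mathbb{G}}\text{B}$, that is, $\text{A}\preceq\text{B}\Rightarrow R_{\mathbb{G}}(\text{A})\leq R_{\mathbb{G}}(\text{B})$. So it suffices to prove this implication for an arbitrary reference group $\mathbb{G}$, under the hypothesis that $\preceq$ is transitive on $\{\text{A},\text{B}\}\cup\mathbb{G}$.

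The key step is the lemma that $\text{A}\preceq\text{B}$ forces $R(\text{A},\text{C})\leq R(\text{B},\text{C})$ for every $\text{C}\in\mathbb{G}$, which I would prove by inspecting the three possible values of $R(\text{B},\text{C})$. If $R(\text{B},\text{C})=+1$ the inequality is automatic because $R$ takes only the values $-1,0,+1$. If $R(\text{B},\text{C})=0$, i.e. $\text{B}\sim\text{C}$, then $\text{A}\preceq\text{B}\preceq\text{C}$ and transitivity give $\text{A}\preceq\text{C}$, hence $R(\text{A},\text{C})\leq 0$. If $R(\text{B},\text{C})=-1$, i.e. $\text{B}\prec\text{C}$, then $\text{A}\preceq\text{B}\prec\text{C}$ and transitivity give $\text{A}\preceq\text{C}$, while $\text{A}\sim\text{C}$ is ruled out since $\text{C}\preceq\text{A}$ together with $\text{A}\preceq\text{B}$ would force $\text{C}\preceq\text{B}$, contradicting $\text{B}\prec\text{C}$; thus $\text{A}\prec\text{C}$ and $R(\text{A},\text{C})=-1$. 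Granting the lemma, I would multiply each inequality $R(\text{A},\text{C}_i)\leq R(\text{B},\text{C}_i)$ by the positive weight $g(\text{C}_i)$, sum over $\text{C}_i\in\mathbb{G}$, and divide by $G=\sum_i g(\text{C}_i)>0$, obtaining
\[
R_{\mathbb{G}}(\text{A})=\frac{1}{G}\sum_{\text{C}_i\in\mathbb{G}}g(\text{C}_i)R(\text{A},\text{C}_i)\leq\frac{1}{G}\sum_{\text{C}_i\in\mathbb{G}}g(\text{C}_i)R(\text{B},\text{C}_i)=R_{\mathbb{G}}(\text{B}),
\]
which is precisely $\text{A}\preceq_{\mathbb{G}}\text{B}$. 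Since $\mathbb{G}$ was arbitrary, this establishes the proposition.

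The main obstacle is the strict case $\text{B}\prec\text{C}$ of the lemma: one has to extract the \emph{strict} conclusion $\text{A}\prec\text{C}$ and not merely $\text{A}\preceq\text{C}$, which is where both transitivity and the convention that $\prec$ means ``$\preceq$ but not $\sim$'' are genuinely used. A shorter alternative, valid for discrete sets (or, for continuous distributions, under the Debreu conditions invoked in Section \ref{S1b}), is to pick an absolute ranking $r$ with $\text{X}\preceq\text{Y}\Leftrightarrow r(\text{X})\leq r(\text{Y})$, write $R(\text{X},\text{C})=\func{sign}(r(\text{X})-r(\text{C}))$, and invoke monotonicity of $\func{sign}$; I would nonetheless present the case analysis, since it uses nothing about the inducing co-ranking $\rho$ and applies verbatim to current (set-restricted) transitivity.
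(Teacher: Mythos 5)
Your argument is correct and follows essentially the same route as the paper's proof: establish the pointwise inequality $R(\text{A},\text{C}_i)\leq R(\text{B},\text{C}_i)$ for all $\text{C}_i\in\mathbb{G}$ from transitivity of the underlying preference, then sum with the positive weights to get $R_{\mathbb{G}}(\text{A})\leq R_{\mathbb{G}}(\text{B})$. The only difference is that you spell out the three-case analysis (including the strict case $\text{B}\prec\text{C}$) that the paper compresses into the phrase ``due to transitivity,'' which is a welcome but not substantively different elaboration.
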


Demonstrating the validity of this statement is not difficult. Due to
transitivity, the preference A$\preceq$B demands that $R($A$\mathbf{,}$C$%
)\leq R($B$\mathbf{,}$C$)$ for any C$.$ Note that A$\prec$B but $R($A$%
\mathbf{,}$C$)=R($B$\mathbf{,}$C$)$ is possible for some C. We compare
definitions of $R_{_{\mathbb{G}}}($A$)$ and $R_{_{\mathbb{G}}}($B$)$ by Equation~(\ref%
{r_R_G}) and conclude that A$\preceq$B demands $R_{_{\mathbb{G}}}($A$)\leq
R_{_{\mathbb{G}}}($B$),$ since $R($A$\mathbf{,}$C$_{i})\leq R($B$\mathbf{,}$C%
$_{i})$\ for all C$_{i}\in\mathbb{G}$.\ Note that the combination of \mbox{$R_{_{%
\mathbb{G}}}($A$)=R_{_{\mathbb{G}}}($B$)$} and A$\prec$B is possible for some 
$\mathbb{G}$.

\begin{proposition}
\label{PA1aa}If conditional indicator rankings based on different reference
groups are strictly non-equivalent (\textit{i.e.}, there exist at least two elements A
and B and at least two groups $\mathbb{G}^{\prime}$ and $\mathbb{G}%
^{\prime\prime}$ so that A$\prec_{_{\mathbb{G}^{\prime}}}$B and \mbox{A$\succ _{_{%
\mathbb{G}^{\prime\prime}}}$B),} then the underlying preference is currently
intransitive.
\end{proposition}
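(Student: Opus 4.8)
The plan is to argue by contradiction, with Proposition~\ref{PA1a} as the essential tool: the present statement is, in effect, its contrapositive. Suppose the underlying preference is currently transitive. Then Proposition~\ref{PA1a} applies to every reference group, and the argument given in its proof shows more precisely that for any group $\mathbb{G}$ and any elements X, Y one has $\text{X}\preceq\text{Y}\Rightarrow R_{_{\mathbb{G}}}(\text{X})\leq R_{_{\mathbb{G}}}(\text{Y})$. I would use this implication in contrapositive form: if $R_{_{\mathbb{G}}}(\text{X})>R_{_{\mathbb{G}}}(\text{Y})$, then $\text{X}\succ\text{Y}$ in the underlying preference.

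Next I would unpack the two hypotheses. From $\text{A}\prec_{_{\mathbb{G}'}}\text{B}$ we get $R_{_{\mathbb{G}'}}(\text{B})>R_{_{\mathbb{G}'}}(\text{A})$, so applying the contrapositive implication to the ordered pair $(\text{B},\text{A})$ with the group $\mathbb{G}'$ yields $\text{B}\succ\text{A}$, i.e.\ $\text{A}\prec\text{B}$ in the underlying preference. From $\text{A}\succ_{_{\mathbb{G}''}}\text{B}$ we get $R_{_{\mathbb{G}''}}(\text{A})>R_{_{\mathbb{G}''}}(\text{B})$, so the same implication applied to the ordered pair $(\text{A},\text{B})$ with the group $\mathbb{G}''$ yields $\text{A}\succ\text{B}$, i.e.\ $\text{B}\prec\text{A}$.

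Finally I would extract the contradiction: $\text{A}\prec\text{B}$ and $\text{B}\prec\text{A}$ cannot both hold, since the underlying co-ranking is antisymmetric and $\rho(\text{A},\text{B})$ is a single real number, which cannot be simultaneously positive and negative (equivalently, $\prec$ is asymmetric). Hence the assumption that the underlying preference is currently transitive is untenable, so it is currently intransitive, as claimed.

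I do not expect a real obstacle here; the whole argument is bookkeeping about which ordered pair of elements and which reference group the coarsening property of Proposition~\ref{PA1a} is invoked on, plus the elementary remark that ``strictly non-equivalent'' is exactly the condition preventing $\preceq_{_{\mathbb{G}'}}$ and $\preceq_{_{\mathbb{G}''}}$ from being simultaneously coarsenings of a transitive underlying preference. If an explicit witness of intransitivity is preferred over a proof by contradiction, one can instead observe that $\text{A}\prec_{_{\mathbb{G}'}}\text{B}$ forces some $\text{C}'\in\mathbb{G}'$ with $R(\text{A},\text{C}')<R(\text{B},\text{C}')$ and $\text{A}\succ_{_{\mathbb{G}''}}\text{B}$ forces some $\text{C}''\in\mathbb{G}''$ with $R(\text{A},\text{C}'')>R(\text{B},\text{C}'')$, and then construct an intransitive triplet among $\{\text{A},\text{B},\text{C}',\text{C}''\}$ by a short case check on the values in $\{-1,0,+1\}$; this is more laborious, so I would keep the contradiction argument.
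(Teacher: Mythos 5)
Your argument is correct and is essentially the paper's own proof: the paper likewise assumes current transitivity, invokes Proposition \ref{PA1a} (the induced conditional preferences are coarsenings of the underlying preference, which is exactly your contrapositive of X$\preceq$Y $\Rightarrow R_{_{\mathbb{G}}}($X$)\leq R_{_{\mathbb{G}}}($Y$)$), deduces A$\prec$B from A$\prec_{_{\mathbb{G}^{\prime}}}$B and A$\succ$B from A$\succ_{_{\mathbb{G}^{\prime\prime}}}$B, and concludes by the same contradiction.
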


Indeed, if the underlying preference were currently transitive, then,
according to Proposition \ref{PA1a}, preferences induced by the conditional
indicator ranking would be coarsenings of the underlying preference. Therefore, A$%
\prec _{_{\mathbb{G}^{\prime }}}$B demands A$\prec $B and A$\succ _{_{%
\mathbb{G}^{\prime \prime }}}$B demands A$\succ $B, which are contradictory.
Hence the underlying preference must be intransitive.

If an underlying preference is defined for the elements, any two groups of
elements $\mathbb{G}^{\prime }$ and $\mathbb{G}^{\prime \prime }$ can be
compared on the basis of the group indicator co-ranking defined by Equation~(\ref%
{r_R_GG}), that is $\mathbb{G}^{\prime }\succeq _{R}\mathbb{G}^{\prime
\prime }$ iff $\bar{R}(\mathbb{G}^{\prime },\mathbb{G}^{\prime \prime })\geq
0$. We note that

\begin{proposition}
\label{PA1aaa}Group preference based on the indicator co-ranking is not
necessarily transitive even if the underlying element preference is
currently transitive.
\end{proposition}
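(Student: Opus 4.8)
Since the claim is only that the group indicator preference \emph{need not} be transitive, it suffices to exhibit one counterexample, and the natural one is the nontransitive-dice construction already pictured in Figure~\ref{fig0}b. The plan is to take a totally ordered set of elements — so that the underlying element preference is not merely currently transitive but strictly transitive — and to realise three ``dice'' as weighted subgroups whose pairwise majority comparisons cycle. Concretely I would take $\mathbb{S}=\{1,2,\dots ,9\}$ with its usual order, so that $R(\text{C}_i,\text{C}_j)=\func{sign}(i-j)$, and set, with unit weights (hence $G=3$ for each),
\begin{equation}
\mathbb{G}_1=\{2,4,9\},\qquad \mathbb{G}_2=\{1,6,8\},\qquad \mathbb{G}_3=\{3,5,7\}.
\end{equation}

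The second step is to evaluate the group indicator co-ranking of Equation~(\ref{r_R_GG}) for each ordered pair. For unit weights it reduces to
\begin{equation}
\bar{R}(\mathbb{G}',\mathbb{G}'')=\frac{1}{9}\sum_{\text{C}_i\in\mathbb{G}'}\;\sum_{\text{C}_j\in\mathbb{G}''}\func{sign}(i-j),
\end{equation}
i.e.\ the fraction of the nine face-pairs on which the first group's element exceeds the second's, minus the fraction on which it is smaller. A direct count — the standard nontransitive-dice computation — shows that $\mathbb{G}_1$ beats $\mathbb{G}_2$, that $\mathbb{G}_2$ beats $\mathbb{G}_3$, and that $\mathbb{G}_3$ beats $\mathbb{G}_1$, each on exactly five of the nine pairs and losing the other four, so that
\begin{equation}
\bar{R}(\mathbb{G}_1,\mathbb{G}_2)=\bar{R}(\mathbb{G}_2,\mathbb{G}_3)=\bar{R}(\mathbb{G}_3,\mathbb{G}_1)=\tfrac{5-4}{9}=\tfrac{1}{9}>0 .
\end{equation}
Hence $\mathbb{G}_1\succeq_R\mathbb{G}_2\succeq_R\mathbb{G}_3$ while $\mathbb{G}_3\succ_R\mathbb{G}_1$, a (strictly) intransitive triplet of groups, even though the underlying element preference on $\mathbb{S}$ is transitive; this establishes the proposition.

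There is no deep obstacle here — the argument is essentially a verification — but two points deserve care. First, one must exhibit \emph{strict} wins throughout: transitivity of $\succeq_R$ would force $\bar{R}(\mathbb{G}_3,\mathbb{G}_1)\le 0$, so the example must deliver $\bar{R}(\mathbb{G}_3,\mathbb{G}_1)>0$, which is why the chosen dice have all nine face values distinct, leaving no ties. Second, one should check that Proposition~\ref{PA1a} is not contradicted: that proposition concerns the preference induced on \emph{elements} by a conditional indicator ranking against a \emph{fixed} reference group, whereas here we compare \emph{groups} with one another and the averaging in Equation~(\ref{r_R_GG}) runs over both arguments, which is exactly what breaks the monotonicity argument used for Proposition~\ref{PA1a}. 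It is worth remarking that this is precisely the mechanism of the Condorcet paradox mentioned in the Introduction: aggregating a transitive order over sub-populations by pairwise majority is itself an indicator-type group co-ranking, and is generically intransitive; Efron's dice, or the dice of Figure~\ref{fig0}b, could be substituted for the above with no change to the argument.
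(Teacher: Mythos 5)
Your proposal is correct and is essentially the paper's own proof: the paper uses exactly the same counterexample, the nontransitive dice of Figure \ref{fig0}b with the sets $\{2,4,9\}$, $\{3,5,7\}$, $\{1,6,8\}$ drawn from a transitively ordered nine-element set, each pairwise group indicator co-ranking equal to $1/9$ in a cycle. Your added checks (strictness of the wins and consistency with Proposition \ref{PA1a}) are sound but do not change the argument.
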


Consider three sets $\mathbb{G}_{A}=\{2,4,9\},$ $\mathbb{G}_{B}=\{3,5,7\}$
and $\mathbb{G}_{C}=\{1,6,8\},$ where the numbers indicate the absolute
rankings of nine distinct elements. It is easy to see that $\mathbb{G}%
_{A}\prec _{R}\mathbb{G}_{B}\prec _{R}\mathbb{G}_{C}\prec _{R}\mathbb{G}_{A}$
since $\bar{R}(\mathbb{G}_{B},\mathbb{G}_{A})=\bar{R}(\mathbb{G}_{C},\mathbb{%
G}_{B})=\bar{R}(\mathbb{G}_{A},\mathbb{G}_{C})=1/9$. This example
corresponds to the dice shown in Figure~\ref{fig0}b.

\section{Primary and Secondary Rankings \label{AA2a}}
\setcounter{figure}{0}
\renewcommand\thefigure{C\arabic{figure}}
\setcounter{equation}{0}
\renewcommand\theequation{C\arabic{equation}}
\setcounter{proposition}{0}
\renewcommand\theproposition{C\arabic{proposition}}
\setcounter{theorem}{0}
\renewcommand\thetheorem{C\arabic{theorem}}

This appendix presents definitions and statements related to the indicator
co-ranking $R(...,...)$ that are intended for characterisation of the level
of intransitivity in large and, possibly, evolving systems. The main
assumption of this section is that all elements in system $\mathfrak{S}_{0}$
are connected and form an overall group $\mathbb{G}_{0}$, implying that
there is a positive weight $g_{i}=g($C$_{i})>0$ specified for every element C%
$_{i}\in \mathbb{G}_{0}$. In this section, weights $g_{i}$ are associated
with the whole system $\mathbb{G}_{0}$ and thus is the same for all groups $%
\mathbb{G}_{q}\mathbb{\subseteq G}_{0}$, which are referred to in this
section as sets $\mathbb{S}_{q}=\mathbb{G}_{q}$, $q=1,2,...$ to emphasise
that the element weights are specified only for the whole system.

\subsection{Current Rankings}

Consider a system $\mathbb{G}_{0}$ of connected elements. The current
ranking is defined by 
\begin{equation}
R^{\ast }(\text{A})=\bar{R}(\text{A},\mathbb{G}_{0})=\frac{1}{G_{0}}\sum_{%
\text{C}_{i}\in \mathbb{G}_{0}}g(\text{C}_{i})R(\text{A}\mathbf{,}\text{C}%
_{i})  \label{A1_R}
\end{equation}%
that is $R^{\ast }($A$)$ is the conditional ranking Equation~(\ref{r_R_G}) of
element A with respect to all other elements in the system. In the same way
the current ranking can be introduced for an arbitrary set $\mathbb{G}_{q}$ 
\begin{equation}
\bar{R}^{\ast }(\mathbb{G}_{q})=\bar{R}(\mathbb{G}_{q},\mathbb{G}_{0})=\frac{%
1}{G_{q}}\sum_{\text{C}_{j}\in \mathbb{G}_{q}}g(\text{C}_{j})R^{\ast }(\text{%
C}_{j})=\frac{1}{G_{0}G_{q}}\sum_{\text{C}_{j}\in \mathbb{G}_{q}}\sum_{\text{%
C}_{i}\in \mathbb{G}_{0}}g(\text{C}_{j})g(\text{C}_{i})R(\text{C}_{j}\mathbf{%
,}\text{C}_{i})  \label{A1_RG}
\end{equation}%
where $G_{0}$ and $G_{q}$ are total weights of the system and of the set $%
\mathbb{G}_{q}$. The group co-ranking $\bar{R}(\mathbb{G}_{q},\mathbb{G}%
_{0}) $ is introduced according to Equation~(\ref{r_R_GG}). The current
ranking $R^{\ast }(...),$ the underlying preference $\prec ,$ and the
corresponding\ indicator co-ranking $R($A$\mathbf{,}$C$)$ are referred to as 
\textit{primary }when we need to distinguish them from secondary
characteristics.

The preference induced by the primary current co-ranking is referred to as 
\textit{secondary} as it is now considered as the underlying preference for
the secondary characteristics. The \textit{secondary current ranking} is
defined analogously to the primary current ranking 
\begin{equation}
R^{^{^{\ast \ast }}}(\text{A})=\bar{R}^{\prime \prime }(\text{A}\mathbf{,}%
\mathbb{G}_{0})=\frac{1}{G_{0}}\sum_{\text{C}_{i}\in \mathbb{G}_{0}}g(\text{C%
}_{i})R^{\prime \prime }(\text{A}\mathbf{,}\text{C}_{i})  \label{A1_RR}
\end{equation}%
\begin{equation}
\bar{R}^{^{\ast \ast }}(\mathbb{G}_{q})=\bar{R}^{\prime \prime }(\mathbb{G}%
_{q},\mathbb{G}_{0})=\frac{1}{G_{q}}\sum_{\text{C}_{j}\in \mathbb{G}_{q}}g(%
\text{C}_{j})R^{^{\ast \ast }}(\text{C}_{j})=\frac{1}{G_{0}G_{q}}\sum_{\text{%
C}_{j}\in \mathbb{G}_{q}}\sum_{\text{C}_{i}\in \mathbb{G}_{0}}g(\text{C}%
_{j})g(\text{C}_{i})R^{\prime \prime }(\text{C}_{j}\mathbf{,}\text{C}_{i})
\label{A1_RRG}
\end{equation}%
but with the use of the secondary co-ranking function $R^{\prime \prime }($A$%
\mathbf{,}$B$),$ which is introduced on the basis of the \textit{secondary
preference} (denoted by $\prec ^{\prime \prime }$) and determined by the
primary current ranking: 
\begin{equation}
R^{\prime \prime }(\text{A}\mathbf{,}\text{B})=\left\{ 
\begin{array}{ccc}
+1, & R^{\ast }(\text{A})>R^{\ast }(\text{B}) & \text{\textit{i.e.}, A}\succ ^{\prime
\prime }\text{B} \\ 
0, & R^{\ast }(\text{A})=R^{\ast }(\text{B}) & \text{\textit{i.e.}, A}\sim ^{\prime
\prime }\text{B} \\ 
-1, & R^{\ast }(\text{A})<R^{\ast }(\text{B}) & \text{\textit{i.e.}, A}\prec ^{\prime
\prime }\text{B}%
\end{array}%
\right.  \label{A3_RAB2}
\end{equation}

\subsection{Properties of Current Rankings}

The current ranking is a special case of conditional indicator-ranking, \textit{i.e.}, 
$R^{\ast }($A$)$ is $R_{_{\mathbb{G}_{q}}}($A$)$ with set $\mathbb{G}_{q}$
expanded to the whole system $\mathbb{G}_{0}$. The following propositions
characterise properties of

\begin{proposition}
\label{PA1b}If the primary preference is currently transitive, the secondary
preference is equivalent to the primary preference: A$\preceq^{\prime\prime}$%
B $\Longleftrightarrow$ $R^{\ast}($A$)\leq R^{\ast}($B$)$ $%
\Longleftrightarrow$ A$\preceq$B for any A and B.
\end{proposition}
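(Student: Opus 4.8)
The plan is to notice first that the leftmost equivalence, A$\preceq^{\prime\prime}$B $\Longleftrightarrow$ $R^{\ast}(\text{A})\leq R^{\ast}(\text{B})$, holds by definition: the secondary preference $\prec^{\prime\prime}$ is precisely the preference induced by the primary current ranking $R^{\ast}$, as encoded in Equation~(\ref{A3_RAB2}). So the whole substance of the proposition is the equivalence $R^{\ast}(\text{A})\leq R^{\ast}(\text{B})\Longleftrightarrow$ A$\preceq$B. Since $\prec$ is currently transitive and all elements of the system are comparable, $\preceq$ is a total pre-ordering of $\mathbb{G}_{0}$; by Section~\ref{S1b} it is represented by an absolute ranking $r$, so that $R(\text{A},\text{C})=\func{sign}\bigl(r(\text{A})-r(\text{C})\bigr)$ for all A, C in $\mathbb{G}_{0}$. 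I would carry out the sign bookkeeping with $r$; the same facts can be extracted directly from transitivity of $\prec$ by excluding the offending patterns A$\succ$C$\preceq$B and A$\sim$C$\prec$B, but that is merely longer.

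I would then prove A$\preceq$B $\Rightarrow$ $R^{\ast}(\text{A})\leq R^{\ast}(\text{B})$: if $r(\text{A})\leq r(\text{B})$ then $r(\text{A})-r(\text{C}_{i})\leq r(\text{B})-r(\text{C}_{i})$ for every $\text{C}_{i}\in\mathbb{G}_{0}$, hence $R(\text{A},\text{C}_{i})\leq R(\text{B},\text{C}_{i})$ termwise, and summing against the positive weights $g(\text{C}_{i})$ in the definition~(\ref{A1_R}) of $R^{\ast}$ gives $R^{\ast}(\text{A})\leq R^{\ast}(\text{B})$. This half is also a special case of Proposition~\ref{PA1a} taken with reference group $\mathbb{G}=\mathbb{G}_{0}$, where the induced preference is a coarsening of $\prec$, i.e., $\prec$ refines it.

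For the converse I would argue by contraposition. Suppose A$\succ$B, i.e., $r(\text{A})>r(\text{B})$. Then $r(\text{A})-r(\text{C}_{i})>r(\text{B})-r(\text{C}_{i})$ for every $i$, so $R(\text{A},\text{C}_{i})\geq R(\text{B},\text{C}_{i})$ for all $\text{C}_{i}\in\mathbb{G}_{0}$; in particular, taking $\text{C}_{i}=\text{B}$ gives $R(\text{A},\text{B})=+1>0=R(\text{B},\text{B})$, a strict gain in a term carrying the strictly positive weight $g(\text{B})$. Hence the sum in~(\ref{A1_R}) satisfies $R^{\ast}(\text{A})-R^{\ast}(\text{B})\geq g(\text{B})/G_{0}>0$, so $R^{\ast}(\text{A})>R^{\ast}(\text{B})$, that is A$\succ^{\prime\prime}$B, contradicting A$\preceq^{\prime\prime}$B. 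Combining both directions with the definitional equivalence yields the stated chain A$\preceq^{\prime\prime}$B $\Longleftrightarrow$ $R^{\ast}(\text{A})\leq R^{\ast}(\text{B})$ $\Longleftrightarrow$ A$\preceq$B.

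The step I expect to be the main obstacle---although it is a small one---is this last upgrade from a termwise weak inequality to a strict inequality for $R^{\ast}$. It genuinely uses the standing assumption of this appendix that every element of $\mathbb{G}_{0}$, including A and B themselves, carries a strictly positive weight $g_{i}>0$: without positivity, $\prec^{\prime\prime}$ could be strictly coarser than $\prec$ and the claimed equivalence would break, so the proof must lean on connectedness of the system and positivity of the weights at exactly this point.
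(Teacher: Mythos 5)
Your proposal is correct and follows essentially the same route as the paper: the direction A$\preceq$B $\Rightarrow R^{\ast}($A$)\leq R^{\ast}($B$)$ is the Proposition~\ref{PA1a} coarsening argument with $\mathbb{G}=\mathbb{G}_{0}$, and the converse is obtained exactly as in the paper by noting that when A$\prec$B the terms involving A and B themselves (here $R($A$,$B$)=+1>0=R($B$,$B$)$, with $g($B$)>0$ by connectedness) force a strict inequality $R^{\ast}($A$)<R^{\ast}($B$)$, so $R^{\ast}($A$)=R^{\ast}($B$)$ with A$\prec$B is impossible. Your detour through an absolute ranking $r$ is a harmless bookkeeping device (legitimate since the system is discrete and currently transitive), and you correctly identify the positive weights on A and B as the point where the argument strengthens Proposition~\ref{PA1a}.
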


The proof is similar to the proof of Proposition \ref{PA1a}, but the case of 
$R^{\ast}($A$)=R^{\ast}($B$)$ (\textit{i.e.}, A$\sim^{\prime\prime}$B) is now
impossible when A$\prec$B. Indeed, all elements are presumed to be connected
and present in the reference set in definition of the current ranking.
Hence, the terms $R($A$\mathbf{,}$B$)=-R($B$\mathbf{,}$A$)<0$ are present in
the sums Equation~(\ref{A1_R}) evaluated for $R^{\ast}($A$)$ while including C$_{i}=$%
B, and $R^{\ast}($B$)$ while including C$_{i}=$A$.$

\begin{proposition}
\label{PA1c}If B is transitively preferred to A in a generally intransitive
preference, then the preference of B over A is preserved by the current
ranking: A$\prec\prec$B$\;\Longrightarrow R^{\ast}($A$)<R^{\ast}($B$%
)\Longleftrightarrow$A$\prec^{\prime\prime}$B
\end{proposition}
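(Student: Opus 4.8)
The plan is to estimate the difference $R^{\ast}(\text{B})-R^{\ast}(\text{A})$ directly from its definition Equation~(\ref{A1_R}). Writing
\begin{equation}
R^{\ast}(\text{B})-R^{\ast}(\text{A})=\frac{1}{G_{0}}\sum_{\text{C}_{i}\in\mathbb{G}_{0}}g(\text{C}_{i})\bigl(R(\text{B},\text{C}_{i})-R(\text{A},\text{C}_{i})\bigr),
\end{equation}
it is enough to show that every summand $g(\text{C}_{i})\bigl(R(\text{B},\text{C}_{i})-R(\text{A},\text{C}_{i})\bigr)$ is nonnegative and that at least one of them is strictly positive; since all weights $g(\text{C}_{i})$ are positive, this gives $R^{\ast}(\text{A})<R^{\ast}(\text{B})$, which is exactly $\text{A}\prec^{\prime\prime}\text{B}$ by the definition Equation~(\ref{A3_RAB2}) of the secondary preference.

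First I would show that $R(\text{B},\text{C})\geq R(\text{A},\text{C})$ for every $\text{C}\in\mathbb{G}_{0}$. Suppose, for contradiction, that $R(\text{B},\text{C})<R(\text{A},\text{C})$ for some C. Up to the obvious case split on the values in $\{-1,0,+1\}$, this forces either $\text{C}\prec\text{A}$ together with $\text{B}\preceq\text{C}$, or $\text{A}\sim\text{C}$ together with $\text{B}\prec\text{C}$; in both situations $\text{C}\preceq\text{A}$ and $\text{B}\preceq\text{C}$. Combining with $\text{A}\prec\text{B}$, which holds because $\text{A}\prec\prec\text{B}$, we obtain the chain $\text{A}\prec\text{B}\preceq\text{C}\preceq\text{A}$, i.e., precisely the intransitive triplet that the relation $\text{A}\prec\prec\text{B}$ excludes. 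This contradiction yields $R(\text{B},\text{C})\geq R(\text{A},\text{C})$ for all $\text{C}\in\mathbb{G}_{0}$, so every summand above is nonnegative. This step is the analogue, for the current ranking, of the coarsening argument used in Proposition~\ref{PA1a}, and it is the crux of the proof.

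It remains to produce one strictly positive summand. Here I would use that $\mathbb{G}_{0}$ contains every connected element, in particular A and B themselves, each with strictly positive weight. Taking $\text{C}_{i}=\text{A}$ gives $R(\text{B},\text{A})-R(\text{A},\text{A})=R(\text{B},\text{A})=+1$, since $\text{A}\prec\text{B}$ means $\text{B}\succ\text{A}$ (equivalently, taking $\text{C}_{i}=\text{B}$ gives $R(\text{B},\text{B})-R(\text{A},\text{B})=-R(\text{A},\text{B})=+1$). Hence the whole sum is strictly positive, $R^{\ast}(\text{A})<R^{\ast}(\text{B})$, and therefore $\text{A}\prec^{\prime\prime}\text{B}$, as required. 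The only subtlety beyond the cycle-exclusion step is this last observation: unlike in Proposition~\ref{PA1a}, strictness is genuine here precisely because A and B are guaranteed to lie in the reference set $\mathbb{G}_{0}$ used to define the current ranking.
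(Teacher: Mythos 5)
Your proof is correct and follows essentially the same route as the paper: you use the fact that A$\prec\prec$B excludes any C with A$\prec$B$\preceq$C$\preceq$A to get $R(\text{A},\text{C}_i)\leq R(\text{B},\text{C}_i)$ for every C$_i$, and then obtain strictness from the terms involving A and B themselves, which lie in $\mathbb{G}_0$ with positive weight (the paper phrases this as $R(\text{A},\text{B})=-R(\text{B},\text{A})<0$). Your explicit case analysis merely fills in the step the paper compresses into ``Hence''.
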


Although this proposition refers to a transitive preference A$\prec\prec$B
in a preference $\prec$ that is generally intransitive and is thus different
from the previous statements, its proof is similar.\ The relation A$%
\prec\prec$B requires that A$\prec$B and there are no C$_{i}$ that satisfies
C$_{i}\preceq $A$\prec$B$\preceq$C$_{i}$. Hence, $R($A$\mathbf{,}$C$%
_{i})\leq R($B$\mathbf{,}$C$_{i})$ and $R($A$\mathbf{,}$B$)=-R($B$\mathbf{,}$%
A$)<0$ so that $R^{\ast}($A$)<R^{\ast}($B$)$ as defined by Equation~(\ref{A1_R}).
Note that the inverse statements $R^{\ast}($A$)<R^{\ast}($B$%
)\Longrightarrow\;$A$\prec\prec $B and $R^{\ast}($A$)<R^{\ast}($B$%
)\Longrightarrow\;$A$\prec$B are incorrect in intransitive~systems.

Since the preference specified by the primary current ranking is transitive,
the preferences specified by the primary and secondary current rankings must
be equivalent according to Proposition \ref{PA1b}: 
\begin{equation}
R^{\ast }(\text{A})<R^{\ast }(\text{B})\Longleftrightarrow R^{^{\ast \ast }}(%
\text{A})<R^{^{\ast \ast }}(\text{B})
\end{equation}%

Here, Proposition \ref{PA1b} is applied to secondary (as primary) and
tertiary (as secondary) preferences. The conclusion is that there is no
independent tertiary preferences as they coincide with the secondary
preferences. Equivalence of the primary and secondary rankings, however,
does not imply that these rankings are identical: generally $R^{\ast }($A$%
)\neq R^{^{\ast \ast }}($A$)$. Deviations of secondary ranking from primary
ranking indicate intransitivity in evolutions of competitive systems as
determined by the following~theorem:

\begin{theorem}
\label{PA2}The following statements are correct for a system $\mathbb{G}_{0}$
of connected elements (\textit{i.e.}, \mbox{$g($C$_{i})>0$} for any C$_{i}\in \mathbb{G}_{0}$%
):

%
%
%
\begin{enumerate}

\item[(a)] If the primary preference is currently transitive, the primary and
secondary current rankings coincide (\textit{i.e.}, $R^{\ast}($C$_{i})=R^{^{\ast\ast}}( $C$_{i})$ for all C$_{i}\in\mathbb{G}_0$).

\item[(b)] If the primary and secondary rankings coincide (\textit{i.e.}, $R^{\ast}($C$_{i})=R^{^{\ast\ast}}($C$_{i})$ for all elements C$_{i}\in\mathbb{G}_0$), then
the secondary preference is a coarsening of the primary preference (\textit{i.e.}, C$_{i}\prec^{\prime\prime}$C$_{j}$ $\Longrightarrow$ C$_{i}\prec$C$_{j}$).

\item[(c)] In particular, if the\ primary and secondary rankings coincide\ and
are strict (\textit{i.e.}, current rankings of different elements are different: $R^{\ast}($C$_{i})\neq R^{\ast}($C$_{i})$ for any C$_{i}\nsim$C$_{j}$), then
the primary preference is currently transitive.
\end{enumerate}%
\end{theorem}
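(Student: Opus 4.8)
The plan is to settle part~(a) directly from Proposition~\ref{PA1b}, and to obtain parts~(b) and~(c) from a single quadratic identity relating the primary and secondary current rankings. For~(a): if the primary preference is currently transitive then, by Proposition~\ref{PA1b}, the secondary preference coincides with it throughout $\mathbb{G}_{0}$, so $\text{A}\succ^{\prime\prime}\text{B}\Leftrightarrow\text{A}\succ\text{B}$, $\text{A}\sim^{\prime\prime}\text{B}\Leftrightarrow\text{A}\sim\text{B}$ and $\text{A}\prec^{\prime\prime}\text{B}\Leftrightarrow\text{A}\prec\text{B}$ for all elements. Comparing the three defining cases of $R^{\prime\prime}$ in Equation~(\ref{A3_RAB2}) with those of $R$, this forces $R^{\prime\prime}(\text{C}_{i}\mathbf{,}\text{C}_{j})=R(\text{C}_{i}\mathbf{,}\text{C}_{j})$ for all $i,j$; substituting into the definition~(\ref{A1_RR}) of $R^{^{\ast\ast}}$ then gives $R^{^{\ast\ast}}(\text{C}_{i})=R^{\ast}(\text{C}_{i})$ for every $\text{C}_{i}\in\mathbb{G}_{0}$.

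For (b) and (c), assume $R^{\ast}(\text{C}_{i})=R^{^{\ast\ast}}(\text{C}_{i})$ on $\mathbb{G}_{0}$, so that $\sum_{\text{C}_{i}}g(\text{C}_{i})R^{\ast}(\text{C}_{i})^{2}=\sum_{\text{C}_{i}}g(\text{C}_{i})R^{\ast}(\text{C}_{i})R^{^{\ast\ast}}(\text{C}_{i})$. I would evaluate both sides by expanding one factor with Equation~(\ref{A1_R}) (resp.~(\ref{A1_RR})) and then antisymmetrising the double sum, using the antisymmetry of $R$ (resp.\ of $R^{\prime\prime}$) together with $\bigl(R^{\ast}(\text{A})-R^{\ast}(\text{B})\bigr)R^{\prime\prime}(\text{A}\mathbf{,}\text{B})=\bigl|R^{\ast}(\text{A})-R^{\ast}(\text{B})\bigr|$, which holds since $R^{\prime\prime}=\func{sign}\bigl(R^{\ast}(\text{A})-R^{\ast}(\text{B})\bigr)$ by Equation~(\ref{A3_RAB2}). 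Subtracting the two resulting expressions yields
\[
\sum_{\text{A},\text{B}}g(\text{A})g(\text{B})\Bigl(\bigl|R^{\ast}(\text{A})-R^{\ast}(\text{B})\bigr|-\bigl(R^{\ast}(\text{A})-R^{\ast}(\text{B})\bigr)R(\text{A}\mathbf{,}\text{B})\Bigr)=0 .
\]

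Every summand is nonnegative: with $x=R^{\ast}(\text{A})-R^{\ast}(\text{B})$ and $R=R(\text{A}\mathbf{,}\text{B})\in\{-1,0,1\}$ one has $xR\leq|x||R|\leq|x|$, and $g(\text{A})g(\text{B})>0$ because $\mathbb{G}_{0}$ is connected. A vanishing sum of nonnegative terms forces each term to vanish, i.e.\ $\bigl|R^{\ast}(\text{A})-R^{\ast}(\text{B})\bigr|=\bigl(R^{\ast}(\text{A})-R^{\ast}(\text{B})\bigr)R(\text{A}\mathbf{,}\text{B})$ for all $\text{A},\text{B}\in\mathbb{G}_{0}$. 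Splitting on the value of $R(\text{A}\mathbf{,}\text{B})$, this says $\text{A}\prec\text{B}\Rightarrow R^{\ast}(\text{A})\leq R^{\ast}(\text{B})$, $\text{A}\sim\text{B}\Rightarrow R^{\ast}(\text{A})=R^{\ast}(\text{B})$, and $\text{A}\succ\text{B}\Rightarrow R^{\ast}(\text{A})\geq R^{\ast}(\text{B})$; the contrapositive of the first is $R^{\ast}(\text{A})<R^{\ast}(\text{B})\Rightarrow\text{A}\prec\text{B}$, that is $\text{A}\prec^{\prime\prime}\text{B}\Rightarrow\text{A}\prec\text{B}$, which is statement~(b). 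For~(c), add strictness: $\text{A}\prec\text{B}$ implies $\text{A}\nsim\text{B}$, hence $R^{\ast}(\text{A})\neq R^{\ast}(\text{B})$, upgrading $R^{\ast}(\text{A})\leq R^{\ast}(\text{B})$ to $R^{\ast}(\text{A})<R^{\ast}(\text{B})$; together with~(b), and with the reverse implications of the $\sim$-case (again via strictness), one obtains $\text{A}\preceq\text{B}\Leftrightarrow R^{\ast}(\text{A})\leq R^{\ast}(\text{B})$ on $\mathbb{G}_{0}$, so the primary preference there is the total pre-order induced by the real-valued function $R^{\ast}$ and is therefore currently transitive.

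The only genuinely non-routine point is recognising that equating the two quadratic forms pins the manifestly nonnegative summand $|x|-xR$ to zero term by term, and that this termwise vanishing is precisely the statement that $R^{\ast}$ respects the underlying preference; after that, the deductions of~(b), of~(c), and of~(a) are bookkeeping. The care needed elsewhere is in the two antisymmetrisations --- keeping straight which factor is expanded, and that both $R$ and $R^{\prime\prime}$ are antisymmetric.
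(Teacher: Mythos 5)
Your proof is correct, and for parts (b) and (c) it takes a genuinely different route from the paper. Part (a) is handled exactly as the paper does, via Proposition~\ref{PA1b}. For (b)--(c) the paper sorts the elements into level sets of decreasing primary current ranking and runs an inductive block-matrix argument (Figure~\ref{fig8}): comparing the primary and secondary co-ranking sums rectangle by rectangle, it forces $R(\text{C}_i,\text{C}_j)=R^{\prime\prime}(\text{C}_i,\text{C}_j)$ whenever the two elements lie in different level sets, and reads (b) and (c) off that structure. You instead equate the quadratic forms $\sum_i g_iR^{\ast}(\text{C}_i)^2$ and $\sum_i g_iR^{\ast}(\text{C}_i)R^{\ast\ast}(\text{C}_i)$, antisymmetrise both double sums, use $R^{\prime\prime}(\text{A},\text{B})=\func{sign}\bigl(R^{\ast}(\text{A})-R^{\ast}(\text{B})\bigr)$ and $|R|\leq 1$, and let positivity of the weights pin each summand $\bigl|R^{\ast}(\text{A})-R^{\ast}(\text{B})\bigr|-\bigl(R^{\ast}(\text{A})-R^{\ast}(\text{B})\bigr)R(\text{A},\text{B})$ to zero. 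This is shorter, avoids ordering the elements and the figure-based bookkeeping, and delivers the same pointwise compatibility between $R^{\ast}$ and the primary preference, which by trichotomy is equivalent to the paper's off-block identity $R_{ij}=R^{\prime\prime}_{ij}$. What the paper's longer argument buys is the explicit structural statement about the co-ranking matrices, which it reuses in the subsequent subsection on range sets; your argument yields exactly what the theorem needs and nothing more.

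One small wording fix: $R^{\ast}(\text{A})<R^{\ast}(\text{B})\Rightarrow\text{A}\prec\text{B}$ is not the contrapositive of $\text{A}\prec\text{B}\Rightarrow R^{\ast}(\text{A})\leq R^{\ast}(\text{B})$ (that contrapositive only gives $R^{\ast}(\text{A})>R^{\ast}(\text{B})\Rightarrow\text{A}\succeq\text{B}$). It follows instead from the contrapositives of the other two cases, $\text{A}\sim\text{B}\Rightarrow R^{\ast}(\text{A})=R^{\ast}(\text{B})$ and $\text{A}\succ\text{B}\Rightarrow R^{\ast}(\text{A})\geq R^{\ast}(\text{B})$, together with completeness of the preference (every pair stands in exactly one of the relations $\prec$, $\sim$, $\succ$, since the indicator co-ranking is defined for all pairs). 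With that rewording, your deductions of (b) and then (c) go through as stated.
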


\begin{proof}
Statement (a) immediately follows from the equivalence of primary and
secondary preferences, as stated in Proposition \ref{PA1b}, while statements
(b) and (c) require detailed consideration.

Let us sort all elements into $k$ sets of decreasing primary ranking $%
R_{1}^{\ast }>R_{2}^{\ast }>...>R_{k}^{\ast }$ ; \textit{i.e.},$\ q^{\text{th}}$ set $%
\mathbb{G}_{q}$ contains $n_{q}\geq 1$ elements that have primary current
ranking $R_{q}^{\ast }.$ The elements C$_{1},...,$C$_{i},...,$C$_{n}$ are
thus ordered according to decreasing primary ranking. Figure \ref{fig8}\
shows the structure of the \mbox{$n\times n$} matrices $R_{ij}=R($C$_{i},$C$_{j})$
and $R_{ij}^{\prime \prime }=R^{\prime \prime }($C$_{i},$C$_{j}),$ which
correspond to the primary and secondary \mbox{co-rankings.} The co-ranking of
different sets is denoted by $\bar{R}_{qp}=\bar{R}(\mathbb{G}_{q},\mathbb{G}%
_{p})$ for primary preferences and by $\bar{R}_{qp}^{\prime \prime }=\bar{R}%
^{\prime \prime }(\mathbb{G}_{q},\mathbb{G}_{p})$ for the secondary
preferences. The average primary current ranking of a set is denoted by $%
\bar{R}_{q}^{\ast }=\bar{R}^{\ast }(\mathbb{G}_{q})=\bar{R}(\mathbb{G}_{q},%
\mathbb{G}_{0}),$ while the average secondary current ranking is denoted by $%
\bar{R}_{q}^{^{\ast \ast }}=\bar{R}^{^{\ast \ast }}(\mathbb{G}_{q})=\bar{R}%
^{\prime \prime }(\mathbb{G}_{q},\mathbb{G}_{0})$. These quantities are
specified by Equations~(\ref{r_R_G}), (\ref{r_R_GG}) and (\ref{A1_R})--(\ref%
{A1_RRG}). Obviously $\bar{R}_{q}^{\ast }=\bar{R}_{q}^{^{\ast \ast }}$ for
all $q$ due to equivalence of the primary and secondary current rankings as
stated in the theorem (\textit{i.e.}, $R^{\ast }($C$_{i}$)$=R^{^{\ast \ast }}($C$_{i}$%
) for all C$_{i}\in \mathbb{G}_{0}$).
\begin{figure}[H]
\centering
\includegraphics[width=6.5cm,page=7, clip ]{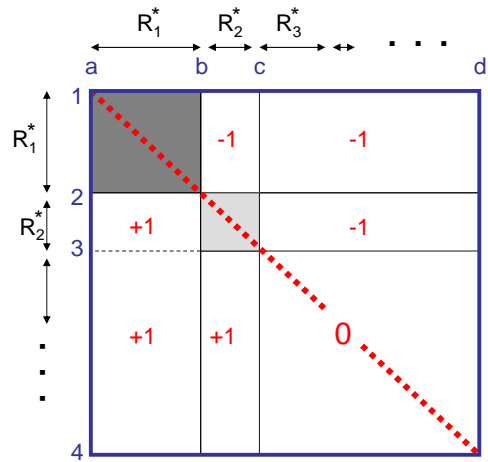}
\caption{Co-ranking matrix with elements ordered according to their current rankings.}
\label{fig8}

\end{figure}
The first (leading) set $\mathbb{G}_{1}$ is considered first. The
co-rankings $\bar{R}(\mathbb{G}_{1},\mathbb{G}_{1})$ and $\bar{R}^{\prime
\prime }(\mathbb{G}_{1},\mathbb{G}_{1}),$ which are determined by the sums
over the dark square region a1-b1-b2-a2 (Figure \ref{fig8}), are zeros as
the matrices $R_{ij}$ and $R_{ij}^{\prime \prime }$ are antisymmetric. Since 
$\bar{R}_{1}^{\ast }=\bar{R}(\mathbb{G}_{1},\mathbb{G}_{0})$ is the same as $%
\bar{R}_{1}^{^{\ast \ast }}=\bar{R}^{\prime \prime }(\mathbb{G}_{1},\mathbb{G%
}_{0})$ while $\bar{R}(\mathbb{G}_{1},\mathbb{G}_{1})=\bar{R}^{\prime \prime
}(\mathbb{G}_{1},\mathbb{G}_{1})=0,$ the co-rankings\ $\bar{R}(\mathbb{G}%
_{1},\mathbb{G}_{0}\mathbb{-G}_{1})$ and $\bar{R}^{\prime \prime }(\mathbb{G}%
_{1},\mathbb{G}_{0}\mathbb{-G}_{1})$ evaluated in terms the corresponding
sums over the rectangle a2-b2-b4-a4 must be the same. Hence $R_{ij}=1$ in
this rectangle since in any other case the sums $\bar{R}(\mathbb{G}_{1},%
\mathbb{G}_{0}\mathbb{-G}_{1})$ cannot coincide with $\bar{R}^{\prime \prime
}(\mathbb{G}_{1},\mathbb{G}_{0}\mathbb{-G}_{1})$.

Since the co-ranking matrices are antisymmetric, $R_{ij}=-1$ and $%
R_{ij}^{\prime \prime }=-1$ in the rectangle b1-c1-c2-b2. We take into
account that $\bar{R}(\mathbb{G}_{2},\mathbb{G}_{2})=\bar{R}^{\prime \prime
}(\mathbb{G}_{2},\mathbb{G}_{2})=0$ (the sums over the dark squares are
zeros) and reiterate our previous consideration for rectangle b3-c3-c4-b4,
where $R_{ij}^{\prime \prime }=1$ and the sums $\bar{R}(\mathbb{G}_{2},%
\mathbb{G}_{0}\mathbb{-G}_{1}-\mathbb{G}_{2})$ and $\bar{R}^{\prime \prime }(%
\mathbb{G}_{2},\mathbb{G}_{0}\mathbb{-G}_{1}-\mathbb{G}_{2})$ must be the
same. Hence, $R_{ij}=1$ in this rectangle. Continuing this consideration for
the remaining sets $q=3,4...,k$ proves that $R_{ij}=R_{ij}^{\prime \prime }$
provided $i$ and $j$ belong to different sets.

If $i$ and $j$ belong to the same set, then $R_{ij}^{\prime \prime }=0$
according to Equation~(\ref{A3_RAB2}) and either $R_{ij}=0$ or competition is
intransitive within the set (if the primary preferences within the set $q$
were transitive then, according to Proposition \ref{PA1c}, C$_{i}\succ $C$%
_{j}$ demands $R^{\ast }($C$_{i})>R^{\ast }($C$_{j}),$ which contradicts C$%
_{i},$C$_{j}\in \mathbb{G}_{q}$). The secondary preference represents a
coarsening of the primary preference since $R_{ij}^{\prime \prime }>0$
demands $R_{ij}>0$ (when $j$ and $i$ belong to different sets), while $%
R_{ij}^{\prime \prime }=0$ may correspond to $R_{ij}>0,$ $R_{ij}=0$ or $%
R_{ij}<0$ (when $j$ and $i$ belong to a common set).\ 

If current rankings of all elements are different, that is all $n_{q}=1$ for
all $q,$ then $R_{ij}=R_{ij}^{\prime\prime}$ since all $i\neq j$ always
belong to different sets. This means that the primary preference coincides
with the secondary preference based on primary current ranking and is
transitive.
\end{proof}

\subsection{Maps of Current Rankings}

The statements proven in the previous subsection suggest a relatively simple
method of analysing intransitivity in large systems. This method, which is
based on ranking maps (\textit{i.e.}, plots of primary current ranking $R^{\ast}$ of
the elements against their secondary current ranking $R^{^{\ast\ast}}),$
indicates the presence, intensity, extent and localisation of intransitivity
by deviations from the line $R^{\ast}=R^{^{\ast\ast}}$. We wish to stay
under the conditions of Theorem \ref{PA2}(c), and avoid complexities related
to statement b of this theorem since, in this case, equivalence between the
primary and secondary rankings ensures transitivity. However, in the case of 
$g($C$_{i})=1$, which perhaps is most common in practice, coincidences of
primary rankings for different elements $R^{\ast}($C$_{i})=R^{\ast}($C$_{j})$
for $i\neq j$ are likely due to the limited number of values spaced by $1/n$
that these rankings can take. The practical solution for this problems is
simple---consider $g($C$_{i})=(1+\varepsilon_{i})g_{0}($C$_{i}),$ where $%
\varepsilon_{i}$ represent small random values and $g_{0}($C$_{i})$ are the
original weights. Presence of these small values does not significantly
alter the maps but makes coincidences $R^{\ast}($C$_{i})=R^{\ast}($C$_{j})$
impossible unless all properties of the elements C$_{i}$ and C$_{j}$ are
identical, which is sufficient for our purposes.

Two (or more) sets (or groups) are said to be \textit{subject to a preference%
} when all possible selections of elements from these sets are compliant
with the preference.\ For example $\mathbb{G}_{q}\succ \succ \mathbb{G}_{p}$
implies that sets $\mathbb{G}_{q}$and $\mathbb{G}_{p}$ are subject to
preference $\succ \succ $ so that C$_{i(q)}\succ \succ $C$_{j(p)}$ for any C$%
_{i(q)}\in \mathbb{G}_{q}$ and any C$_{j(p)}\in \mathbb{G}_{p}.$ According
to these notations, the index $i(q)$ runs over all elements of set $\mathbb{G%
}_{q}$. In this subsection, we consider partition of all elements in the
system into \textit{range sets,} where each set is represented by a range of
primary current ranking (and consequently by a range of the secondary
current ranking). The range sets are non-overlapping and jointly cover all
elements.

We now turn to consideration of the $R^{\ast }$ versus $R^{^{\ast \ast }}$
maps. Consider an intransitive preference and its transitive closure. In
this closure, the elements C$_{1},...,$C$_{n}$ are divided into $k$
transitively ordered sets $\mathbb{G}_{1}\succ \succ \mathbb{G}_{2}\succ
\succ ...\succ \succ \mathbb{G}_{k}$ of elements that are transitively
equivalent within each set.\ That is for any C$_{i(q)}\in \mathbb{G}_{q}$
and any C$_{j(p)}\in \mathbb{G}_{p}$%
\begin{equation}
\begin{tabular}{ll}
$\text{C}_{i(q)}\approx \text{C}_{j(p)}$ & $\text{iff }q=p$ \\ 
$\text{C}_{i(q)}\succ \succ \text{C}_{j(p)}$ & $\text{iff }q<p$%
\end{tabular}%
\end{equation}%

Proposition \ref{PA1c} indicates that $R^{\ast }($C$_{i(q)})>R^{\ast }($C$%
_{j(p)})$ and $R^{^{\ast \ast }}($C$_{i(q)})>R^{^{\ast \ast }}($C$_{j(p)})$
when $q<p$, that is the ranges of current rankings of different sets do not
overlap. Hence sets $\mathbb{G}_{1},...,\mathbb{G}_{k}$ represent a set of
range sets. This implies equivalence of primary and secondary set
co-rankings 
\begin{equation}
\bar{R}(\mathbb{G}_{p},\mathbb{G}_{q})=\bar{R}^{\prime \prime }(\mathbb{G}%
_{p},\mathbb{G}_{q})
\end{equation}%
for all $p$ and $q$. Note that $\bar{R}(\mathbb{G}_{q},\mathbb{G}_{q})=\bar{R%
}^{\prime \prime }(\mathbb{G}_{q},\mathbb{G}_{q})=0$ for any $q$. This also
implies that 
\begin{equation}
\bar{R}^{\ast }(\mathbb{G}_{q})=\bar{R}^{^{\ast \ast }}(\mathbb{G}_{q})
\label{A_GG}
\end{equation}%
for any $q$. The property expressed by Equation~(\ref{A_GG}) is reflected in the
following proposition

\begin{proposition}
The primary and secondary average current rankings of range sets coincide if
and only if these sets are subject to a transitive primary preference
(although this preference may remain intransitive within each set).
\end{proposition}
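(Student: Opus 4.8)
The plan is to reduce everything to the block structure of the indicator co-ranking matrix determined by the partition into range sets, in the style of the proof of Theorem~\ref{PA2}. First I would record the bookkeeping identity obtained by splitting $\mathbb{G}_0=\bigcup_p\mathbb{G}_p$ in Equations~(\ref{A1_RG}) and (\ref{A1_RRG}),
\[
\bar{R}^{\ast}(\mathbb{G}_q)=\sum_{p}\frac{G_p}{G_0}\,\bar{R}(\mathbb{G}_q,\mathbb{G}_p),\qquad
\bar{R}^{^{\ast\ast}}(\mathbb{G}_q)=\sum_{p}\frac{G_p}{G_0}\,\bar{R}^{\prime\prime}(\mathbb{G}_q,\mathbb{G}_p),
\]
together with two observations that are immediate from the setup: (i) the diagonal blocks vanish, $\bar{R}(\mathbb{G}_q,\mathbb{G}_q)=\bar{R}^{\prime\prime}(\mathbb{G}_q,\mathbb{G}_q)=0$, because $R_{ij}$ and $R^{\prime\prime}_{ij}$ are antisymmetric; and (ii) since range sets have, by definition, non-overlapping ranges of primary current ranking, $R^{\prime\prime}(\text{C}_i,\text{C}_j)=+1$ whenever $\text{C}_i\in\mathbb{G}_q$, $\text{C}_j\in\mathbb{G}_p$ with $q<p$ (the labelling taken by decreasing range), so the off-diagonal \emph{secondary} blocks are always sharp: $\bar{R}^{\prime\prime}(\mathbb{G}_q,\mathbb{G}_p)=\func{sign}(p-q)$. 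Hence $\bar{R}^{\ast}(\mathbb{G}_q)=\bar{R}^{^{\ast\ast}}(\mathbb{G}_q)$ is equivalent to $\sum_{p\ne q}(G_p/G_0)\big(\bar{R}(\mathbb{G}_q,\mathbb{G}_p)-\func{sign}(p-q)\big)=0$.

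For the direct implication, suppose the range sets are subject to a transitive primary preference; by Proposition~\ref{PA1c} this ordering must coincide with the ordering by current-ranking range, so $\mathbb{G}_q\succ\succ\mathbb{G}_p$ for $q<p$, meaning every element of $\mathbb{G}_q$ strictly beats every element of $\mathbb{G}_p$ under $\prec$ and therefore $R(\text{C}_i,\text{C}_j)=+1$ for all such pairs. Then the off-diagonal \emph{primary} blocks are sharp as well, $\bar{R}(\mathbb{G}_q,\mathbb{G}_p)=\func{sign}(p-q)=\bar{R}^{\prime\prime}(\mathbb{G}_q,\mathbb{G}_p)$, and combined with the vanishing diagonal blocks this gives $\bar{R}^{\ast}(\mathbb{G}_q)=\bar{R}^{^{\ast\ast}}(\mathbb{G}_q)$ for every $q$ simultaneously. (For $k=1$ both sides are $0$ and there is nothing to prove.)

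The converse is the substantive part, and I would prove it by induction on $q$ descending from the top of the ordering, mirroring the argument used for Theorem~\ref{PA2}. At $q=1$ there is no earlier set, so the equality collapses to $\sum_{p\ge 2}(G_p/G_0)\big(\bar{R}(\mathbb{G}_1,\mathbb{G}_p)-1\big)=0$; every summand is $\le 0$ and every $G_p>0$, hence $\bar{R}(\mathbb{G}_1,\mathbb{G}_p)=1$ for each $p$, and since this quantity is a weighted average of values $\le 1$ with positive weights it forces $R(\text{C}_i,\text{C}_j)=+1$ for all $\text{C}_i\in\mathbb{G}_1$, $\text{C}_j\notin\mathbb{G}_1$. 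Assuming the corresponding statement for $\mathbb{G}_1,\dots,\mathbb{G}_m$, antisymmetry pins $R_{ij}=-1$ for $\text{C}_i\in\mathbb{G}_{m+1}$ and $\text{C}_j$ in an earlier set, so the equality at $q=m+1$ reduces to $\sum_{p>m+1}(G_p/G_0)\big(\bar{R}(\mathbb{G}_{m+1},\mathbb{G}_p)-1\big)=0$, which as before yields $R(\text{C}_i,\text{C}_j)=+1$ for $\text{C}_i\in\mathbb{G}_{m+1}$ and $\text{C}_j$ in any later set. Thus $q<p$ implies that every element of $\mathbb{G}_q$ is strictly preferred to every element of $\mathbb{G}_p$.

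It then remains to upgrade these pairwise strict preferences between sets to a transitive preference, i.e. to $\mathbb{G}_q\succ\succ\mathbb{G}_p$. I would do this by contradiction: if $\text{C}_i\succ\succ\text{C}_j$ failed for some $\text{C}_i\in\mathbb{G}_q$, $\text{C}_j\in\mathbb{G}_p$ with $q<p$, then (since $\text{C}_i\succ\text{C}_j$ already holds) there would be a chain $\text{C}_i\preceq E_1\preceq\dots\preceq E_l\preceq\text{C}_j$; walking along it and using at each link that an element of an earlier range set strictly beats any element of a later one, the set-indices of $E_1,\dots,E_l$ are forced to be non-increasing and $\le q$, so the final link $E_l\preceq\text{C}_j$ contradicts $E_l\succ\text{C}_j$. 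Hence no intransitive cycle straddles two distinct range sets, the family $\mathbb{G}_1\succ\succ\cdots\succ\succ\mathbb{G}_k$ is a transitive ordering, and the sets are subject to a transitive primary preference, with no constraint placed on the preference inside any single set — as claimed. The main obstacle is precisely this final step: the averages yield pairwise strict preference between sets almost for free, whereas promoting it to genuine transitivity ($\succ\succ$, i.e. absence of cycles through the sets) needs the separate chain argument above; by comparison the inductive unwinding of the block structure in the converse is routine once the base case is in place.
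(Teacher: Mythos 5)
Your proposal is correct and follows essentially the same route as the paper: you unwind the equality of set-averaged primary and secondary rankings block by block from the leading range set downward (the same matrix argument as in Theorem \ref{PA2}), forcing the primary co-ranking to be sharp ($R_{ij}=+1$) between distinct range sets, and you handle the forward direction via Proposition \ref{PA1c} just as the paper does. The only cosmetic difference is the last step promoting pairwise strict inter-set preference to $\succ\succ$: you check directly that no $\preceq$-chain can run from a higher range set back down to a lower one, whereas the paper observes that the set-index ordering is a transitive coarsening of the primary preference and hence a coarsening of its transitive closure --- the two arguments are equivalent.
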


First we note that current primary set rankings of different range sets
cannot coincide. The rest of the proof is similar to that of Theorem \ref%
{PA2}, where the leading set $\mathbb{G}_{1}$ is considered first. Since C$%
_{i(1)}\succ ^{\prime \prime }$C$_{i(q)}$ for all $q>1$, \ the equality $%
\bar{R}^{\ast }(\mathbb{G}_{1})=\bar{R}^{^{\ast \ast }}(\mathbb{G}_{1})$ is
achieved if and only if C$_{i(1)}\succ $C$_{i(q)}$ for all $q>1$. After
applying this consideration sequentially to sets $\mathbb{G}_{2},...,\mathbb{%
G}_{k}$, we conclude that C$_{i(p)}\succ $C$_{i(q)}$ for $p<q.$ Finally we
note that $\mathbb{G}_{1}\succ \mathbb{G}_{2}\succ ...\succ \mathbb{G}_{k}$
requires $\mathbb{G}_{1}\succ \succ \mathbb{G}_{2}\succ \succ ...\succ \succ 
\mathbb{G}_{k}$ since the preference $\succ ^{\prime }$ defined by C$%
_{i(p)}\succ ^{\prime }$C$_{i(q)}$ for $p<q$ and C$_{i(p)}\sim ^{\prime }$C$%
_{i(q)}$ for $p=q$ is transitive and is a coarsening of the primary
preference and therefore is a coarsening of its transitive closure.

Figure \ref{fig10} demonstrates a possible structure of the ranking map,
where primary current ranking is plotted versus secondary current ranking.
The elements are ordered according to their current rankings. The map in
Figure \ref{fig10} indicates that the preference is generally intransitive
(since the map deviates from the line specified by $R^{\ast }=R^{^{\ast \ast
}}$). The range sets, shown in the figure, are transitively ordered so that $%
\mathbb{G}_{1}\succ \succ \mathbb{G}_{2}\succ \succ \mathbb{G}_{3}\succ
\succ \mathbb{G}_{4}\succ \succ \mathbb{G}_{5}$. The large dots indicate
average set ranking, which is compliant with Equation~(\ref{A_GG}). The preferences
are transitive within $\mathbb{G}_{3}$ and $\mathbb{G}_{5}$ and intransitive
within $\mathbb{G}_{1},$ $\mathbb{G}_{2}$ and $\mathbb{G}_{4}$. Small
deviation from the line specified by $R^{\ast }=R^{^{\ast \ast }}$ within
set $\mathbb{G}_{4}$ indicate that intransitivity is present but not
frequent within this set. Two subsets $\mathbb{G}_{1a}$ and $\mathbb{G}_{1b}$
are distinguished within set $\mathbb{G}_{1}$. The preferences between these
sets are close to be transitive but some intransitive interference between
subsets is present as indicated by angle $\gamma >0$. The small dots show
current set rankings of $\mathbb{G}_{1a}$ and $\mathbb{G}_{1b},$ which are
not compliant with Equation~(\ref{A_GG}).

\begin{figure}[H]
\centering
\includegraphics[width=9cm,page=9, clip ]{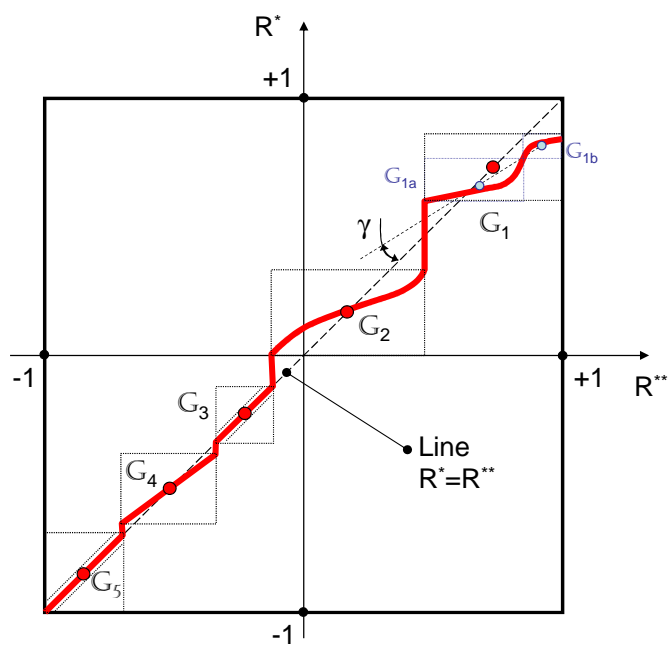}
\caption{Current ranking map (the thick red line shows primary vs secondary current ranking). 
The map is divided into range sets (groups) $\mathbb{G}_1,...,\mathbb{G}_5$ (group current 
rankings of the sets are shown by large dots) and subsets $\mathbb{G}_{1a}$ and $\mathbb{G}_{1b}$ (group current 
rankings of the subsets are shown by small dots). The black dashed line corresponds to $R^{*}=R^{^{**}}$. }
\label{fig10}

\end{figure}
\section{Evolutionary Intransitivity \label{AA3}}

\setcounter{figure}{0}
\renewcommand\thefigure{D\arabic{figure}}
\setcounter{equation}{0}
\renewcommand\theequation{D\arabic{equation}}
\setcounter{proposition}{0}
\renewcommand\theproposition{D\arabic{proposition}}
\setcounter{theorem}{0}
\renewcommand\thetheorem{D\arabic{theorem}}

This section considers characterisation of intransitivity in competitive
evolving systems (\textit{i.e.}, systems developing in time due to competition between
the elements \cite{K-PS2012,K-PT2013,K_Ent2014a}). The weight $g($C$_{i})>0$
is proportional to the probability of selecting element C$_{i}$ for
competition.

\subsection{Competitive Evolution}

The use of current rankings can be illustrated by evolution of competitive
systems, which is considered in this subsection. Consider a system of $n$
elements (particles) that evolve in time due to competition between these
elements and, consequently, have a preference defined for these elements
(which is referred to as the \textit{primary preference}). The focus of the
present consideration is competition steps (not involving mutations). In
this case evolution of the system due to competition is determined by the
current ranking according to the equation \cite{K-PS2012}%
\begin{equation}
\frac{df(\text{A})}{dt}=\lambda R^{\ast}(\text{A})f(\text{A})
\label{A1_dfdt}
\end{equation}
where $f($A$)$ is probability of having an element at location A, $\lambda$
is a constant depending on the time step and $R^{\ast}($A$)$ is the primary
current ranking defined with the weight $g($B$)=f($B$)\psi($B$)$, which is
proportional to the probability $\psi($B$)$ of selecting element B for
competitive mixing. 
(Generally, this selection weight can depend on both A and B, that is $%
\psi=\psi($A$,$B$)$ as considered in~\cite{K-PS2012} but this is not
needed in the present work). Since selection of elements for mixing is
stochastic, the process of competition requires a probabilistic description
(the fluctuations with respect to the averages are not considered). This is
achieved with the use of probability $f,$\ which can be initially set
according to the location of $n$ stochastic particles, which numerically
represent $f$ by a large set of delta-functions$,$ before the competition
steps. Mutation step, which is not considered here, would make this
distribution continuous. As noted above, isolated elements $g(...)=0$ are
not of interest and all elements are presumed to be connected by competition.

Let $\mathbb{G}_{t}$ denote the state of the system at time $t$ so that the
subscript \textquotedblleft$t$\textquotedblright\ relates quantities to this
state. The following proposition indicates the direction of the competition
steps:

\begin{proposition}
\label{PA0} Competition (transitive or intransitive) results in improved
competitiveness with respect to the current distribution $f$, that is for $%
t_{2}=t_{1}+dt$%
\begin{equation}
\bar{R}(\mathbb{G}_{t_{2}},\mathbb{G}_{t_{1}})=\frac{1}{G}\sum_{\text{C}%
_{i}}f_{t_{2}}(\text{C}_{i})\psi(\text{C}_{i})R_{t_{1}}^{\ast}(\text{C}%
_{i})\geq0  \label{A1_Rav}
\end{equation}
\end{proposition}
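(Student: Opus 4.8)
The plan is to use the evolution equation~(\ref{A1_dfdt}) to express $f_{t_2}$ in terms of $f_{t_1}$, substitute into the left-hand side of~(\ref{A1_Rav}), and then exploit the antisymmetry of the co-ranking matrix to extract a manifestly nonnegative quantity. First I would write, for the infinitesimal step $t_2 = t_1 + dt$,
\begin{equation}
f_{t_2}(\text{C}_i) = f_{t_1}(\text{C}_i)\bigl(1 + \lambda\, R^\ast_{t_1}(\text{C}_i)\, dt\bigr) + o(dt),
\end{equation}
which follows directly from~(\ref{A1_dfdt}). Substituting this into the definition of $\bar{R}(\mathbb{G}_{t_2},\mathbb{G}_{t_1})$ given in~(\ref{A1_Rav}) and recalling that the weights defining $R^\ast_{t_1}$ are $g(\text{C}_i) = f_{t_1}(\text{C}_i)\psi(\text{C}_i)$, I would split the sum into a zeroth-order term and a term proportional to $dt$.

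The zeroth-order term is $\frac{1}{G}\sum_i f_{t_1}(\text{C}_i)\psi(\text{C}_i) R^\ast_{t_1}(\text{C}_i) = \bar{R}(\mathbb{G}_{t_1},\mathbb{G}_{t_1})$, and the key observation is that this vanishes: expanding $R^\ast_{t_1}(\text{C}_i)$ via~(\ref{A1_R}) gives a double sum $\frac{1}{G_0 G}\sum_i\sum_j f_{t_1}(\text{C}_i)\psi(\text{C}_i) g(\text{C}_j) R(\text{C}_i,\text{C}_j)$ which is antisymmetric under $i\leftrightarrow j$ (since $R(\text{C}_i,\text{C}_j) = -R(\text{C}_j,\text{C}_i)$ and here $g(\text{C}_j) = f_{t_1}(\text{C}_j)\psi(\text{C}_j)$ makes the prefactor symmetric), hence sums to zero — this is the same antisymmetry argument used in the proof of Theorem~\ref{PA2}. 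Therefore the leading contribution to~(\ref{A1_Rav}) is the first-order term:
\begin{equation}
\bar{R}(\mathbb{G}_{t_2},\mathbb{G}_{t_1}) = \frac{\lambda\, dt}{G}\sum_{\text{C}_i} f_{t_1}(\text{C}_i)\psi(\text{C}_i)\, \bigl(R^\ast_{t_1}(\text{C}_i)\bigr)^2 + o(dt),
\end{equation}
which is a sum of squares weighted by the nonnegative quantities $f_{t_1}(\text{C}_i)\psi(\text{C}_i) > 0$ (all elements are connected), and $\lambda > 0$. Hence $\bar{R}(\mathbb{G}_{t_2},\mathbb{G}_{t_1}) \geq 0$, with equality only when $R^\ast_{t_1}(\text{C}_i) = 0$ for all $i$, i.e.\ the system is already at a competitive equilibrium.

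The main obstacle I anticipate is bookkeeping rather than conceptual: one must be careful that the weight $G$ normalising $\mathbb{G}_{t_2}$ versus the weight $G_0$ normalising $\mathbb{G}_{t_1}$ are handled consistently when $f$ changes (the total weight may drift under~(\ref{A1_dfdt})), and that the $\psi$ factor — which the paper allows to depend only on the element being selected — sits in the right place so that the crucial prefactor in the zeroth-order double sum really is symmetric in $i$ and $j$. If $\psi$ were allowed to depend on the pair $(\text{A},\text{B})$, the antisymmetry argument would fail and the sign would not be guaranteed; the footnote restricting to $\psi=\psi(\text{B})$ is precisely what makes the argument go through, so I would flag that dependence explicitly. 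A secondary point is whether one wants the statement only to first order in $dt$ (as the "$t_2 = t_1 + dt$" phrasing suggests) or for finite steps; for finite $dt$ one simply keeps the exact factor $(1+\lambda R^\ast dt)$ and the zeroth-order antisymmetric cancellation still holds, leaving $\bar{R}(\mathbb{G}_{t_2},\mathbb{G}_{t_1}) = \frac{\lambda\, dt}{G}\sum_i f_{t_1}(\text{C}_i)\psi(\text{C}_i)(R^\ast_{t_1}(\text{C}_i))^2 \geq 0$ exactly, so the result is robust either way.
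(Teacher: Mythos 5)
Your proposal is correct and takes essentially the same route as the paper: note that the zeroth-order term $\bar{R}(\mathbb{G}_{t_1},\mathbb{G}_{t_1})$ vanishes by antisymmetry of $R$, substitute $f_{t_2}=f_{t_1}+df$ with $df$ from Equation~(\ref{A1_dfdt}), and read off the first-order term as a weighted sum of squares, which is nonnegative. (The paper's displayed first-order term carries $\psi^2$ where your literal substitution of Equations~(\ref{A1_dfdt}) and~(\ref{A1_Rav}) gives $\psi$; this bookkeeping discrepancy does not affect the sign argument, and your remark that $\psi$ must not depend on the pair for the antisymmetric cancellation is consistent with the paper's footnote.)
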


First, we note that $\bar{R}(\mathbb{G}_{t_{1}},\mathbb{G}_{t_{1}})=0$ due
to the antisymmetric properties of co-ranking $R($A$\mathbf{,}$C$)$. The
substitution of $f_{t_{2}}=f_{t_{1}}+df,$ where $df$ is determined from Equation~(\ref%
{A1_dfdt}), into Equation~(\ref{A1_Rav}) results in 
\begin{equation}
\bar{R}(\mathbb{G}_{t_{2}},\mathbb{G}_{t_{1}})=dt\frac{\lambda }{G}\sum_{%
\text{C}_{i}}f_{t_{1}}(\text{C}_{i})\psi ^{2}(\text{C}_{i})\left(
R_{t_{1}}^{\ast }(\text{C}_{i})\right) ^{2}+...\geq 0
\end{equation}%
since $f$, $\psi ^{2}$ and $(R^{\ast })^{2}$ are non-negative. Fluctuations
that can be present in the system due to random particle selection are not
considered here. Note that, in intransitive systems, competitiveness
increases only with respect to the current distribution and improvement is
not necessarily achieved when viewed from a different perspective, that is $%
\bar{R}(\mathbb{G}_{t},\mathbb{G}_{r})$ may decrease in time for some
reference groups $\mathbb{G}_{r}$. It is common to have a view attached to
current distribution---this, according to Proposition \ref{PA0}, produces
the impression that competitiveness is improved by the competition but, in
intransitive systems, the actual result might be different if viewed from a
different perspective (\textit{i.e.}, using a different reference set). The effect of
mutation steps (which is not considered here) on competitiveness is
typically negative.

\subsection{Evolutionary Intransitivity in Simulations of the Risk/benefit
Dilemma}

The standard deviation of the primary and secondary rankings 
\begin{equation}
\Omega=\left( \sum_{\text{C}_{i}}f(\text{C}_{i})\psi(\text{C}_{i})\left(
R^{\ast}(\text{C}_{i})-R^{^{\ast\ast}}(\text{C}_{i})\right) ^{2}\right)
^{1/2}
\end{equation}
is a measure of deviation from transitivity in evolution of competitive
systems specified by Equation~(\ref{A1_Rav}). Parameter $\Omega$ is thus referred to
as \textit{evolutionary intransitivity parameter; }$\Omega$ is plotted in
Figure \ref{fig2m} by dashed lines. In intransitive simulations, $\Omega$ is
small for most of the time and the evolution is quite close to being
transitive (see Figure \ref{fig3m}a--c). If $\Omega$ is close to
zero, the underlying preference is, effectively, currently transitive. These
rules Equation~(\ref{s_rho0}), however, are very different from the transitive rules
given by Equation~(\ref{s_rho0}) as shown in Figure \ref{fig3m}d--f. When the
system approaches a point of maximal benefit, the evolutionary
intransitivity parameter increases, indicating the approaching collapse of
the high-benefit/high risk state. Thus evolution of the system may seem
transitive most of the time but detection of increasing intransitivity warns
of imminent catastrophe resulting in collapse of the benefit.

\section{Summary of the Terms Characterising Intransitivity \label{AA4}}
\setcounter{figure}{0}
\renewcommand\thefigure{E\arabic{figure}}
\setcounter{equation}{0}
\renewcommand\theequation{E\arabic{equation}}
\setcounter{proposition}{0}
\renewcommand\theproposition{E\arabic{proposition}}
\setcounter{theorem}{0}
\renewcommand\thetheorem{E\arabic{theorem}}

We distinguish the following types of intransitivity:

\begin{itemize}
\item By strictness of the preference: 

\begin{itemize}
\item \textit{strict intransitivity}: there exists A$\succ$B$\succ$C$\succ $A%

\item \textit{semi-strict intransitivity}: there exists A$\succ$C$%
_{1}\succ...\succ$C$_{k}\succ$A but intransitivity is not strict

\item \textit{semi-weak intransitivity:} there exists A$\succ$B$\sim$C$\succ$%
A but intransitivity is not semi-strict

\item \textit{weak intransitivity:} there exists A$\sim$B$\sim$C$\succ$A but
intransitivity is not semi-weak
\end{itemize}

\item By localisation: 

\begin{itemize}
\item \textit{local intransitivity} (combined with global transitivity)

\item \textit{global intransitivity} (combined with local transitivity)

\item \textit{strong intransitivity} (strict and both local and global)
\end{itemize}

\item By explicit presence: 

\begin{itemize}
\item \textit{absolute transitivity}: any kind of intransitivity is
impossible under the given preference rules, implying existence of absolute
utility or ranking 

\item \textit{potential intransitivity}: intransitivity does not necessarily
show on the current set of elements but may appear when conditions are
changed

\item \textit{current intransitivity or current transitivity}: indicate the
properties of preferences on the current set of elements

\item \textit{near-transitive evolution}: evolution of an intransitive
system that, within a fixed time interval, can be reasonably approximated by
evolution of a transitive system
\end{itemize}
\end{itemize}

Weak and semi-weak intransitivity is not a prominent type of intransitivity
since the preference in this case can be represented as a coarsening of a
transitive preference (this representation can be achieved by transitively
extending preference ``$\succ$" to some of pairs that are originally
specified as equivalent). A~preference with strict or semi-strict
intransitivity can not be represented as a coarsening of any transitive~preference.

\section{Note on Quantum Preferences \label{AQP}}
\setcounter{figure}{0}
\renewcommand\thefigure{F\arabic{figure}}
\setcounter{equation}{0}
\renewcommand\theequation{F\arabic{equation}}
\setcounter{proposition}{0}
\renewcommand\theproposition{F\arabic{proposition}}
\setcounter{theorem}{0}
\renewcommand\thetheorem{F\arabic{theorem}}

Quantum games and quantum mechanisms of preferences and decision-making are
notable extensions of the respective classical approaches that have been
repeatedly researched in the recent years \cite%
{Qgame1999,Qcat2009M,Ent2015MPS}.

\subsection{Quantum Preferences and Co-Rankings}

While conventional strict preference implies that either A preferred over B
or vice versa, quantum preferences allow for A$\succ $B and A$\prec $B to be
valid at the same time. Here we refer to the superposition states which have
the quantum wave function $\Psi $ and the quantum density matrix $\mathbf{%
\hat{\rho}}$ specified by 
\begin{equation}
\left\vert \Psi \right\rangle =c_{1}\left\vert \text{A}\succ \text{B}%
\right\rangle +c_{2}\left\vert \text{B}\succ \text{A}\right\rangle ,\ \ \ \ 
\mathbf{\hat{\rho}}=\left[ 
\begin{array}{cc}
\bar{c}_{1}c_{1} & \bar{c}_{2}c_{1} \\ 
\bar{c}_{1}c_{2} & \bar{c}_{2}c_{2}%
\end{array}%
\right]  \label{QQS}
\end{equation}%
or to the mixed states with the wave function $\Psi $ and the quantum
density matrix $\mathbf{\hat{\rho}}$ are taking the forms%
\begin{equation}
\left\vert \Psi \right\rangle =c_{1}\left\vert \text{A}\succ \text{B}%
\right\rangle \left\vert \theta _{1}\right] +c_{1}\left\vert \text{B}\succ 
\text{A}\right\rangle \left\vert \theta _{2}\right] ,\ \ \ \ \mathbf{\hat{%
\rho}}=\left[ 
\begin{array}{cc}
\bar{c}_{1}c_{1} & 0 \\ 
0 & \bar{c}_{2}c_{2}%
\end{array}%
\right]  \label{QQM}
\end{equation}%

Here, we use the conventional quantum ket notations $\left\vert
...\right\rangle $ and imply normalisation condition \mbox{$\bar{c}_{1}c_{1}+\bar{c%
}_{2}c_{2}=1.$} The symbol $\mathbf{\hat{\rho}}$\ used here is not related to
co-ranking $\rho $. In this section, the overbar denotes the complex
conjugates. The strict quantum preferences can be treated as qubits.\ 

In order to distinguish superposition and mixed states in terms of wave
functions, we deploy random phases $\theta _{j}=e^{-i\omega _{j}}$ where $%
\omega _{j}$ ($j=1,...$) are random angles, which are uniformly and
independently distributed between $-\pi $ and $\pi $. A mixture of quantum
states is interpreted here as an entanglement with special quantum states $%
\left\vert \theta _{j}\right] $ representing the random phases. These states
have only one operation applicable to these states and resulting in a
physically measurable quantities---the scalar product: 
\begin{equation}
\left[ \theta _{i}\middle|\theta _{j}\right] =\left\langle e^{i\left( \omega
_{i}-\omega _{j}\right) }\right\rangle =\left\{ 
\begin{array}{cc}
1, & i=j \\ 
0, & i\neq j%
\end{array}%
\right\vert  \label{QQph}
\end{equation}%

The random phase notations are generally equivalent to the corresponding
notations using density matrices but random phases can be convenient for
specification of mixed states when using wave functions is preferred over
deploying density matrices. 
The random phases is only a convenient notation but not a physical theory
that answers the question whether a larger system (or the whole Universe) is
physically in a superposition state or a mixed state. The random phases do
not have energy and do not evolve in time (in unitary quantum mechanics).
The notations based on wave functions are often more explicit and
transparent than the equivalent notations based on density matrices (see Appendix of~\cite{Kaon2014K} for
details and discussion). 

In the case of superposition state, both the relative magnitudes $\left\vert
c_{2}\right\vert /\left\vert c_{1}\right\vert $ and phases $\arg
(c_{2})-\arg (c_{1})$ of complex amplitudes $c_{1}$ and $c_{2}$ affect the
quantum state. In the case of a mixed state, the phases are not important as
they are randomised by $\theta _{1}$ and $\theta _{2}$ but the relative
magnitudes $\left\vert c_{2}\right\vert /\left\vert c_{1}\right\vert $
determine the state: the effect of a mixed state is similar to classical
probabilities $P_{2}=\left\vert c_{2}\right\vert ^{2}$ and $P_{1}=\left\vert
c_{1}\right\vert ^{2}$, $P_{1}+P_{2}=1$.

The physical magnitude of the preference is determined by the Hermitian
preference operator $\mathbb{R}$, which is specified by the matrix $\hat{R}%
_{ij}=\left\langle i\middle|\mathbb{R}\middle|j\right\rangle $ taking the
form%
\begin{equation}
\mathbf{\hat{R}}=\left[ 
\begin{array}{cc}
+1 & 0 \\ 
0 & -1%
\end{array}%
\right]
\end{equation}%
This operator corresponds to the indicator co-ranking $R$ in conventional
preferences. For both superposition state Equation~(\ref{QQS}) and mixed state Equation~(\ref%
{QQM}), the magnitude of the preference is given by 
\begin{equation}
R=\left\langle \Psi \middle|\mathbb{R}\middle|\Psi \right\rangle =\left\vert
c_{1}\right\vert ^{2}-\left\vert c_{2}\right\vert ^{2}
\end{equation}

The quantum preference models the state of mind before a decision of
selecting A or B from the set of \{A,B\} takes place. The instance of
decision is marked by the collapse of the wave function converting the
system in one of the states, $\left\vert \text{A}\succ \text{B}\right\rangle 
$ or $\left\vert \text{B}\succ \text{A}\right\rangle $ with respective
probabilities $P_{1}=\left\vert c_{1}\right\vert ^{2}$ and $P_{2}=\left\vert
c_{2}\right\vert ^{2}.$ These probabilities are the same for superposition
states Equation~(\ref{QQS}) and mixed states Equation~(\ref{QQM}).

Demonstrating existence of quantum mechanisms in cognitive decision making
is closely linked to the fundamental possibility of distinguishing
superposition states and mixed states as various features of mixed states
can be interpreted in terms of classical probabilities and, generally, point
to probabilistic but not necessarily quantum effects in evaluation of
preferences by the human brain. Only quantum superpositions possess all
unique flavours of quantum mechanics. Distinguishing quantum superpositions
and quantum mixtures generally requires a measurable value $H$ that
corresponds to a Hermitian operator $\mathbb{H}$ with non-zero off-diagonal
elements. The same effect can be achieved by evolving the preference state
with a unitary operator $\mathbb{U}$ provided this operator is not trivial
and corresponds to a physically observable process. The evolved
state $\mathbb{U}^{\dag }\mathbb{RU}$ of the measurable preference operator $%
\mathbb{R}$ would generally acquire off-diagonal components. For mixed state
Equation~(\ref{QQM}), the value of $H$ is, as in the case of classical probabilities,
the sum of conditional values $\left\langle \text{A}\middle|\mathbb{H}\middle%
|\text{A}\right\rangle $ and $\left\langle \text{B}\middle|\mathbb{H}\middle|%
\text{B}\right\rangle $ multiplied by the respective probabilities $P_{1}$
and $P_{2}$. The bra/ket notations are abbreviated here to $\left\vert \text{%
A}\right\rangle =\left\vert \text{A}\succ \text{B}\right\rangle $ and $%
\left\vert \text{B}\right\rangle =\left\vert \text{B$\succ $A}\right\rangle
$.   In the superposition state Equation~(\ref{QQS}), the additional terms with
interferences $\left\langle \text{A}\middle|\mathbb{H}\middle|\text{B}%
\right\rangle $ and $\left\langle \text{B}\middle|\mathbb{H}\middle|\text{A}%
\right\rangle $ appear in the sum: conditional alternatives do interfere in
quantum mechanics. However, it is not completely clear what measurable
physical quantities operator $\mathbb{H}$ might represent, when the two
basic states $\left\vert \text{A}\right\rangle $ and $\left\vert \text{B}%
\right\rangle $ represent alternative preferences, and how experiments
distinguishing mixed cognitive states from superimposed cognitive states in
decision-making can be carried out.

In the case of weak preferences, which have three possible outcomes, the
superposition and mixed wave functions become 
\begin{equation}
\left\vert \Psi \right\rangle =c_{1}\left\vert \text{A}\succ \text{B}%
\right\rangle +c_{2}\left\vert \text{A}\sim \text{B}\right\rangle
+c_{3}\left\vert \text{A}\prec \text{B}\right\rangle
\end{equation}%
\begin{equation}
\left\vert \Psi \right\rangle =c_{1}\left\vert \text{A}\succ \text{B}%
\right\rangle \left\vert \theta _{1}\right] +c_{2}\left\vert \text{A}\sim 
\text{B}\right\rangle \left\vert \theta _{_{2}}\right] +c_{3}\left\vert 
\text{A}\prec \text{B}\right\rangle \left\vert \theta _{3}\right]
\end{equation}%
while the preference operator $\mathbb{R}$ is given by the matrix%
\begin{equation}
\mathbf{\hat{R}}=%
\begin{bmatrix}
+1 & 0 & 0 \\ 
0 & 0 & 0 \\ 
0 & 0 & -1%
\end{bmatrix}%
\end{equation}

\subsection{Absolute Ranking in Quantum Case}

Consider a set of absolute ranks $r_{1},...,r_{k}$. If these ranks are
sharp, we can assume that $r_{i}=i$ without loss of generality. The state of
a single quantum element is specified by the wave function 
\begin{equation}
\left\vert \Psi \right\rangle =\sum_{i=1}^{k}c_{i}\left\vert
r_{i}\right\rangle =\sum_{i=1}^{k}c_{i}\left\vert i\right\rangle
\end{equation}%
where the absolute rank of state $\left\vert i\right\rangle $ is $%
r(\left\vert i\right\rangle )=i$, \textit{i.e.}, each element is generally a
superposition of different states with different absolute ranks.

If we consider two elements, A and B, the state of the system is a
superposition of states 
\begin{equation}
\left\vert \Psi _{\text{AB}}\right\rangle =\sum_{i,j=1}^{k}c_{ij}\left\vert
r_{i}(\text{A})\right\rangle \left\vert r_{j}(\text{B})\right\rangle
=\sum_{i,j=1}^{k}c_{ij}\left\vert i,j\right\rangle
\end{equation}%
where $\left\vert i,j\right\rangle $ denotes $\left\vert r_{i}(\text{A}%
)\right\rangle \left\vert r_{j}(\text{B})\right\rangle $.

If the states $\left\vert i,j\right\rangle $ are ordered so that $%
k_{1}=(k^{2}-k)/2$ states $i>j$ are followed by $k$ states $i=j$ and then by 
$k_{1}$ states states $i<j$, then the matrix $\left\langle i^{\prime
},j^{\prime }\middle|\mathbb{R}\middle|i,j\right\rangle $ of the preference
co-ranking operator $\mathbb{R}$ takes the following form 
\begin{equation}
\mathbf{\hat{R}}=\underset{k^{2}}{\underleftrightarrow{\left[ %
\begin{tabular}{c:c:c} $\underset{k_{1}}{\underleftrightarrow{\left[
+{\mathbf{\hat{I}}}^{\ }\right] }}$ & & \\ \hdashline &
$\underleftrightarrow{_{k}}$ & \\ \hdashline & &
$\overset{k_{1}}{\overleftrightarrow{\left[ -{\mathbf{\hat{I}}}^{\ }\right]
}}$\end{tabular}\right] }}
\end{equation}%
where the size of the matrix is $k\times k$ and $\mathbf{\hat{I}}$ is the $%
k_{1}\times k_{1}$ unit matrix while the remaining elements are zeros. The
preference co-ranking value is given by 
\begin{equation}
R(\text{A,B})=\sum_{i>j}P_{ij}(\text{A,B})-\sum_{i<j}P_{ij}(\text{A,B}),\ \
\ \ P_{ij}(\text{A,B})=\bar{c}_{ij}c_{ij}
\end{equation}%

This expression is the same for superimposed and mixed states of the wave
function $\Psi $. We state that A$\succ $B when $R($A,B$)>0$.

If $n$ quantum elements C$_{1},...,$C$_{n}$ are simultaneously considered,
then the wave function takes the form 
\begin{equation}
\left\vert \Psi _{\text{C}_{1},...,\text{C}_{n}}\right\rangle
=\sum_{i_{1},...,i_{n}=1}^{k}c_{i_{1}...i_{n}}\left\vert r_{i}(\text{C}%
_{1})\right\rangle ...\left\vert r_{j}(\text{C}_{n})\right\rangle
=\sum_{i_{1},...,i_{n}=1}^{k}c_{i_{1}...i_{n}}\left\vert
i_{1...}i_{n}\right\rangle
\end{equation}%
while the co-ranking of two elements becomes 
\begin{equation}
R(\text{C}_{q}\text{,C}_{p})=\sum_{i>j}P_{ij}(\text{C}_{q}\text{,C}%
_{p})-\sum_{i<j}P_{ij}(\text{C}_{q}\text{,C}_{p}),\ \ \ \ P_{i_{q}i_{p}}(%
\text{C}_{q}\text{,C}_{p})=\sum_{\mathbb{S}_{qp}}\bar{c}%
_{i_{1}...i_{n}}c_{i_{1}...i_{n}}
\end{equation}%
where $\mathbb{S}_{qp}=\{i_{1},...,i_{n}\}-\{i_{q},i_{p}\}$ is the set of
indices that includes all indices with exception of the indices $i_{q}$ and $%
i_{p}$. The probabilities $P_{ij}($C$_{q}$,C$_{p})$ are joint probability
distributions of ranks for two elements C$_{q}$ and C$_{p}$. The case of
interest when these distributions are independent (separable in quantum
terminology) that is for any two elements A and B 
\begin{equation}
P_{ij}(\text{A,B})=P_{i}(\text{A})P_{j}(\text{B})
\end{equation}%
In this case each element in quantum formulations becomes very similar to a
group with $k$ classical deterministic elements and weights $g_{i}=P_{i}$.

The principal difference between deterministic and quantum formulations is
that, in the quantum case, existence of absolute ranking does not guarantee
transitivity as stipulated in the following proposition

\begin{proposition}
A quantum (or probabilistic) preference can be intransitive even if it has
an absolute~ranking.
\end{proposition}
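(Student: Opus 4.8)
The plan is to convert the nontransitive-dice example of Proposition~\ref{PA1aaa} into a statement about a single collection of quantum (or probabilistic) elements. The crucial observation, already noted just above the proposition, is that when the joint state is separable the joint rank probabilities factorise, $P_{ij}(\text{A,B})=P_{i}(\text{A})P_{j}(\text{B})$, so the preference co-ranking $R(\text{C}_{q},\text{C}_{p})=\sum_{i>j}P_{ij}(\text{C}_{q},\text{C}_{p})-\sum_{i<j}P_{ij}(\text{C}_{q},\text{C}_{p})$ of two quantum elements is literally the group indicator co-ranking $\bar{R}$ of two classical weighted groups of ranked elements, the weights being $g_{i}=P_{i}$. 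Consequently any classically intransitive triple of such weighted groups yields three separable quantum elements with an intransitive preference, while each individual element still lives in the Hilbert space spanned by the absolute-rank eigenstates $|1\rangle,\dots,|k\rangle$, i.e.\ it \emph{does} possess an absolute ranking in the sense of this appendix (the rank observable $\sum_{i}i\,|i\rangle\langle i|$ is well defined on each element).

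Concretely I would take $k=9$ and set
\begin{equation}
|\Psi_{\text{A}}\rangle=\tfrac{1}{\sqrt{3}}\bigl(|2\rangle+|4\rangle+|9\rangle\bigr),\quad |\Psi_{\text{B}}\rangle=\tfrac{1}{\sqrt{3}}\bigl(|3\rangle+|5\rangle+|7\rangle\bigr),\quad |\Psi_{\text{C}}\rangle=\tfrac{1}{\sqrt{3}}\bigl(|1\rangle+|6\rangle+|8\rangle\bigr),
\end{equation}
with the separable joint state $|\Psi_{\text{ABC}}\rangle=|\Psi_{\text{A}}\rangle|\Psi_{\text{B}}\rangle|\Psi_{\text{C}}\rangle$. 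Then $P_{i}(\text{A})=1/3$ on $\{2,4,9\}$ and $0$ elsewhere, and similarly for $\text{B}$ and $\text{C}$, so the pairwise co-rankings coincide with the group co-rankings of the sets $\mathbb{G}_{A},\mathbb{G}_{B},\mathbb{G}_{C}$ of Proposition~\ref{PA1aaa}. Counting the ordered rank pairs (five favourable against four unfavourable in each case) gives
\begin{equation}
R(\text{B,A})=R(\text{C,B})=R(\text{A,C})=\tfrac{5-4}{9}=\tfrac{1}{9}>0,
\end{equation}
hence $\text{A}\prec\text{B}\prec\text{C}\prec\text{A}$: a strictly intransitive cycle. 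The purely probabilistic version is identical word for word --- replace the three superpositions by three independent classical random variables with the same rank distributions.

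The only point requiring care is to state precisely what ``having an absolute ranking'' means here: it is not a scalar $r(\text{A})$ attached to each element --- that, as in Section~\ref{S1b}, would force transitivity --- but rather the existence of the absolute-rank eigenbasis in which every element is expanded, together with the rule ``$\text{A}\succ\text{B}$ when $R(\text{A,B})>0$''. The construction above shows that this weaker structure is compatible with strict intransitivity, so I do not expect any genuine obstacle; the real content is the identification of the separable quantum co-ranking with the classical group co-ranking of Appendix~\ref{AA2}, after which the dice example of Proposition~\ref{PA1aaa} finishes the argument, the residual work being the nine-pair count displayed above.
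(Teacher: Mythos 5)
Your proof is correct and follows essentially the same route as the paper: the paper also realises the Efron-dice sets $\{2,4,9\}$, $\{3,5,7\}$, $\{1,6,8\}$ as independent rank distributions over $k=9$ absolute ranks (its Figure of quantum/probabilistic elements), obtaining $R(\text{B,A})=R(\text{C,B})=R(\text{A,C})=1/9$ and hence the cycle A$\prec$B$\prec$C$\prec$A. Your explicit separable superposition states and the identification of the separable co-ranking with the classical group co-ranking merely spell out details the paper leaves implicit.
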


Indeed, Figure \ref{figQ} illustrates the case of three elements A, B and C
with independent probability distributions $P_{i}($A$),$ $P_{i}($B$)$ and $%
P_{i}($C$)$ over $k=9$ ranks $i=1,...,9$. This case follows the dice example
shown in Figure \ref{fig0}b. It can be seen that 
\begin{equation}
\text{A}\prec \text{B}\prec \text{C}\prec \text{A}
\end{equation}
since $R($B,A$)=R($C,B$)=R($A,C$)=1/9$.
\begin{figure}[H]
\centering
\includegraphics[width=14cm,page=11, clip ]{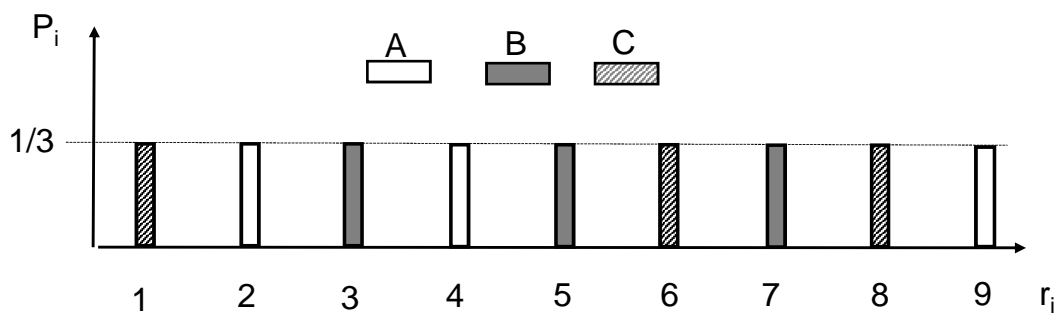}
\caption{Intransitive $\text{A}\prec\text{B}\prec\text{C}\prec\text{A}$ 
preference for quantum/probabilistic elements with absolute ranking $r_i$, where $P_i$ indicates probability of state $i$.   }
\label{figQ}

\end{figure}
\vspace{-12pt}
\conflictofinterests{Conflicts of Interest}

The author declares no conflict of interest.





\end{document}